\documentclass[11pt]{article} 

\usepackage[utf8]{inputenc} 
\usepackage[T1]{fontenc} 

\usepackage{amsfonts, amsmath, amsthm, amssymb} 
\allowdisplaybreaks[1] 

\usepackage{graphicx} 
\usepackage{booktabs} 
\usepackage{wrapfig} 
\usepackage[top=0.8in,bottom=0.8in,left=0.8in,right=0.8in]{geometry} 
\usepackage[tiny]{titlesec} 
\usepackage[font=footnotesize,labelfont=bf]{caption}
\usepackage{epsfig}
\usepackage{color,url, placeins, enumitem}
\usepackage{array,blindtext,makecell} 
\usepackage{hyperref}
\usepackage{subfiles} 
\usepackage{bbm}
\usepackage{wrapfig}
\usepackage{multirow} 

\usepackage{cite}
\usepackage{float}

\usepackage{bm} 

\usepackage{enumitem}
\setlist[enumerate,1]{label=(\alph*)}

\newcommand{\od}{\stackrel{\mbox {\tiny {def}}}{=}}

\def\RR{\mathbb{R}}

\def\RR{\mathbb{R}}

\def\ZZ{\mathbb{Z}}

\def\S{{\mathcal{S}}}

\def\RR{\mathbb{R}}

\def\det{\operatorname{det}}
\def\S{{\mathcal{S}}}

\def\max{\mathrm{max}}

\def\supp{\operatorname{supp}}

\def\od{\stackrel{\mathrm{def}}{=}}

\def\supp{\operatorname{supp}}

\def\sgn{\operatorname{sgn}}
\def\FP{\operatorname{FP}}
\def\min{\operatorname{min}}

\def\idx{\operatorname{idx}}

\newcommand{\rotvert}{\rotatebox[origin=c]{90}{$\vert$}}

\theoremstyle{plain}
\newtheorem{theorem}{Theorem}
\newtheorem{corollary}{Corollary}
\newtheorem{lemma}{Lemma}

\theoremstyle{definition}
\newtheorem{definition}{Definition}

\definecolor{cherry}{rgb}{0.9,.1,.2}
\definecolor{darkred}{rgb}{0.6,.1,.1}

\definecolor{turquoise}{rgb}{.188, .835, .784}

\allowdisplaybreaks

\begin{document}
	
	\begin{center}
		{\large \bf Fixed point compositionality via low-rank gluing rules in \\ inhibition-dominated threshold-linear networks}\\
		\vspace{5pt}
		Juliana Londono Alvarez\\
		{\small Division of Applied Mathematics, Brown University \\
			Robert J. and Nancy D. Carney Institute for Brain Science, Brown University}\\
			\vspace{10pt}
		\today
	\end{center}
	
	\begin{abstract}
		Brains routinely generate highly flexible and complex behaviors on a relatively stable structure and limited resources. A key mechanism underlying this ability is compositionality, which allows the brain to efficiently decompose complex tasks into simpler, reusable primitives. While network modularity has often been linked to compositionality in biological and artificial networks, a rigorous mathematical characterization of this relationship in nonlinear networks is still lacking. In this work, we formally investigate how structural modularity supports functional compositionality in inhibition-dominated threshold-linear networks (TLNs). We introduce a novel class of modular network assembly called low-rank gluings, where component subnetworks with arbitrary internal connectivity are connected via specific low-rank couplings. We prove that the global fixed points of these networks are constrained to be combinations of the local fixed points of their constituent modules. For a more structured subclass, called rank-1 gluings, we provide a complete characterization that determines which combinations of local fixed points yield global ones. We apply these results to graph-based networks, extending fixed point decomposition rules from combinatorial threshold-linear networks (CTLNs) to the more flexible family of generalized CTLNs (gCTLNs), thereby proving that these structural rules are more robust than initially posited. Finally, we demonstrate that these gluing rules provide a mathematically tractable recipe for engineering compositional dynamics, enabling the construction of networks with a combinatorially large repertoire of predictable attractors that can be understood from simpler component motifs, ranging from compositions of fixed points to compositional limit cycles.
	\end{abstract}	
	
	Keywords:  threshold-linear networks, network modularity, fixed point compositionality, attractor dynamics, network motifs, inhibition-dominated networks, low-rank connectivity
	
	\tableofcontents

	\section{Introduction}\label{sec:introduction}
	
	Understanding how network structure shapes function is a central question in the study of network dynamics across various disciplines \cite{strogatz2001,boccaletti2006}. In neuroscience, this question is of interest as the brain, despite having a relatively stable connectome, is capable of generating remarkably flexible and complex behaviors \cite{sporns2013a,park2013, bassett2017a,sporns2016,papo2025}. Recent work on connectome-constrained networks has shown that network modularity might be a key driver of this flexibility \cite{palma-espinosa2025}. Indeed, modularity is often linked to the ability of biological and artificial systems to decompose complex tasks into simpler primitives that can be flexibly recombined \cite{bakermans2025,reverberi2012}, a property commonly referred to as compositionality. Recent research on artificial neural networks suggests that multitasking networks naturally become modular during training, and that they reuse simple primitive dynamical motifs \cite{yang2019, driscoll2024, andreas2016,lepori2023}; similarly, experimental evidence suggests that the brain represents tasks compositionally by combining simpler neural representations \cite{willett2020,tafazoli2026,kymn2024,reverberi2012, bakermans2025,amematsro2025,he2026}.
	
	However, despite strong evidence for a relationship between modularity and compositionality, a rigorous mathematical characterization of this connection is still lacking. While the interplay between structural modularity and network dynamics has long been an active area of research, most work has focused on the inverse problem of detecting modules from network activity \cite{sporns2016, lambiotte2022, schaub2017}. Studies on how modular structure shapes dynamics have largely been restricted to networks with linear dynamics \cite{lambiotte2022, sporns2000}, or either focus on macro-level statistical measures \cite{beiran2021a,palma-espinosa2025}, or routing of signals \cite{clark2025}. Consequently, there is a theoretical gap in our understanding of how modularity shapes the dynamics of nonlinear recurrent networks to support functional compositionality.
	
	In this work, we formally investigate this issue in the context of inhibition-dominated threshold-linear networks (TLNs). These are rate models with a long history in computational neuroscience \cite{tsodyks1999, hahnloser2003a, noorman2024, itskov2011, xie2002, hahnloser2003, curto2013, curto2016}, mathematically tractable yet capable of generating a rich diversity of nonlinear dynamics \cite{morrison2024}. We specifically focus on modular networks where component subnetworks have arbitrary internal connectivity, but intercomponent interactions are restricted to a simple type of low-rank connectivity. We refer to this coupling as a \textit{low-rank gluing} of the component subnetworks. While low-rank networks have been widely studied \cite{mastrogiuseppe2018a, schuessler2020, beiran2021a,clark2025}, low-rank intercomponent connectivity has not yet been addressed as a mechanism for compositionality. Here, we prove that the fixed points of low-rank gluings are constrained to be compositions of the fixed points of the individual parts. Moreover, for a more structured subclass called \textit{rank-1 gluings}, we provide a complete characterization that determines which compositions of local fixed points yield global ones. 
	
	By applying our theory to graph-based networks, we observe that the emergent dynamic attractors of glued networks also behave as compositions of the attractors of the subnetworks. Thus, this family of networks efficiently exhibits a rich and predictable repertoire of attractor dynamics by reusing simple component motifs. Since the introduction of Hopfield networks \cite{hopfield1982}, computing with attractors has been central to neural modeling, yet most applications remain limited to fixed points \cite{hahnloser2003a, noorman2024,seung1996,samsonovich1997, battaglia1998, tsodyks1999, burak2009, itskov2011, gardner2022, mcnaughton2006a, cohen1983, xie2002}. Our results provide a principled recipe to build architectures with combinatorially many attractors that range from compositions of discrete fixed points to compositional limit cycles, expanding the modeler's toolbox, and suggesting a principle by which biological networks might generate a range of complex behaviors from a limited set of building blocks.
	
	\subsection{Fixed point compositionality}
	
	Threshold-linear networks (TLNs) are a class of neural network models that describe the firing rates $x_i(t)$ of $n$ recurrently connected neurons according to a system of ordinary differential equations (ODEs):
	\begin{equation}\label{eq:TLN-dynamics-gral}
		\tau_i \dfrac{dx_i}{dt} = -x_i + \left[\sum_{j=1}^n W_{ij}x_j+b_i \right]_+, \quad i = 1,\ldots,n,
	\end{equation}
	where $[\cdot]_+ = \max\{0,\cdot\}$. We focus on inhibition-dominated networks ($W_{ij}\leq 0$) with constant external input $b_i = \theta$, which we denote by $(W,\theta)$. 
	
	To understand fixed point compositionality in modular networks, let us distinguish between local and global dynamics. Consider a modular TLN $(W, \theta)$ consisting of component modules $(W_1, \theta), \dots, (W_N, \theta)$. We call \textit{local} fixed points (and dynamics) those of an isolated subnetwork $(W_i, \theta)$ in the absence of interactions with other components. On the other hand, we call \textit{global} fixed points (and dynamics) those of the whole coupled network $(W, \theta)$.  
	
	Fixed point compositionality refers to the property where global fixed points are obtained as combinations of local fixed points. More specifically, we track the fixed points of a network via their supports (the set of ``ON'' neurons) \cite{curto2019a}:
	\begin{equation}\label{eq:FP-TLN}
		\FP(W,\theta) \od \{\supp{x^*} \subseteq [n] ~|~  x^* \text{ is a fixed point of } (W,\theta) \},
	\end{equation} where  $\supp{x^*} \od \{i \in [n] \mid x^*_i>0\}$ and $[n] \od \{1,\dots,n\}$\footnote{Under mild non-degeneracy conditions on $W$ and $b$, a TLN $(W,b)$ can have at most one fixed point per support \cite{curto2019a}. The value of the fixed point can be easily recovered from its support $\sigma \in \FP(W,\theta)$ as $\boldsymbol{x}_\sigma^*=\left(I-W_\sigma\right)^{-1} \boldsymbol{\theta}_\sigma$, where $x_k^* = 0$ for all $k \notin \sigma$.}. In this context, compositionality occurs when a global support $\sigma \in \FP(W, \theta)$ is a union of local supports $\sigma_i \in \FP(W_i, \theta)$ for $i \in [N]$. 
	
	For example, the network in Figure \ref{fig:fixed_point_compositionality_and_attractors}D is composed of the three modules shown in Figure \ref{fig:fixed_point_compositionality_and_attractors}A-C, whose local fixed points are given by:
	\begin{align*}
		\FP(W_1, \theta) &= \{ {\{4\}}, {\{1, 2, 3\}}, \{1, 2, 3, 4\} \} \\
		\FP(W_2, \theta) &= \{ {\{8\}}, {\{5, 6, 7\}}, \{5, 6, 8\} \} \\
		\FP(W_3, \theta) &= \{ {\{11\}}, {\{9, 10\}}, \{9, 10, 11\} \}.
	\end{align*} \vspace{-10pt}
		\begin{figure}[!h]
		\begin{center}
			\includegraphics[width=\textwidth]{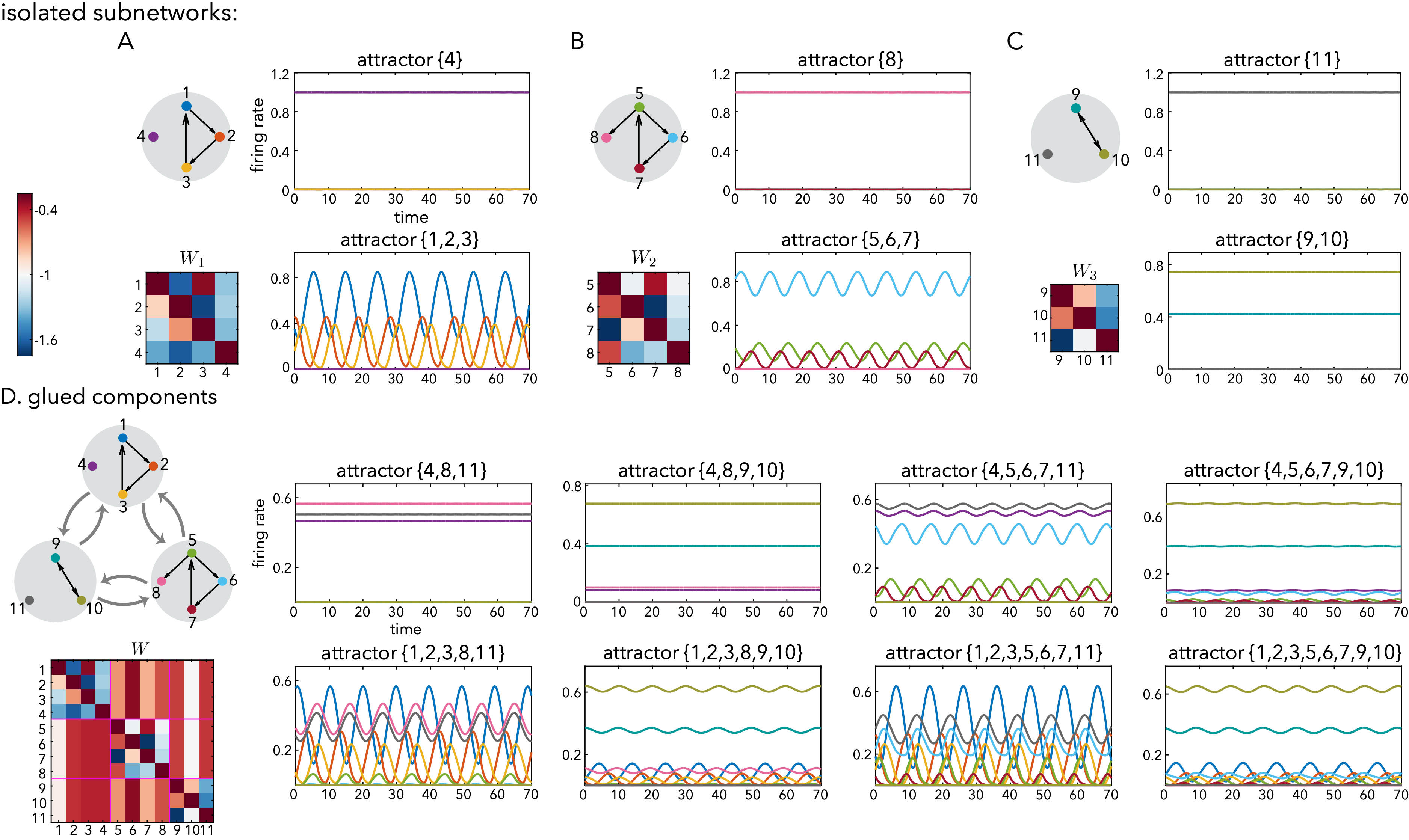}
		\end{center}
		\caption{\textbf{Fixed point compositionality and attractors.} Three component networks are connected via a low-rank gluing; local attractors form the building blocks of global attractors. The graphs display only the weak, less inhibitory edges $W_{ij} > -1$ to highlight the underlying functional structure. 
			(A-C) Isolated component networks, and their (local) attractors labeled by the highest-firing neurons.
			(D) A low-rank gluing of the components in panels A-C, and its global attractors labeled by the highest-firing neurons. Thick gray arrows between components represent all-to-all weak inhibitory connections between nodes in different components, with a single synaptic weight for all edges out of a given node, but different weights allowed across nodes. Each network on panels A-C has two attractors, combining to give the $2*2*2 = 8$ attractors of the network of panel D.
		}
		\label{fig:fixed_point_compositionality_and_attractors}
	\end{figure}	
	
 When networks are coupled via the low-rank intercomponent connectivity shown in the matrix of Figure \ref{fig:fixed_point_compositionality_and_attractors}D, where each node in a given component projects identically to every node outside its component, we find that all global supports $\sigma \in \FP(W, \theta)$ are formed as the union of exactly one local support per component. For instance, the global fixed point support $\{4, 8, 11\}$ is the union of the local fixed point supports $\{4\}$, $\{8\}$, and $\{11\}$. This yields a total of $3*3*3=27$ global fixed point supports:
 	\begin{multline*}
 	\FP(W, \theta) = \{ {\{4, 8, 11\}}, {\{4, 8, 9, 10\}}, {\{1, 2, 3, 8, 11\}}, {\{4, 5, 6, 7, 11\}}, {\{4, 5, 6, 8, 11\}}, \{4, 8, 9, 10, 11\}\\
 	\{1, 2, 3, 4, 8, 11\}, {\{1, 2, 3, 8, 9, 10\}}, {\{4, 5, 6, 7, 9, 10\}}, {\{4, 5, 6, 8, 9, 10\}}, \{1, 2, 3, 4, 8, 9, 10\}, \\
 	{\{1, 2, 3, 5, 6, 7, 11\}}, {\{1, 2, 3, 5, 6, 8, 11\}}, \{1, 2, 3, 8, 9, 10, 11\}, \{4, 5, 6, 7, 9, 10, 11\}, \{4, 5, 6, 8, 9, 10, 11\}, \\
 	\{1, 2, 3, 4, 5, 6, 7, 11\}, \{1, 2, 3, 4, 5, 6, 8, 11\}, \{1, 2, 3, 4, 8, 9, 10, 11\}, {\{1, 2, 3, 5, 6, 7, 9, 10\}}, \\
 	{\{1, 2, 3, 5, 6, 8, 9, 10\}}, \{1, 2, 3, 4, 5, 6, 7, 9, 10\}, \{1, 2, 3, 4, 5, 6, 8, 9, 10\}, \{1, 2, 3, 5, 6, 7, 9, 10, 11\}, \\
 	\{1, 2, 3, 5, 6, 8, 9, 10, 11\}, \{1, 2, 3, 4, 5, 6, 7, 9, 10, 11\}, \{1, 2, 3, 4, 5, 6, 8, 9, 10, 11\} \}.	
 \end{multline*}
 
 In this specific example, all fixed points obtained in this fashion yield global fixed points. In general, however, our work here shows that while all global fixed point supports of low-rank gluings are unions of local fixed point supports, it is not always the case that all unions of local fixed point supports survive to become global fixed point supports. 
 
 Remarkably, even though the theorems we present here concern exclusively the fixed points of the network, these decomposition properties often empirically carry over to the static and dynamic attractors of the glued network \cite{parmelee2022}. We note that, throughout this manuscript, we use terms like ``attractor'' or ``limit cycle'' informally to refer to computationally observed stable trajectories to which many initial conditions converge. Mathematically proving that these trajectories are attractors is challenging \cite{bel2021} and lies beyond the scope of this work. 
 
 This robust computational correspondence between fixed point structure and attractors allows us to observe compositionality in the global attractors, as demonstrated in Figure \ref{fig:fixed_point_compositionality_and_attractors}. By systematically varying initial conditions for the component networks of Figure \ref{fig:fixed_point_compositionality_and_attractors}A–C, we find that each subnetwork has two attractors, which we label by the fixed point support matching their highest-firing neurons. Correspondingly, for the glued network in Figure \ref{fig:fixed_point_compositionality_and_attractors}D, we similarly find eight attractors, all which appear to be assembled from local pieces, even when these local pieces are limit cycles and not simply stable fixed points. For example, global attractors whose labels contain the support $\{1, 2, 3\}$, corresponding to a local limit cycle (Fig. \ref{fig:fixed_point_compositionality_and_attractors}A), also exhibit limit cycle dynamics in the glued network (Fig. \ref{fig:fixed_point_compositionality_and_attractors}D, bottom row). 
 
 Note, however, that while the highest-firing neurons match across global attractors, neurons corresponding to local static attractors do not always remain static in the glued network, as exemplified by the global attractor $\{1,2,3,5,6,7,9,10\}$ (Fig. \ref{fig:fixed_point_compositionality_and_attractors}D, bottom right). Additionally, while the supports compose cleanly as unions, the numerical values of the local fixed points $x^*$ are not simply concatenated, as is clearly illustrated for the global attractor $\{4, 8, 11\}$ (Fig. \ref{fig:fixed_point_compositionality_and_attractors}D, top left).
	
	\subsection{Summary of main results}
	
	Our main results consist of four theorems about fixed point compositionality for various specific cases of low-rank gluings. The first two theorems address low-rank gluings in general TLNs, while the second two focus on graph-based TLNs. Although the theorems on low-rank gluings in general TLNs are needed to prove the graph-based ones, it was also necessary to develop further theory specific to graph-based networks to successfully apply the first pair of theorems to them. We begin with the more general case, low-rank gluings in TLNs.
	
	\paragraph{TLN low-rank gluings.} We use the subscript notation for matrices and vectors to denote restriction to a specific subset of indices, so for instance $\mathbbm{1}_{\tau}$ denotes a column vector of ones indexed by the elements of $\tau$.  With this notation, we formally define low-rank gluings as follows:
	\begin{definition}\label{def:low-rank-gluings}
		We say that a TLN $(W,\theta)$ on $n$ nodes is a \emph{low-rank gluing} of $(W_1,\theta), \dots, (W_N,\theta)$ if there is a partition $\tau_1, \dots, \tau_N$ of $[n]$ such that $W_{\tau_i,\tau_i} = W_i$ for each $i\in[N]$, and $W_{\tau_i,\tau_j} = \mathbbm{1}_{\tau_i} \bm{\gamma}^\intercal$ for $\bm{\gamma} \in \mathbb{R}^{|\tau_j|}$. If, furthermore, $W_{([n]\setminus \tau_i),\tau_i} = \mathbbm{1}_{[n]\setminus \tau_i} \bm{\gamma}^\intercal$  for $\bm{\gamma} \in \mathbb{R}^{|\tau_i|}$, for each $i\in[N]$, then we say that $(W,\theta)$ is a \emph{rank-1 gluing}. 
	\end{definition} 
	
	The constraint $W_{\tau_i,\tau_j} = \mathbbm{1}_{\tau_i} \bm{\gamma}^\intercal$ for $\bm{\gamma} \in \mathbb{R}^{|\tau_j|}$ implies that the pairwise off-diagonal blocks of $W$ have constant columns (Fig. \ref{fig:low-rank-gluings-comparison}A). This means that any given node projects identically to every node within a target component, though the strength of these projections may vary across different components. On the other hand, the additional constraint $W_{([n]\setminus \tau_i),\tau_i} = \mathbbm{1}_{[n]\setminus \tau_i} \bm{\gamma}^\intercal$  for $\bm{\gamma} \in \mathbb{R}^{|\tau_i|}$ means that a node must project identically to all nodes outside its own component. Consequently, each column of the combined off-diagonal blocks is constant (Fig. \ref{fig:low-rank-gluings-comparison}B). However, note that Definition \ref{def:low-rank-gluings} allows individual subnetworks to have arbitrary internal connectivity $W_i$. 
		\begin{figure}[!h]
		\begin{center}
			\includegraphics[width=\textwidth]{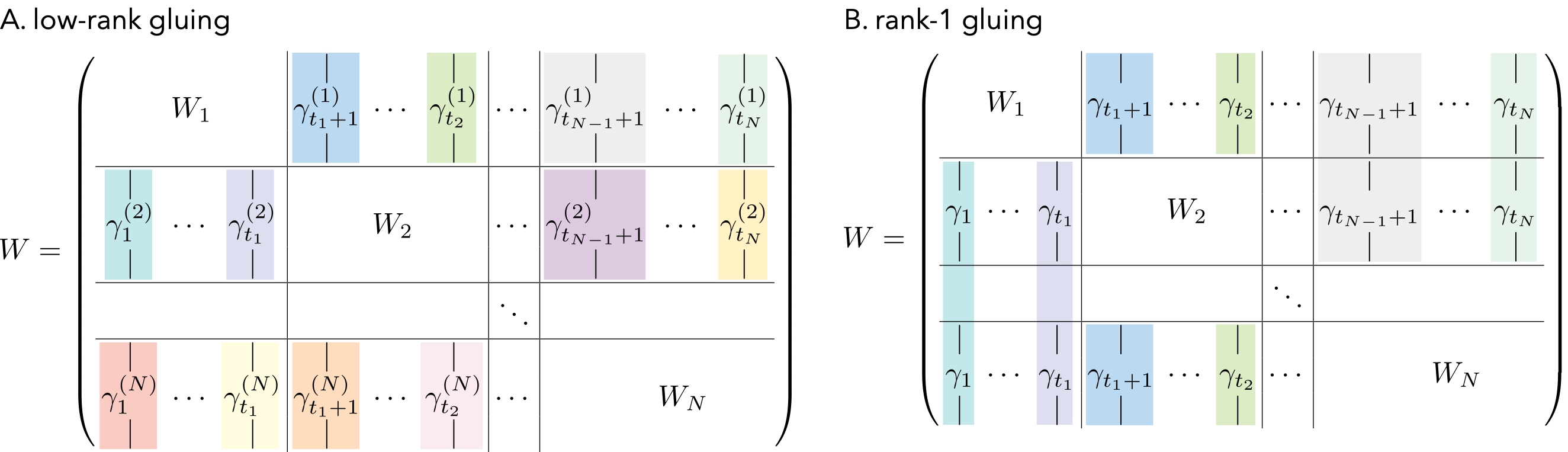}
		\end{center}
		\caption{\textbf{Low-rank gluings versus rank-1 gluings.} Elements with the same shading are constrained to be equal.
			(A) A connectivity matrix defining a low-rank gluing of component subnetworks $W_1, \dots, W_N$. Here, the scalar $\gamma_ j^{(i)}$ gives the strength of the synaptic connection from node $j$ to all nodes in $\tau_i$, corresponding to the subnetwork $W_i$.
			(B) A connectivity matrix defining a rank-1 gluing of component subnetworks $W_1, \dots, W_N$.}
		\label{fig:low-rank-gluings-comparison}
	\end{figure}
	
	Note that while requiring the off-diagonal blocks of $W$ to take the form $\mathbbm{1}_{\tau} \bm{\gamma}^\intercal$ makes them into rank-1 submatrices, this structure is significantly more restrictive than merely having the off-diagonal blocks have rank 1 (of the form $\bm{\beta} \bm{\gamma}^\intercal$). 
	
	Our first main result shows that low-rank gluings restrict the global fixed point supports to be unions of local fixed point supports:
	
	\begin{theorem}[fixed point compositionality]\label{thm:low-rank-gluing-menu} 	
		Let $(W,\theta)$ be a low-rank gluing of TLNs $(W_1,\theta), \dots, (W_N,\theta)$. Then, for each $\sigma \in \FP(W,\theta)$, $\sigma$ has the form: 
		$$\sigma = \bigcup_{\ell=1}^{N} \sigma_\ell, \text{ where } \sigma_\ell \in \FP(W_\ell,\theta) \cup \{\emptyset\}.$$ That is, the fixed point supports of $(W,\theta)$ are unions of the fixed point supports of component networks, at most one per component.
	\end{theorem}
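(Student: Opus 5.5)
The plan is to show directly that if $x^*$ is a fixed point of $(W,\theta)$ with support $\sigma$, then for each $\ell$ the piece $\sigma_\ell \od \sigma\cap\tau_\ell$ is either empty or lies in $\FP(W_\ell,\theta)$. Since the $\tau_\ell$ partition $[n]$, this gives $\sigma=\bigcup_\ell \sigma_\ell$ with at most one local support per component. Throughout I use the standing assumptions that $W_{ij}\le 0$ and $\theta>0$.

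\textbf{Step 1: the low-rank coupling acts as a uniform input.} Fix $\ell$ and a node $k\in\tau_\ell$. At the fixed point,
$$x^*_k=\Big[\sum_{j\in\tau_\ell}(W_\ell)_{kj}x^*_j+\sum_{m\neq\ell}\sum_{j\in\tau_m}W_{kj}x^*_j+\theta\Big]_+ .$$
By Definition~\ref{def:low-rank-gluings}, each block $W_{\tau_\ell,\tau_m}=\mathbbm{1}_{\tau_\ell}\bm{\gamma}^\intercal$ has constant columns. Hence $W_{kj}$ for $j\in\tau_m$ does not depend on the choice of $k\in\tau_\ell$. So the cross-component term equals a single scalar
$$c_\ell=\sum_{m\ne\ell}\bm{\gamma}^{(\ell,m)\intercal}x^*_{\tau_m},$$
which is the same for every $k\in\tau_\ell$. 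It follows that $x^*_{\tau_\ell}$ is a fixed point of the isolated TLN $(W_\ell,\theta+c_\ell)$, which still has uniform external input. This is the only place the gluing structure is used.

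\textbf{Step 2: the sign of the effective input.} Suppose $\theta+c_\ell\le 0$. Because $W_\ell\le 0$ entrywise and $x^*\ge 0$, every argument of $[\cdot]_+$ in component $\ell$ is $\le 0$. Hence $x^*_{\tau_\ell}=0$ and $\sigma_\ell=\emptyset$. Otherwise $\theta+c_\ell>0$, and we pass to Step 3.

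\textbf{Step 3: rescale to input $\theta$.} Set $y=\frac{\theta}{\theta+c_\ell}x^*_{\tau_\ell}$. By positive homogeneity of $[\cdot]_+$,
$$\big[W_\ell y+\theta\mathbbm{1}\big]_+=\tfrac{\theta}{\theta+c_\ell}\big[W_\ell x^*_{\tau_\ell}+(\theta+c_\ell)\mathbbm{1}\big]_+=\tfrac{\theta}{\theta+c_\ell}\,x^*_{\tau_\ell}=y .$$
So $y$ is a fixed point of $(W_\ell,\theta)$. Its support is $\supp y=\supp x^*_{\tau_\ell}=\sigma_\ell$, and therefore $\sigma_\ell\in\FP(W_\ell,\theta)$.

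\textbf{Conclusion and main point of care.} Combining the cases over all $\ell$ gives the claimed decomposition of $\sigma$. The only delicate step is Step 1: one must check that the constant-column structure, and not merely the rank-one structure of each block, is what makes $c_\ell$ independent of $k$. A general rank-one block $\bm{\beta}\bm{\gamma}^\intercal$ would produce a non-uniform input $\bm{\beta}\,(\bm{\gamma}^\intercal x^*)$, and then the homogeneity rescaling in Step 3 would no longer apply. The degenerate case $\theta+c_\ell\le 0$ in Step 2 is handled by inhibition-dominance.
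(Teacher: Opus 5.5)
Your proof is correct, and it takes a genuinely different route from the paper's.

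\textbf{What the paper does.} The paper never works with the fixed-point equation directly. It uses the determinant Lemma~\ref{lemma:det-linear-algebra} to factor the Cramer determinants as $s_i^\sigma=\frac{1}{\theta}s_i^{\sigma\setminus\tau_\ell}s_i^{\sigma_\ell}$ for $i\in\tau_\ell$ (Theorem~\ref{thm:simply-embedded-si-factorization}). So on $\tau_\ell$ the global sign pattern equals the local one up to a common factor (Lemma~\ref{lemma:simply-embedded-inheritance}). It then reads off $\sigma_\ell\in\FP(W_\ell,\theta)$ from the sign conditions (Theorem~\ref{thm:sgn-conditions}).

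\textbf{What you do.} You show that the other components act on $\tau_\ell$ as a uniform shift $c_\ell$ of the external input. Positive homogeneity of $[\cdot]_+$ then lets you rescale that input back to $\theta$.

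\textbf{Comparison.} Both proofs rest on the same two ingredients, constant columns and uniform input. In the paper these appear as the rank-one-ness of the block $\mathbbm{1}_{\sigma\cap\tau}[\,\theta~\bm{\gamma}^\intercal\,]$ needed to apply Lemma~\ref{lemma:det-linear-algebra}.

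Your argument is more elementary and slightly more general. It uses no determinants and no nondegeneracy; the paper needs nondegeneracy, for example to know that the common factor $\alpha$ in Lemma~\ref{lemma:simply-embedded-inheritance} is nonzero. It also gives quantitative information the paper's proof does not: $x^*_{\tau_\ell}=\frac{\theta+c_\ell}{\theta}\,y$ is a positive rescaling of a local fixed point. This explains the introduction's remark that local values are not simply concatenated.

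What the paper's route buys is one algebraic identity that is reused almost verbatim. It yields the converse-type Lemma~\ref{lemma:weak-reverse-menu}. In its rank-1 form (Lemma~\ref{lemma:full-factorization}) it yields the survive/die dichotomy of Theorem~\ref{thm:rank-1-gluings}, with all sign and index bookkeeping handled uniformly by the sign conditions.

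Your effective-input picture extends to that case too, though with more hand computation. Write $x^*_{\tau_m}=\lambda_m y^{(m)}$ as in your Step 3, $g_m=\bm{\gamma}^{(m)\intercal}x^*_{\tau_m}$ and $G=\sum_m g_m$. In a rank-1 gluing one gets $c_\ell=G-g_\ell$, and the consistency relation $\lambda_m\bigl(\theta+\bm{\gamma}^{(m)\intercal}y^{(m)}\bigr)=\theta+G$ with $\lambda_m>0$ forces the active components to all survive or all die. The existence direction, however, then requires solving for the scalings $\lambda_m$ explicitly.
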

	
	Observe that this theorem constrains the potential number of global fixed points of a TLN on $n$ nodes $(W,\theta)$ from the $2^n-1$ possible non-empty subsets of $n$ neurons to a maximum of $\prod_{i=1}^N (|\FP(W_i,\theta)|+1)-1$. For instance, the network in Figure \ref{fig:fixed_point_compositionality_and_attractors}D is a rank-1 gluing that has $27$ out of the possible $(3+1)*(3+1)*(3+1)-1  = 63$ fixed points allowed for a low-rank gluing defined in the components of  Figure \ref{fig:fixed_point_compositionality_and_attractors}A-C.  This is orders of magnitude smaller than the $2^{11}-1 = 2047$ combinations otherwise possible for a network on 11 nodes.
	
	To illustrate Theorem \ref{thm:low-rank-gluing-menu} more generally, consider the networks in Figure \ref{fig:menu-thm-low-rank-gluings-example}. Any low-rank gluing of the networks in panel A is guaranteed to produce a network whose set of fixed point supports only contains supports from the set $\mathcal{M}$ in panel B. As the examples in panels C and D show, this might range from as few as a single global fixed point (panel C, right) to the full set $\mathcal{M}$ (panel D, right). Which supports constitute global fixed points is determined by the specific intercomponent connections.	
	\begin{figure}[!h]
		\begin{center}
			\includegraphics[width=0.95\textwidth]{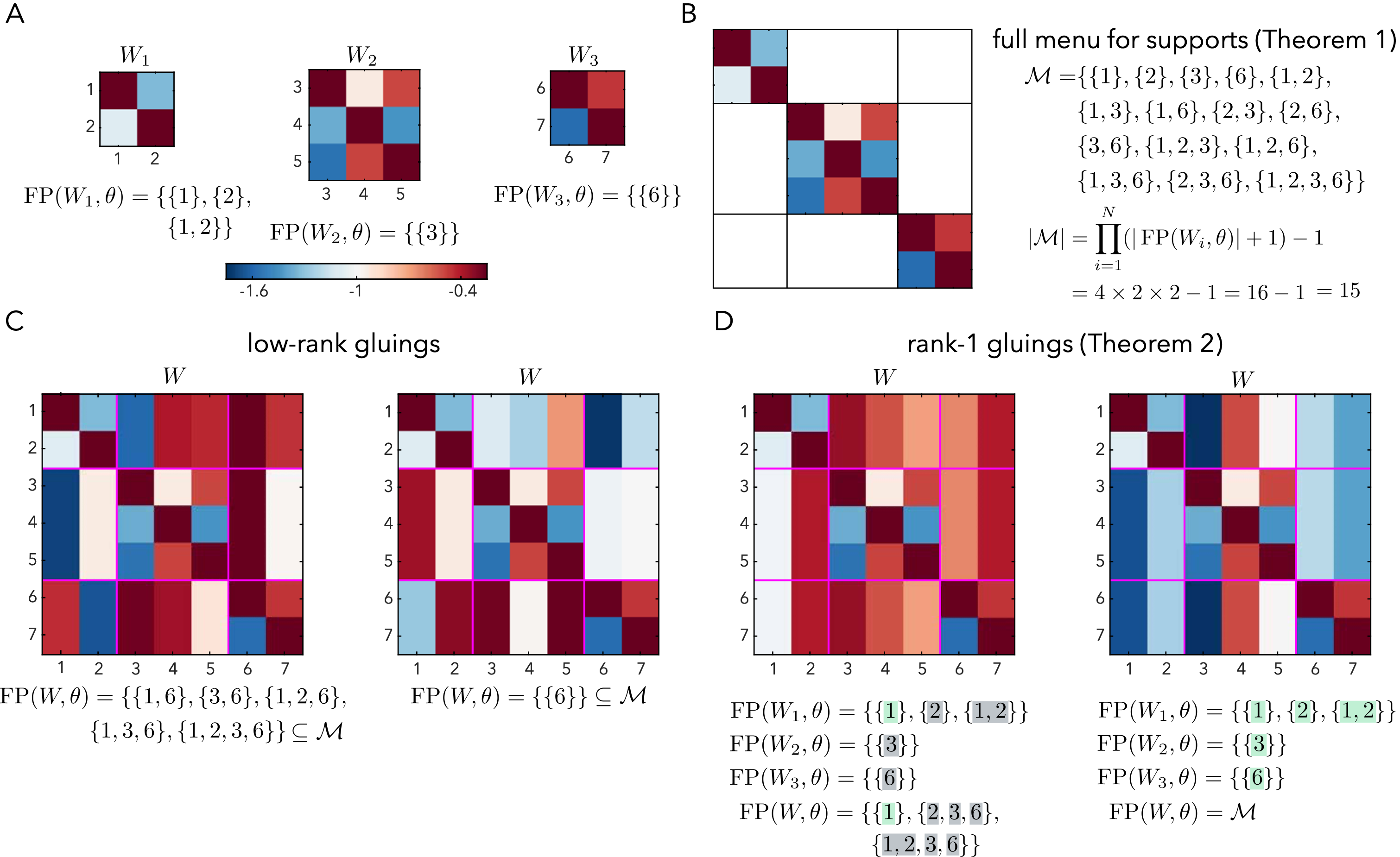}
		\end{center}
		\caption{\textbf{Illustration of Theorems \ref{thm:low-rank-gluing-menu} and \ref{thm:rank-1-gluings}.} 
			(A) Three component networks, and their fixed point supports $\FP(W_i,\theta)$.
			(B) Structure of a modular network assembled with the components of panel A, and the full set of possible supports when networks are low-rank gluings of those. 
			(C-D) Two example low-rank and rank-1 gluings of the components in panel A, and their fixed point supports $\FP(W,\theta)$. In panel C, Theorem \ref{thm:low-rank-gluing-menu} constrains $\FP(W,\theta)$ but it does not fully determine it. In panel D, Theorem \ref{thm:rank-1-gluings} fully determines $\FP(W,\theta)$. Local fixed points that survive in each rank-1 gluing are shaded green, while the ones that die are shaded gray.}
		\label{fig:menu-thm-low-rank-gluings-example}
	\end{figure}
	
	Theorem \ref{thm:low-rank-gluing-menu} provides a ``menu'' of possible supports, constraining all fixed points of glued networks $(W,\theta)$ to be unions of local fixed points. However, it does not guarantee which unions will actually be realized as global fixed points, as some of the unions might not survive. To achieve a full characterization, we require the additional structure of a rank-1 gluing:
	\begin{theorem}\label{thm:rank-1-gluings}
		Let $(W,\theta)$ be a rank-1 gluing of $(W_1,\theta), \dots, (W_N,\theta)$.  Then $\sigma \in \FP(W,\theta)$ if and only if $\sigma = \bigcup_{\ell=1}^{N} \sigma_\ell$ for $\sigma_\ell \in \FP(W_\ell,\theta) \cup \{\emptyset\}$, and either
		\begin{enumerate}
			\item none of the $\sigma_\ell$ are in $\FP(W,\theta) \cup \{ \emptyset\}$, or
			\item every $\sigma_\ell$ is in $\FP(W,\theta) \cup \{ \emptyset\}$.
		\end{enumerate}
		In other words, $\sigma \in \FP(W,\theta)$ if and only if $\sigma$ is either a union of dying fixed points, exactly one from every component, or it is a union of surviving fixed points $\sigma_i$, at most one per component.
	\end{theorem}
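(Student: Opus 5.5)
The approach is to reduce everything to a scalar bookkeeping of how much inhibition each component sends out, using the fact that in a rank-1 gluing every node outside a component receives exactly the same input from it. First I would recall the fixed point conditions: $\sigma\in\FP(W,\theta)$ iff $x^*_\sigma=(I-W_\sigma)^{-1}\theta\mathbbm{1}_\sigma>0$ and $\sum_{j\in\sigma}W_{kj}x^*_j+\theta<0$ for every $k\notin\sigma$. I assume the paper's standing non-degeneracy, so these inequalities are strict and $I-W_\sigma$ is invertible.

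\textbf{Step 1: reduction to a rescaled component.} Take a candidate $\sigma=\bigcup_\ell\sigma_\ell$; by Theorem~\ref{thm:low-rank-gluing-menu} only such $\sigma$ need be considered. Fix a component $\tau_\ell$ and a nonnegative vector $x$ supported on $\sigma$. By the rank-1 structure, every node $k\in\tau_\ell$ receives the same external drive
$$c_\ell=\sum_{j\in\sigma\setminus\tau_\ell}\gamma_j x_j .$$
Hence the restriction of the global equations to $\tau_\ell$ is exactly the TLN $(W_\ell,\theta+c_\ell)$. If $\theta+c_\ell>0$, then by homogeneity its fixed points are the local fixed points of $(W_\ell,\theta)$ scaled by $\lambda_\ell=(\theta+c_\ell)/\theta$. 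The on-conditions and the off-conditions on $\tau_\ell\setminus\sigma_\ell$ are then exactly the local ones, scaled. If instead $\theta+c_\ell<0$, the component must be silent. So for $\sigma_\ell\neq\emptyset$ write $x_{\tau_\ell}=\lambda_\ell x^{\ell}$, where $x^\ell$ is the local fixed point of $\sigma_\ell$. Define the outgoing weight of this local fixed point as
$$s_\ell=\sum_{j\in\sigma_\ell}\gamma_j x^\ell_j\le 0 .$$
Let $A=\{\ell:\sigma_\ell\neq\emptyset\}$ and $S=\sum_{m\in A}\lambda_m s_m$. Then $c_\ell=S-\lambda_\ell s_\ell$, and $\sigma\in\FP(W,\theta)$ iff the following three conditions hold:
\begin{itemize}
\item[(i)] $\lambda_\ell\theta=\theta+S-\lambda_\ell s_\ell$ with $\lambda_\ell>0$ for every $\ell\in A$;
\item[(ii)] $\theta+S<0$ whenever some component lies outside $A$;
\item[(iii)] $S$ is self-consistent.
\end{itemize}

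\textbf{Step 2: characterizing surviving local supports.} Applying Step 1 with $A=\{\ell\}$ gives $S=s_\ell$ and $\lambda_\ell=1$. The remaining condition is that all nodes of the other components stay off, i.e.\ $\theta+s_\ell<0$ (this uses $N\ge2$). So $\sigma_\ell\in\FP(W,\theta)$ iff $\theta+s_\ell<0$, i.e.\ the local fixed point is \emph{surviving}. Otherwise $\theta+s_\ell>0$, and the local fixed point is \emph{dying}.

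\textbf{Step 3: solving the scalar system.} Condition (i) gives $\lambda_\ell=(\theta+S)/(\theta+s_\ell)$, which must be positive. Hence all $\theta+s_\ell$ with $\ell\in A$ have the sign of $\theta+S$, which rules out mixing surviving and dying pieces. Put $r_m=s_m/(\theta+s_m)$. Substituting into $S=\sum_m\lambda_m s_m$ yields
$$\theta+S=\frac{\theta}{1-\sum_{m\in A}r_m}.$$
If all pieces are dying, then $s_m\le0<\theta+s_m$, so $r_m\le0$, and hence $\theta+S>0$. This is consistent with (i), but (ii) then forces $A=[N]$: exactly one dying support from every component. If all pieces are surviving, then $s_m<-\theta$, so $r_m>1$, hence $1-\sum r_m<0$ and $\theta+S<0$. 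This is consistent with both (i) and (ii), so any subset $A$ works. Conversely, in either admissible case these formulas give positive $\lambda_\ell$ that satisfy every condition. Reading the resulting $x$ back gives a global fixed point with support $\sigma$, and $\sigma$ then has one of the two forms in the statement.

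The main obstacle I expect is Step 1. Specifically, I must verify carefully that rescaling preserves the off-conditions inside each component and that, under non-degeneracy, no boundary equalities such as $\theta+s_\ell=0$ or $1=\sum r_m$ arise. I would handle this by invoking the genericity assumption the paper uses for uniqueness of fixed points per support.
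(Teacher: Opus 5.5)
Your argument is correct, but it takes a genuinely different route from the paper's.

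The paper never works with fixed point values. It factors the Cramer determinants as $s_i^\sigma=\theta^{1-|I|}\prod_{\ell\in I}s_i^{\sigma_\ell}$ via the overlapping-block Lemma~\ref{lemma:det-linear-algebra}. It then notes that $s_k^{\sigma_\ell}$ is constant over $k\notin\tau_\ell$, with sign $-\idx(\sigma_\ell)$ or $+\idx(\sigma_\ell)$ according as $\sigma_\ell$ survives or dies. The rest is sign bookkeeping with Theorem~\ref{thm:sgn-conditions}: a mixed union gives opposite signs inside $\sigma$, and an all-dying union that misses a component gives an outside node with the inside sign.

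You instead combine the uniform external drive into each block with positive homogeneity of $[\cdot]_+$, which forces every block to be a rescaled local fixed point. The rank-1 hypothesis then collapses everything to the single scalar equation $(\theta+S)(1-R)=\theta$, with $R=\sum_{m\in A}r_m$.

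The two routes are linked by Cramer's rule. For $k\notin\tau_\ell$, your $\theta+s_\ell$ equals $s_k^{\sigma_\ell}/\det(I-W_{\sigma_\ell})$ in the paper's notation. So your dichotomy $\theta+s_\ell\gtrless 0$ is exactly the paper's survive/die sign dichotomy. The nonvanishing you were worried about is precisely nondegeneracy condition (c). The boundary case $R=1$ cannot occur at all, since it would force $\theta=0$. Two small touch-ups: $\theta+c_\ell=0$ also forces silence, and in the survivor case you need $A\neq\emptyset$.

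What your route buys:
\begin{itemize}
\item It is more elementary, needing no determinant lemma and no sign-condition theorem.
\item It shows plainly where competitiveness enters: $\gamma_j\le 0$ gives $r_m\le 0$ for dying pieces.
\item Step 1 alone already reproves Theorem~\ref{thm:low-rank-gluing-menu} for arbitrary low-rank gluings.
\item It is constructive. It gives the glued fixed point explicitly as $x^*_{\tau_\ell}=\lambda_\ell x^\ell$ with $\lambda_\ell=\theta/\bigl((1-R)(\theta+s_\ell)\bigr)$, which quantifies the introduction's remark that local values are rescaled rather than concatenated. It also gives an explicit survival test, $\theta+\sum_{j\in\sigma_\ell}\gamma_j x^\ell_j<0$.
\end{itemize}

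What the paper's route buys: it stays inside the sign-condition framework it reuses elsewhere, and it yields index information for free. This is $\idx(\sigma)=(-1)^{|I|-1}\prod_{\ell\in I}\idx(\sigma_\ell)$ for unions of surviving pieces and $\idx(\sigma)=\prod_{\ell}\idx(\sigma_\ell)$ for unions of dying pieces. Your scaling argument does not produce this directly.
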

	
	This theorem explicitly identifies which unions of local supports constitute global fixed points based on whether local fixed points ``die'' -- $\sigma_\ell \notin \FP(W,\theta)$, or ``survive'' -- $\sigma_\ell \in \FP(W,\theta)$ in the glued network. For example, in the left matrix of Figure \ref{fig:menu-thm-low-rank-gluings-example}D, only fixed point $\{1\}$ survives (green), while others die (gray). Thus, the global fixed points are given by unions of surviving fixed points, at most one per component ($\{1\}$) or unions of dying fixed points, exactly one from every component ($\{2,3,6\}, \{1,2,3,6\}$). In the second matrix of Figure \ref{fig:menu-thm-low-rank-gluings-example}D, all of the component fixed point supports survive, and thus the fixed points of the glued network are all unions of surviving fixed points, ultimately forming the whole set $\mathcal{M}$. 
	
	While Theorem \ref{thm:rank-1-gluings} requires knowing if a local fixed point dies or survives in the glued network, survival only depends on the outgoing connections of the component subnetwork.
	
	The proofs of these results rely on a simple (and to our knowledge novel) linear algebra fact
	regarding determinant factorization in matrices with overlapping off-diagonal rank-1 blocks:
	\begin{lemma}\label{lemma:det-linear-algebra}
		Let 
		\begin{figure}[H]
			\centering
			\includegraphics[width=0.15\textwidth]{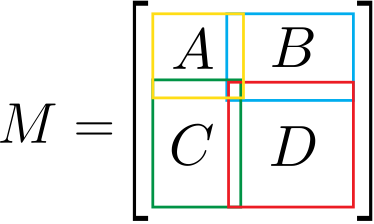}
		\end{figure}
		\noindent be an $(m+n-1)\times(m+n-1)$ matrix, consisting of the blocks $A$ ($m\times m$), $B$ ($m\times n$), $C$ ($n\times m$) and $D$ ($n\times n$),	where adjacent blocks overlap in one row/column. Note that $a_{m, m}=b_{m,1}=c_{1,m}=d_{1,1}$ is the single entry where all four matrices overlap. If $B$ or $C$ is rank 1 and $a_{m,m} \neq 0$, then $$\det M=\frac{1}{a_{m,m}} \det A \det D.$$
	\end{lemma}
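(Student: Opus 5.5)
The plan is to reduce to a block-triangular determinant using column operations driven by the rank-1 structure of $B$. The case where $C$ is rank 1 then follows by applying the same argument to $M^\intercal$. This works because transposition swaps the roles of $B$ and $C$, preserves $A$, $D$ and $a_{m,m}$, and leaves all determinants unchanged.

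Index the rows and columns of $M$ by $1,\dots,m+n-1$. Then $A=M_{[1,m],[1,m]}$, $D=M_{[m,m+n-1],[m,m+n-1]}$, $B=M_{[1,m],[m,m+n-1]}$ and $C=M_{[m,m+n-1],[1,m]}$.

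Assume $B$ has rank 1. The first column of $B$ is column $m$ of $M$ restricted to rows $1,\dots,m$. Its last entry is $a_{m,m}\neq 0$, so this column is nonzero. Since $B$ has rank 1, every column of $B$ is a scalar multiple of this first column. Hence $B=\bm{b}\,\bm{r}^\intercal$, where $\bm{b}$ is the first column of $B$, $\bm{r}\in\mathbb{R}^n$ and $r_1=1$.

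For each $k=1,\dots,n-1$, subtract $r_{k+1}$ times column $m$ from column $m+k$ of $M$. This does not change the determinant. After these operations, the entries of columns $m+1,\dots,m+n-1$ in rows $1,\dots,m$ vanish, and the first $m$ columns are untouched. The resulting matrix $M'$ is therefore block lower triangular with respect to the split $\{1,\dots,m\}\cup\{m+1,\dots,m+n-1\}$. Its top-left block is exactly $A$, so
\[
\det M=\det A\cdot\det S,
\]
where $S$ is the block of $M'$ on rows and columns $m+1,\dots,m+n-1$.

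The remaining step, which is the only point needing care, is to relate $S$ to $D$. The column operations above only combine columns with indices $\ge m$, and within rows $m,\dots,m+n-1$ they act as column operations on $D$ itself: from column $k+1$ of $D$ we subtract $r_{k+1}$ times column $1$ of $D$. The result is a matrix $D'$ with $\det D'=\det D$. Row $1$ of $D'$ is row $m$ of $M'$. In that row the entries in columns $m+1,\dots$ were zeroed, and the entry in column $m$ is $d_{1,1}=a_{m,m}$. So $D'=\begin{pmatrix} a_{m,m} & 0\\ \ast & S\end{pmatrix}$, which gives
\[
\det D=a_{m,m}\det S.
\]
Dividing by $a_{m,m}\neq0$ and substituting into the previous display yields $\det M=\frac{1}{a_{m,m}}\det A\det D$.

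The main obstacle is bookkeeping: one must check that the same column operations simultaneously clear the off-diagonal block of $M$ and act as determinant-preserving operations on $D$. This holds because both effects come from the single normalization $r_1=1$, which relies on the hypothesis $a_{m,m}\neq 0$.
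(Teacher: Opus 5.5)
Your proposal is correct and follows essentially the same route as the paper: column operations driven by the rank-1 structure of $B$ (anchored by $b_{m,1}=a_{m,m}\neq 0$) produce a block lower-triangular matrix with top-left block $A$, and the same operations applied to $D$ give $\det D=a_{m,m}\det S$. The paper reaches that last identity by padding and undoing the operations rather than applying them to $D$ directly, and your transposition argument covers the case where $C$ is rank 1, which the paper's proof leaves implicit.
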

	
	We also use this lemma in key steps of the proofs of Theorems \ref{thm:low-rank-gluing-menu} and \ref{thm:rank-1-gluings} below. Specifically, the rank-1 structure of the off-diagonal blocks in low-rank gluings, combined with the uniform external input $\theta$, allows us to apply this lemma to factor the survival conditions for individual component fixed points. 
	
	By applying Theorems \ref{thm:low-rank-gluing-menu} and \ref{thm:rank-1-gluings} to graph-based TLNs, we arrive at the second pair of main results of this work.

	\paragraph{Low-rank gluings in graph-based TLNs.}
	
	 The theorems presented here extend previous findings for combinatorial threshold-linear networks (CTLNs) \cite{curto2019a,parmelee2022a} to \textit{generalized} combinatorial threshold-linear networks (gCTLNs) \cite{curto2025a}. These are inhibition-dominated TLNs whose connectivity matrix is defined by a directed graph $G$ and neuron-specific parameters $\varepsilon_j>0$ and $\delta_j>0$:	
	\begin{equation*}
		W_{ij} = \begin{cases}
			0 & \text{if } i = j, \\
			-1 + \varepsilon_j & \text{if } j \rightarrow i \text{ in } G, \\
			-1 - \delta_j & \text{if } j \not\rightarrow i \text{ in } G.
		\end{cases}
	\end{equation*} 
	
	When these parameters are the same for all neurons ($\varepsilon_j = \varepsilon, \delta_j = \delta$ for all $i$), we obtain a CTLN \cite{curto2023}. We denote the set of fixed points of CTLNs and gCTLNs  as $\FP(G)$ and $\FP(G,\bm{\varepsilon},\bm{\delta})$, respectively.
	
	The fixed point structure of several modular CTLN architectures has previously been characterized \cite{curto2019a,parmelee2022a}. Figure \ref{fig:CTLN_results_summary} illustrates three of those: the cyclic, clique, and disjoint unions. For these cases, the global fixed point supports were previously known to be unions of local fixed point supports, as specified by Table \ref{tab:CTLN_summary}. 
	\begin{figure}[!h]
		\begin{center}
			\includegraphics[width=0.85\textwidth]{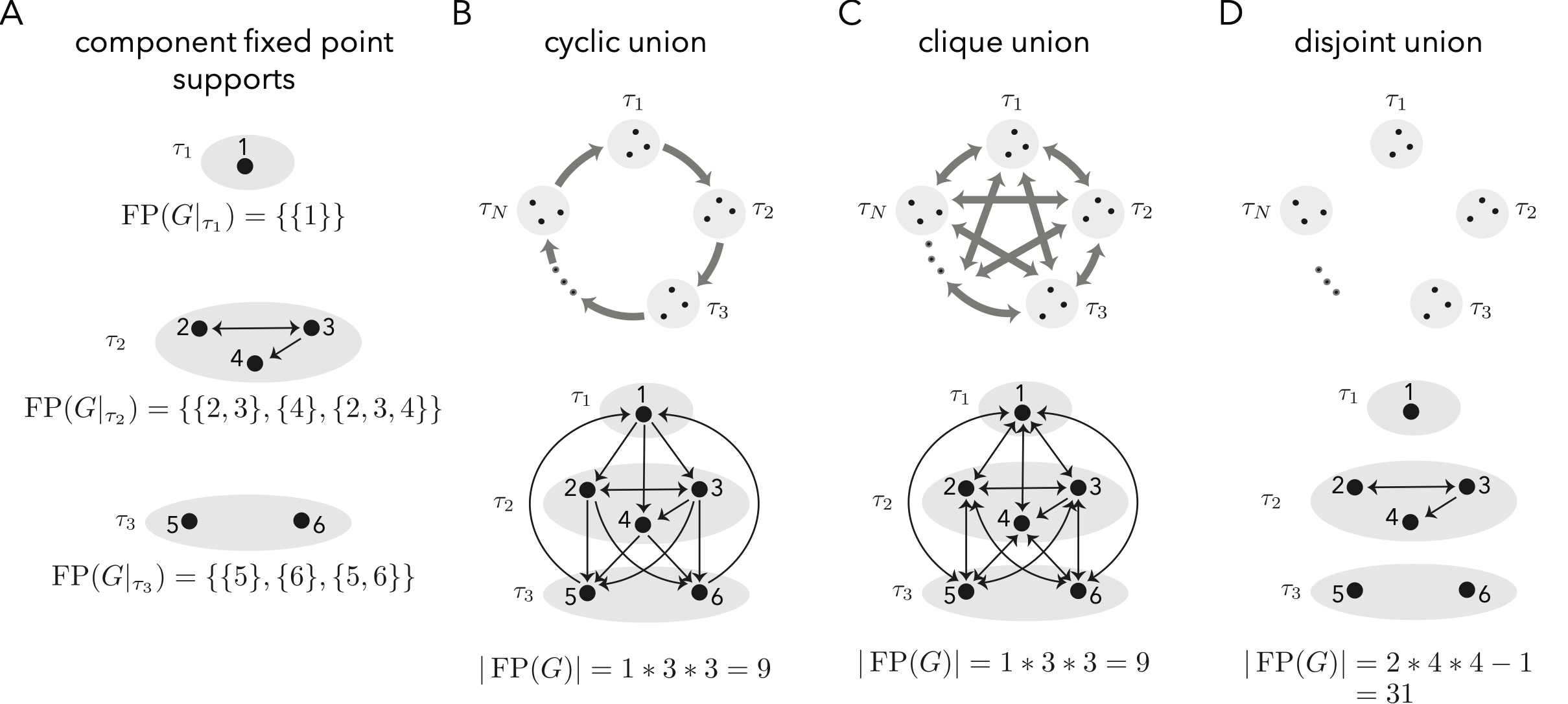}
		\end{center}
		\caption{\textbf{Illustration of CTLN theorems for three special low-rank gluings.}  
			Three specific methods for constructing glued networks from component subgraphs. 
			(A) The fixed point supports for the component subgraphs of the example in the bottom row of panels B-D.
			(B-D) A cyclic union, clique union and disjoint union, with examples. Graphs are divided into components $\tau_1, \dots, \tau_N$ with arbitrary internal connectivity, with interactions prescribed by the thick arrows indicating that every node in the source component projects to every node in the target component.
			}
		\label{fig:CTLN_results_summary}
	\end{figure}
	
	For instance, in the cyclic union of Figure \ref{fig:CTLN_results_summary}B, each global fixed point is formed as a union of exactly one support from each component (Fig. \ref{fig:CTLN_results_summary}A), resulting in $|\FP(G|_{\tau_1})| * |\FP(G|_{\tau_2})| * |\FP(G|_{\tau_3})| = 1*3 * 3 = 9$ global fixed point supports. Analogous results and counts hold for clique and disjoint unions, illustrated in Figure \ref{fig:CTLN_results_summary}C-D.
	\begin{table}[h]
		\centering
		\renewcommand{\arraystretch}{1.6}
		\begin{tabular}{|l|c|c|}
			\hline
			architecture & \textbf{fixed point decomposition theorem} & $\left|\FP(G)\right| =$ \# of fixed points \\
			\hline
			cyclic union 
			& $\FP(G) = \{\cup_i \sigma_i \mid \sigma_i \in \FP(G|_{\tau_i}) \ \forall i \in [N]\}$ 
			& $\prod_{i=1}^N |\FP(G|_{\tau_i})|$ \\
			\hline
			clique union 
			& $\FP(G) = \{\cup_i \sigma_i \mid \sigma_i \in \FP(G|_{\tau_i}) \ \forall i \in [N]\}$ 
			& $\prod_{i=1}^N |\FP(G|_{\tau_i})|$ \\
			\hline
			disjoint union 
			& $\FP(G) = \{\cup_i \sigma_i \mid \sigma_i \in\FP(G|_{\tau_i}) \cup \{\emptyset\} \ \forall i\} \setminus \{\emptyset\}$ 
			& $\prod_{i=1}^N \left( |\FP(G|_{\tau_i})| + 1 \right) - 1$ \\
			\hline
		\end{tabular}
		\caption{Summary of theorems and fixed point counts for the CTLN glued architectures of Figure \ref{fig:CTLN_results_summary}. Adapted from  \cite{curto2023}.}
		\label{tab:CTLN_summary}
	\end{table}
	
	By applying our new Theorems \ref{thm:low-rank-gluing-menu} and \ref{thm:rank-1-gluings}, we can now prove that these CTLN ``gluing rules'' extend to the more flexible gCTLN family, which allows each neuron to have its own $\varepsilon,\delta$ parameters:
	\begin{theorem}[generalized cyclic union]\label{thm:generalized-cyclic-union}
		Let $G$ be a cyclic union of components $\tau_1, \dots, \tau_N$, and let $(G, \bm{\varepsilon}, \bm{\delta})$ be a gCTLN. Then
		$$ \sigma \in \FP(G, \bm{\varepsilon}, \bm{\delta}) \quad \Leftrightarrow \quad \sigma = \bigcup_{\ell=1}^{N} \sigma_\ell, \text{ with } \sigma_\ell \in \FP(G|_{\tau_\ell}, \bm{\varepsilon}_{\tau_\ell}, \bm{\delta}_{\tau_\ell}).$$
	\end{theorem}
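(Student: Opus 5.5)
The plan is to treat the cyclic union as a low-rank gluing, use Theorem \ref{thm:low-rank-gluing-menu} for the coarse shape of supports, and then reduce everything to a small scalar linear system, one equation per component.

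\textbf{Setup.} If $G$ is a cyclic union of $\tau_1,\dots,\tau_N$ (indices mod $N$), then for $j\in\tau_m$ and $i\notin\tau_m$ we have
$$W_{ij}=-1+\varepsilon_j \text{ if } i\in\tau_{m+1}, \qquad W_{ij}=-1-\delta_j \text{ otherwise}.$$
Every off-diagonal block $W_{\tau_\ell,\tau_m}$ therefore has constant columns, so $(W,\theta)$ is a low-rank gluing of the local gCTLNs $(G|_{\tau_\ell},\bm{\varepsilon}_{\tau_\ell},\bm{\delta}_{\tau_\ell})$. It is not a rank-1 gluing when $N\ge 3$, so Theorem \ref{thm:rank-1-gluings} does not apply directly. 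Theorem \ref{thm:low-rank-gluing-menu} already gives one containment: every $\sigma\in\FP(G,\bm{\varepsilon},\bm{\delta})$ is a union $\bigcup_\ell\sigma_\ell$ with $\sigma_\ell\in\FP(G|_{\tau_\ell})\cup\{\emptyset\}$. Two things remain: (i) no $\sigma_\ell$ can be empty, and (ii) every such union with all $\sigma_\ell$ nonempty is a global fixed point support.

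\textbf{Reduction.} Since all nodes of $\tau_\ell$ receive identical input from outside $\tau_\ell$, a candidate fixed point $x$ with support $\sigma$ sees, in component $\ell$, a common scalar input
$$E_\ell=\sum_{m\neq \ell}\sum_{j\in\sigma_m}W_{\tau_\ell j}\,x_j .$$
Restricted to $\tau_\ell$, the fixed point equations are exactly those of $(W_\ell,\theta+E_\ell)$. If $\theta+E_\ell>0$, the TLN equations are positively homogeneous in the input, so $x|_{\tau_\ell}=c_\ell\,x^{(\ell)}$, where $x^{(\ell)}$ is the local fixed point of $(W_\ell,\theta)$ on $\sigma_\ell$ and $c_\ell=(\theta+E_\ell)/\theta$. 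All local on- and off-conditions are preserved under this scaling. If instead $\theta+E_\ell\le 0$, then $x|_{\tau_\ell}=0$.

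Writing $S_m=\sum_{j\in\sigma_m}x^{(m)}_j$, together with the weighted sums $P_m=\sum_{j\in\sigma_m}\varepsilon_jx^{(m)}_j$ and $Q_m=\sum_{j\in\sigma_m}\delta_jx^{(m)}_j$, the consistency condition becomes the $N\times N$ linear system
$$c_\ell\,\theta=\theta-\sum_{m\neq\ell}c_m S_m+c_{\ell-1}P_{\ell-1}-\sum_{m\neq\ell,\ell-1}c_mQ_m .$$
Thus (ii) amounts to showing this system has a solution with all $c_\ell>0$, and (i) amounts to showing that no solution exists when some $\sigma_\ell=\emptyset$.

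\textbf{Part (i).} Suppose some $\sigma_\ell$ is empty. Choose $\ell$ so that $\sigma_\ell=\emptyset$ but $\sigma_{\ell-1}\neq\emptyset$. Then every $k\in\tau_\ell$ must satisfy $\theta+E_\ell\le0$. Comparing $E_\ell$ with $E_{\ell-1}$: nodes in $\tau_\ell$ receive $-1+\varepsilon$ from $\sigma_{\ell-1}$ instead of nothing, and $-1-\delta$ from $\sigma_{\ell-2}$ instead of $-1+\varepsilon$. This comparison should yield a contradiction with $\theta+E_{\ell-1}>0$ once the local identity $\theta=(I-W_{\ell-1})x^{(\ell-1)}$ on $\sigma_{\ell-1}$ is used to bound $S_{\ell-1}$. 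If this chain does not close directly, I would instead use the index/determinant sign conditions for $\sigma\cup\{k\}$, factoring $\det(I-W_{\sigma\cup k})$ with Lemma \ref{lemma:det-linear-algebra} along the rank-1 off-diagonal blocks.

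\textbf{Part (ii) and the main obstacle.} For (ii), I would solve the system explicitly. Its matrix is a cyclic perturbation of a rank-1 matrix (all-ones times $S$), so I expect to compute it via Sherman–Morrison or a circulant-type elimination and then verify positivity of each $c_\ell$ using $\varepsilon_j,\delta_j>0$. The main obstacle is this positivity step for general $N$: the "$+\varepsilon$" feed-forward term breaks the symmetry that makes the clique and disjoint cases trivial. There I expect to need the inhibition-dominated bounds relating $S_m$, $P_m$ and $Q_m$, in particular that $S_m\theta^{-1}$ is controlled by the local fixed point equations.
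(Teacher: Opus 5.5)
Your reduction to one scalar per component is sound. However, neither of the two steps that carry the content of the theorem is actually established, and Part (i) as you set it up does not work.

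\textbf{Part (i).} Take $i\in\sigma_{\ell-1}$ and $k\in\tau_\ell$ with $\sigma_\ell=\emptyset$, and write $y_p=\sum_q W_{pq}x_q+\theta$, so that $y_i=x_i$. The comparison you describe gives only
\[
y_k \;\ge\; \varepsilon_i x_i \;-\sum_{j\in\sigma_{\ell-2}}(\varepsilon_j+\delta_j)\,x_j .
\]
The reason is the second difference you noted: nodes of $\tau_{\ell-1}$ get the weak weight $-1+\varepsilon_j$ from $\sigma_{\ell-2}$, while nodes of $\tau_\ell$ get $-1-\delta_j$. For $N\ge 3$ and $\sigma_{\ell-2}\neq\emptyset$, nothing makes the right-hand side positive, and bounding $S_{\ell-1}$ does not help. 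The obstruction is one step upstream, not at $\tau_\ell$. Pick $a$ with $\sigma_{a-1}=\emptyset\neq\sigma_a$, and compare $j\in\sigma_a$ with $k\in\tau_{a+1}$. Since $\tau_a$ then receives no weak input from active nodes outside $\tau_a$, you get $y_k\ge \varepsilon_j x_j+(1+\delta_k)x_k>0$. So $k$ is on, and $x_k=y_k$ gives $0\ge \varepsilon_jx_j+\delta_kx_k>0$, a contradiction. This is the domination argument the paper packages as Lemmas \ref{lemma:reduced-linear-chain} and \ref{lemma:non-empty-intersections}. With $\tau_\ell$ silent, $G|_{[n]\setminus\tau_\ell}$ is a composite linear chain, so $\sigma\subseteq\tau_{\ell-1}$, and only then does your comparison with $\tau_\ell$ close.

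\textbf{Part (ii).} Positivity of the solution is the entire content of the ``$\Leftarrow$'' direction, and the proposal leaves it as an expectation. The paper never solves the system. It inducts on $n$ with $N$ fixed, with the $n$-cycle as base case, and uses Lemma \ref{lemma:weak-fixed-point-partition} to reduce to showing that $\sigma$ is permitted. For $\sigma\subsetneq[n]$, it applies the induction hypothesis to the smaller cyclic union $G|_\sigma$ with components $\sigma_\ell$. For $\sigma=[n]$, it uses parity (Theorem \ref{thm:parity}): every other union is already a fixed point, and a product of odd counts minus one is even, so $[n]$ must be a fixed point too.

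Your route can also be closed, but it needs a real argument. Set $a_\ell=S_\ell+Q_\ell-\theta$ and $b_\ell=P_\ell+Q_\ell$. Then your system is equivalent to $a_\ell c_\ell+b_{\ell-1}c_{\ell-1}=R$ for all $\ell$, where $R=\sum_m(S_m+Q_m)c_m-\theta$. The local fixed point equations on $\sigma_\ell$, together with $\varepsilon_j<1$, give $0<a_\ell<b_\ell<S_\ell+Q_\ell$. Unrolling the cyclic recursion gives $c_\ell=Rv_\ell$. Here $v_\ell$ is an alternating sum of positive terms with consecutive ratios $b_m/a_m>1$, divided by $1-(-1)^N\prod_m b_m/a_m$. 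Numerator and denominator both have sign $(-1)^{N-1}$, so $v_\ell>0$. Then $\sum_m(S_m+Q_m)v_m> a_\ell v_\ell+b_{\ell-1}v_{\ell-1}=1$ forces $R>0$. That route yields the fixed point values explicitly, whereas the paper's parity argument avoids all computation.
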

	
	\begin{theorem}[generalized clique and disjoint union]\label{thm:generalized-disjoint-clique-union}
		Let $G$ be a graph with a partition of its vertices $\{\tau_1 \mid \cdots \mid \tau_N\}$, and let $(G, \bm{\varepsilon}, \bm{\delta})$ define a gCTLN.
		\begin{enumerate}
			\item If $G$ is a clique union of $G|_{\tau_1}, \ldots, G|_{\tau_N}$, then
			$$ \sigma \in \FP(G, \bm{\varepsilon}, \bm{\delta}) \quad \Leftrightarrow \quad \sigma = \bigcup_{\ell=1}^{N} \sigma_\ell, \text{ with } \sigma_\ell \in \FP(G|_{\tau_\ell}, \bm{\varepsilon}_{\tau_\ell}, \bm{\delta}_{\tau_\ell}).$$
			\item If $G$ is a disjoint union of $G|_{\tau_1}, \ldots, G|_{\tau_N}$, then 
			$$\sigma \in \FP(G, \bm{\varepsilon}, \bm{\delta}) \quad \Leftrightarrow \quad \sigma = \bigcup_{\ell=1}^{N} \sigma_\ell, \text{ with }  \sigma_\ell \in \FP(G|_{\tau_\ell}, \bm{\varepsilon}_{\tau_\ell}, \bm{\delta}_{\tau_\ell}) \cup \{\emptyset\}$$
		\end{enumerate}
	\end{theorem}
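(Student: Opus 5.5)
We prove Theorem \ref{thm:generalized-disjoint-clique-union} by checking that clique unions and disjoint unions give gCTLNs that are rank-1 gluings of their component gCTLNs. We then apply Theorem \ref{thm:rank-1-gluings}, which reduces everything to deciding which local fixed points survive in the glued network.

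\emph{Step 1: rank-1 structure.} In a clique union, every $j\in\tau_i$ sends an edge to every node outside $\tau_i$, so $W_{kj}=-1+\varepsilon_j$ for all $k\notin\tau_i$. In a disjoint union there are no such edges, so $W_{kj}=-1-\delta_j$. In both cases the column of $W$ indexed by $j$, restricted to $[n]\setminus\tau_i$, is constant. This is exactly $W_{([n]\setminus\tau_i),\tau_i}=\mathbbm{1}\bm{\gamma}^\intercal$ with $\gamma_j=-1+\varepsilon_j$ (respectively $-1-\delta_j$). The diagonal blocks are the gCTLNs $(G|_{\tau_\ell},\bm{\varepsilon}_{\tau_\ell},\bm{\delta}_{\tau_\ell})$, so Theorem \ref{thm:rank-1-gluings} applies.

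\emph{Step 2: survival of local supports.} Fix $\sigma_\ell\in\FP(G|_{\tau_\ell},\bm{\varepsilon}_{\tau_\ell},\bm{\delta}_{\tau_\ell})$ with local fixed point $x^*$ supported on $\sigma_\ell$. Because $x^*$ is a local fixed point, the on-neuron equations and the off-conditions for nodes inside $\tau_\ell$ already hold. So $\sigma_\ell\in\FP(W,\theta)$ exactly when every $k\notin\tau_\ell$ satisfies $\sum_{j\in\sigma_\ell}W_{kj}x^*_j+\theta\le 0$ (strict inequality, or $\neq 0$, under the usual nondegeneracy conventions). By Step 1 this quantity equals $\theta+\sum_{j\in\sigma_\ell}\gamma_j x^*_j$ for every such $k$. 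To evaluate it, we use the fixed point equation at any $i\in\sigma_\ell$:
\[
x^*_i=\theta+\sum_{j\in\sigma_\ell}W_{ij}x^*_j .
\]
Here $W_{ij}\in\{-1+\varepsilon_j,-1-\delta_j\}$ for $j\neq i$, and $W_{ii}=0$.

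\emph{Disjoint union.} Write $\gamma_j=-1-\delta_j$ and choose $i\in\sigma_\ell$. We have $W_{ij}\ge \gamma_j$ for $j\neq i$, and $W_{ii}=0>\gamma_i$. This gives
\[
\theta+\sum_j\gamma_j x^*_j<\theta+\sum_j W_{ij}x^*_j=x^*_i ,
\]
but the inequality points the wrong way for our purpose. The correct route is to use the standard gCTLN sign fact: a node that receives no edges from $\sigma_\ell$ has $W_{kj}=-1-\delta_j$ for all $j$, the minimum possible weight. Comparing with the row of a node $i\in\sigma_\ell$ is not enough on its own, so we argue via the graph rules instead. The sum $\sum_j x^*_j$ weighted by $-1$ already equals $\theta$ up to the $\varepsilon,\delta$ corrections. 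Concretely, set $s=\sum_j x^*_j$. Summing or choosing a maximal $x^*_i$ should show $\theta-s<0$ suffices, and then $\theta+\sum_j\gamma_jx^*_j=\theta-s-\sum_j\delta_jx^*_j$. We expect to prove $\theta-s\le 0$ from the fixed point equation at $i$: $x^*_i=\theta-s+x^*_i+\sum_{j\neq i}(\pm)\varepsilon_j/\delta_j$ terms. The hope is that these terms force every local fixed point to survive, since $\delta_j>0$ makes the off-node strictly inhibited. If this holds, all local supports survive, and Theorem \ref{thm:rank-1-gluings}(b) gives exactly the stated list, at most one support per component. The empty-union convention is handled by the nonemptiness of supports.

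\emph{Clique union.} Here $\gamma_j=-1+\varepsilon_j$, and outside nodes receive the maximal possible weight from every node of $\sigma_\ell$. We expect to show $\theta+\sum_j\gamma_jx^*_j>0$: every local fixed point dies, because outside nodes are excited relative to any in-support node, and in-support nodes have positive value. Then Theorem \ref{thm:rank-1-gluings}(a) forces exactly one dying support from every component, which matches the claim.

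\emph{Main obstacle.} The hard part is the two sign computations in Step 2 for general node-specific $\varepsilon_j,\delta_j$. For CTLNs they follow from a uniform-weight argument, but here the weights vary with the source node. We would try first to compare $\theta+\sum_j\gamma_jx^*_j$ against the fixed point equation at a suitably chosen $i\in\sigma_\ell$. Failing that, we would use the sign formula via Cramer's rule, $x^*_k$-type determinants. Lemma \ref{lemma:det-linear-algebra} factors these, reducing the sign to a determinant built from $\gamma$ and the column sums.
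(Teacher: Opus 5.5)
\textbf{There is a genuine gap: the survival/death step is never proved.}

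Your architecture matches the paper's. You check that clique and disjoint unions give rank-1 gluings and then invoke Theorem \ref{thm:rank-1-gluings}. That reduces everything to showing that every local support dies (clique union, case (a)) or survives (disjoint union, case (b)). Step 1 is correct. So is your reduction of survival to the sign of the single quantity $\theta+\sum_{j\in\sigma_\ell}\gamma_j x^*_j$. But this sign computation is the only real content left, and it is never carried out: both cases end with ``we expect'' or ``the hope is''.

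Your fallback route for the disjoint case also rests on a false claim. The inequality $\theta-\sum_j x^*_j\le 0$ already fails for a two-node independent set. There $x^*_1=\delta_2\theta/D$ and $x^*_2=\delta_1\theta/D$ with $D=\delta_1+\delta_2+\delta_1\delta_2$, so $\sum_j x^*_j=(\delta_1+\delta_2)\theta/D<\theta$.

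\textbf{The missing idea.} It is the content of the paper's Lemma \ref{lemma:single-node-survival}, and your first attempt came within one step of it. Your comparison pointed the wrong way because you bounded the diagonal term $\gamma_i x^*_i$ by $W_{ii}x^*_i=0$. That discards exactly the term that produces the sign. Keep it separate instead. Since $W_{ii}=0$, the fixed point equation at $i\in\sigma_\ell$ reads $x^*_i=\theta+\sum_{j\in\sigma_\ell\setminus\{i\}}W_{ij}x^*_j$, so
\[
\theta+\sum_{j\in\sigma_\ell}\gamma_j x^*_j=\Bigl(\theta+\sum_{j\in\sigma_\ell\setminus\{i\}}\gamma_j x^*_j\Bigr)+\gamma_i x^*_i .
\]
In the disjoint union, $\gamma_j=-1-\delta_j\le W_{ij}$ for $j\neq i$. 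Hence the right side is at most $x^*_i+\gamma_i x^*_i=-\delta_i x^*_i<0$, and every local support survives. In the clique union, $\gamma_j=-1+\varepsilon_j\ge W_{ij}$. Hence the right side is at least $\varepsilon_i x^*_i>0$, and every local support dies.

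No Cramer's-rule or determinant-factorization argument is needed. With this two-line estimate in place, your appeal to Theorem \ref{thm:rank-1-gluings}(b) and (a) respectively completes the proof, as in the paper.
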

	
	Because the neuron-specific parameters in a gCTLN are vectors $\bm{\varepsilon},\bm{\delta} \in \mathbb{R}^n$, we use $\bm{\varepsilon}_{\tau_\ell}$ and $\bm{\delta}_{\tau_\ell}$ to denote the restriction of these to the indices in $\tau_\ell$. This ensures that the parameters for individual neurons within the subnetworks match their corresponding values in the full graph $G$.
	
	While these fixed point decomposition theorems are identical for both CTLNs and gCTLNs, this extension of the gluing rules to gCTLNs required the development of new theoretical tools, such as Theorems \ref{thm:low-rank-gluing-menu} and \ref{thm:rank-1-gluings}. Furthermore, even when the fixed point supports are identical for a CTLN and a gCTLN built from the same graph (which is not always the case), important differences can arise in the dynamics.
	
	\paragraph{Dynamics of graph-based low-rank gluings.} 
	
	In CTLNs, there is often a correspondence between attractors and the subset of minimal (by inclusion) fixed point supports $\FP_{\min}(G)$ \cite{parmelee2022}. Here we show that gCTLNs also follow this heuristic. We illustrate this correspondence using the cyclic union network from Figure \ref{fig:CTLN_results_summary}A, reproduced in Figure \ref{fig:attractor_correspondance_cyclic_union}A. There, the minimal local supports (left, in bold) combine to form the minimal global supports (right, in bold). 
		\begin{figure}[!h]
		\begin{center}
			\includegraphics[width=\textwidth]{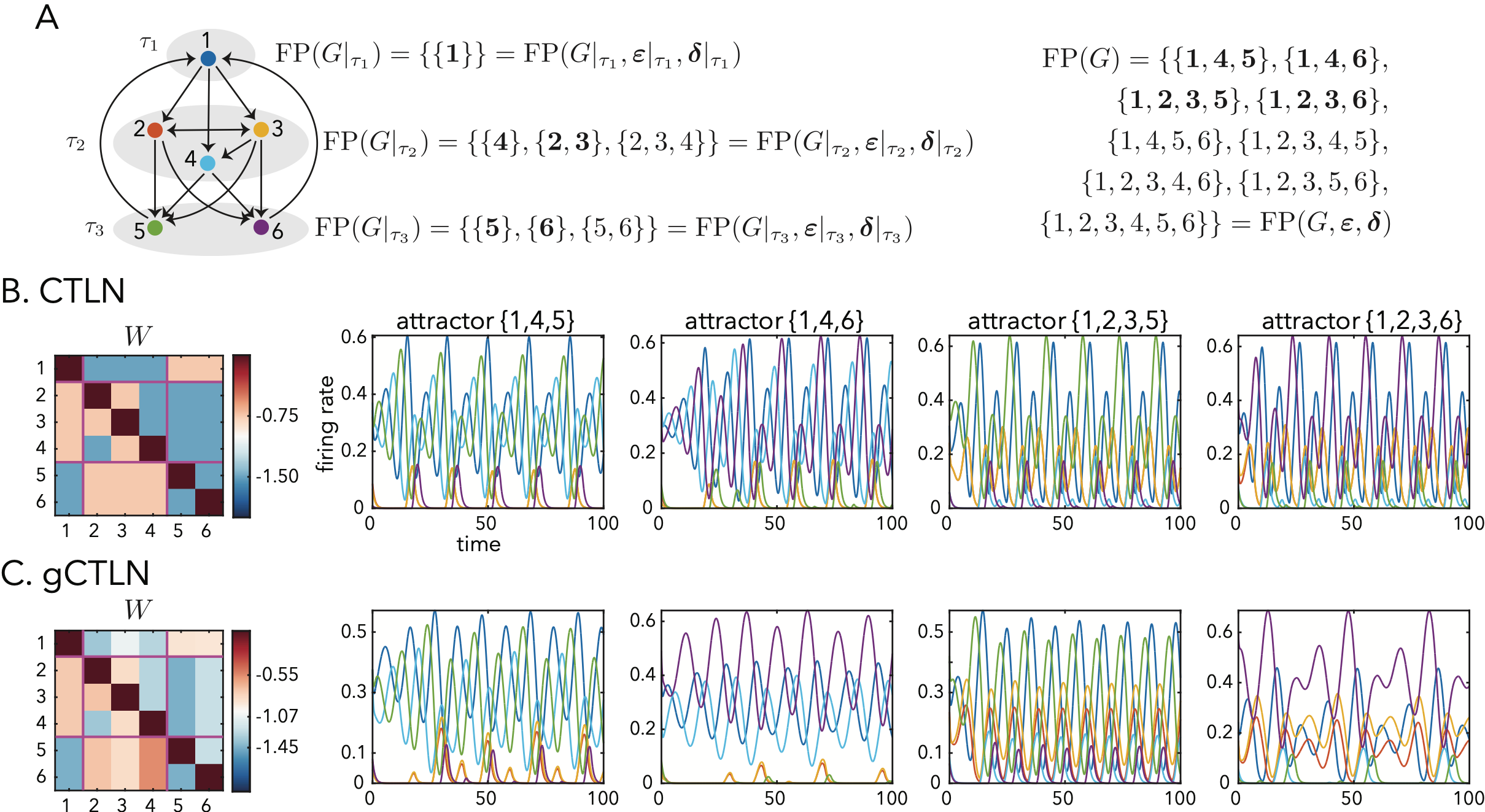}
		\end{center}
		\caption{\textbf{Attractor correspondence in CTLNs and gCTLNs.} 
			(A) A cyclic union graph is shown with its set of local and global fixed point supports for the CTLN and gCTLN matrices of panels B-C. For the specific choice of gCTLN parameters of panel C, the CTLN fixed point supports match those of the gCTLN. The minimal supports, which correspond to attractors, are in bold. 
			(B) The four attractors corresponding to the minimal fixed points of the CTLN, labeled by which fixed point perturbation gave rise to it.
			(C) Same as B, but for a gCTLN.
		}
		\label{fig:attractor_correspondance_cyclic_union}
	\end{figure}
	
	By choosing initial conditions that are small perturbations of the (unstable) fixed points associated with these supports, the network quickly settles into trajectories that computationally appear to be stable limit cycles (Fig. \ref{fig:attractor_correspondance_cyclic_union}B-C). This holds for both CTLNs and gCTLNs built from the graph of panel A. Each attractor exhibits sequential activity cycling through the components, with the highest-firing neurons matching the corresponding supports. We have thus obtained $|\FP_{\min}(G|_{\tau_1})| * |\FP_{\min}(G|_{\tau_2})| * |\FP_{\min}(G|_{\tau_3})| = 1 * 2 * 2 = 4$ distinct sequential patterns from this 6-node network, for both the CTLN and gCTLN. 
	
	This correspondence between minimal fixed points and attractors provides a recipe for engineering compositional dynamics. By connecting the same motifs in different ways, we can produce fundamentally different global behaviors, as illustrated in Figure \ref{fig:CTLN_vs_gCTLN_dynamics}. First, note that by Theorems \ref{thm:generalized-cyclic-union} and \ref{thm:generalized-disjoint-clique-union}, because each isolated component possesses only a single fixed point of full support, both the cyclic union and the clique union have exactly one global fixed point. Regarding the dynamics, though these architectures use identical building blocks, and even share the same set of fixed point supports in the cyclic and clique union cases, they exhibit entirely different global dynamics.
		\begin{figure}[!h]
		\begin{center}
			\includegraphics[width=\textwidth]{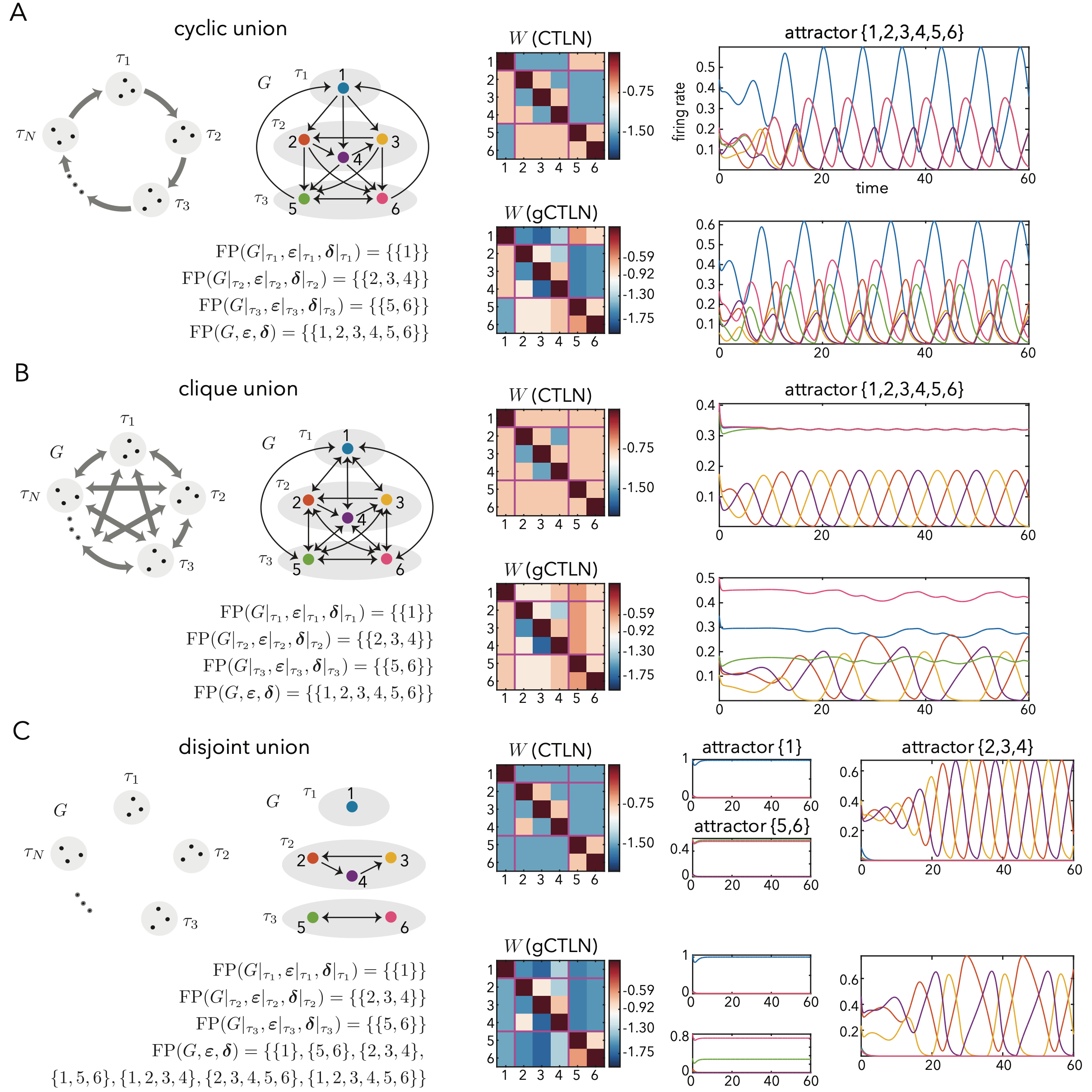}
		\end{center}
		\caption{\textbf{Fixed points and dynamics of CTLN vs. gCTLN for three specific graph constructions.}
			Comparison of attractors for CTLNs (top traces) and gCTLNs (bottom traces) across three architectures built from the same components. 
			(A) Sequential dynamics arising from a cyclic union. 
			(B) A fusion attractor arising from a clique union 
			(C) Component-wise winner-take-all dynamics arising from a disjoint union.}
		\label{fig:CTLN_vs_gCTLN_dynamics}
	\end{figure}  
	
	Specifically, we observe sequential attractors from cyclic unions (Fig. \ref{fig:CTLN_vs_gCTLN_dynamics}A); ``fusion'' attractors where component attractors are co-expressed in clique unions (Fig. \ref{fig:CTLN_vs_gCTLN_dynamics}B); and component-wise winner-take-all dynamics in disjoint unions (Fig. \ref{fig:CTLN_vs_gCTLN_dynamics}C). In particular, for cyclic and clique unions, this capacity to reliably generate a combinatorially large set of compositional sequential or fusion attractors might be useful for modeling neural mechanisms such as feature binding, where distinct features must be dynamically bound into a single coherent representation \cite{treisman1996,roskies1999a}.
	
	Across all three architectures, the transition from CTLN parameters to gCTLN parameters induces a symmetry-breaking in the firing rate curves. For example, for all glued CTLNs, nodes 5 and 6 are perfectly synchronized, and the periodic cycling through nodes 2, 3, and 4 is perfectly symmetric. By contrast, in the corresponding gCTLNs, the synchronization is lost and the sequential dynamics become noticeably irregular. Nevertheless, the underlying qualitative nature of the attractors remains predictable from the glued network architecture. A similar phenomenon, where qualitative dynamics are preserved despite quantitative variation, has been observed in various biological systems \cite{long2008,tang2010}.	
	
	\paragraph{Roadmap.} This paper is organized as follows. Section~\ref{sec:technical-background} provides the necessary technical background on TLNs and the concept of domination for graph-based networks. The main theoretical results are presented in Section~\ref{sec:results}. Specifically, Section~\ref{sec:low-rank-gluing} develops the theory of low-rank gluings and contains the proof of Theorem~\ref{thm:low-rank-gluing-menu}. Section~\ref{sec:rank-1-gluing} extends this to rank-1 gluings and develops the proof of Theorem~\ref{thm:rank-1-gluings}. We apply these results to graph-based networks in Section~\ref{sec:gCTLN-results}, where we prove Theorems~\ref{thm:generalized-cyclic-union} and \ref{thm:generalized-disjoint-clique-union}, recovering several well-known CTLN graph rules as a direct consequence of our results. We conclude our results section by presenting an example application to mollusk locomotion in Section \ref{sec:mollusk}.	Finally, Section~\ref{sec:discussion} discusses the broader implications of these findings and opportunities for future work.

	\section{Technical background}\label{sec:technical-background}
	In this section, we review essential background on TLNs and domination and establish notation, with the goal of making the paper self-contained. First, we introduce tools that help us characterize fixed point supports in TLNs, following \cite{curto2019a}. Then, we present background on domination, a graph property with important consequences for the dynamics, following \cite{curto2025a}.
	
	\subsection{TLN background}\label{sec:TLN-background}
	
	We begin by formally reviewing some general background on threshold-linear networks and introducing the technical tools that will allow us to prove that a given subset of nodes supports a fixed point.
	
	Formally, a \textit{threshold-linear network} $(W,\bm{b})$ on $n$ neurons is a system of ordinary differential equations 
	\begin{equation}\label{eq:TLNdynamics}
		\tau_i\dfrac{dx_i}{dt} = -x_i + \left[\sum_{j=1}^n W_{ij}x_j+b_i \right]_+, \quad i = 1,\ldots,n,
	\end{equation}
	where $[W_{ij}] = W \in \RR^{n\times n}$ and $[b_i]  = \bm{b} \in \RR^{n}$.
	
	The results we prove here only hold for TLNs with \emph{uniform} external input $b_i = \theta$ for all $i \in [n]$. In this case, we abuse notation and denote the TLN $(W,\theta\mathbbm{1})$ by just $(W,\theta)$. Additionally, we assume $\tau_i=1$ for all $i \in [n]$. Moreover, we assume all networks to be \emph{competitive} \emph{nondegenerate} TLNs. We say that a TLN $(W,\bm{b})$ is {\it competitive} if $W_{ij}\leq0$, $W_{ii} = 0$ and $b_i\geq 0$ for all $i,j\in[n]$.  Competitiveness is known to facilitate oscillations in many systems \cite{whittington2000}, particularly in biological networks \cite{karnani2014,arriaga2017,fino2011,bucher2009}
	
	The definition of non-degeneracy requires some Cramer's determinants of certain submatrices to not vanish. To this end, by $((A_\sigma)_i;\bm{b}_\sigma)$ we denote the matrix $A_\sigma$ where the column corresponding to the index $i \in \sigma$ has been replaced by $\bm{b}_\sigma$, with the subindex denoting restriction to the rows/columns given by $\sigma$. We denote by $[n]$ the set of indices $\{1,\dots,n\}$.
	
	\begin{definition}\label{def:nondegenerate}
		We say that a TLN $(W,\bm{b})$ is {\it nondegenerate} if 
		\begin{enumerate}
			\item $b_i>0$ for at least one $i \in [n]$
			\item $\det(I-W_\sigma) \neq 0$ for each $\sigma \subseteq [n]$, and 
			\item for each $\sigma \subseteq [n]$ such that $b_i>0$ for all $i \in \sigma$,  the corresponding Cramer's determinant is nonzero: $\det((I-W_\sigma)_i;\bm{b}_\sigma) \neq 0$. 
		\end{enumerate}
	\end{definition}
	
	A fixed point $x^*$ of the system in Eq. \eqref{eq:TLNdynamics} is a solution to $\frac{dx}{dt} = 0$, that is, where $x^*_i = \left[\sum_{j=1}^n W_{ij}x_j+b_i \right]_+$. For competitive nondegenerate TLNs, we can unambiguously label all fixed points by their support, $\supp{x^*} = \{i \in [n] \mid x^*_i>0\}$ \cite{curto2019a}. The set of all fixed point supports for a given TLN is:
	\begin{equation}\label{eq:FP-TLN-general}
		\FP(W,\bm{b}) \od \{\sigma \subseteq [n] ~|~ \sigma \text{ is the support of a fixed point } x^* \}.
	\end{equation}
	
	A nice algebraic tool to determine if a given support $\sigma$ belongs to $\FP(W, \bm{b})$ is Theorem \ref{thm:sgn-conditions} below. It provides a simple test based on the signs of specific determinants. These determinants, defined below, are important combinatorial objects associated with a given TLN, and have interesting connections to the theory of chirotopes \cite{londonoalvarez2024}. For any $\sigma \subseteq [n],$ we define
	\begin{equation}\label{eq:s_i}
		s_i^\sigma \od \det((I-W_{\sigma\cup\{i\}})_i;b_{\sigma\cup\{i\}}), \;\; \text{for} \;\; i \in [n].
	\end{equation} 
	
	This brings us to the full characterization of $\FP(W,\bm{b})$ in terms of the signs of the $s_i^\sigma$. This theorem, illustrated in Figure \ref{fig:sgn-conditions}, is the primary workhorse of our subsequent proofs. 
	\begin{theorem}[sign conditions, \cite{curto2019a}]\label{thm:sgn-conditions}
		Let $(W,\bm{b})$ be a TLN on $n$ neurons. For any nonempty $\sigma \subseteq [n]$,
		$$\sigma \in \FP(W_\sigma,\bm{b}_\sigma)  \;\; \Leftrightarrow \;\; \sgn s_i^\sigma = \sgn s_j^\sigma
		\text{ for all } i,j \in \sigma.$$
		In that case $$\sgn s_i^\sigma = \sgn\det(I-W_\sigma) \od \idx(\sigma) \text{ for all } i \in \sigma. $$ and we say that $\sigma$ is permitted (and forbidden otherwise). Furthermore,
		$$\sigma \in \FP(W,\bm{b})  \;\; \Leftrightarrow \;\; \sgn s_i^\sigma = \sgn s_j^\sigma = -\sgn s_k^\sigma
		\text{ for all } i,j \in \sigma,\; k \not\in \sigma,$$
	\end{theorem}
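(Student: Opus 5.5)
The plan is to translate the fixed point equations into determinant identities using Cramer's rule and a Schur complement, and then read off signs. Recall that $x^*$ is a fixed point with support $\sigma$ exactly when three things hold. First, $x^*_k=0$ for $k\notin\sigma$. Second, on $\sigma$ the thresholds are inactive, so $(I-W_\sigma)x^*_\sigma=\bm{b}_\sigma$ with $x^*_\sigma>0$. Third, for every $k\notin\sigma$ the effective input $y_k\od\sum_{j\in\sigma}W_{kj}x^*_j+b_k$ satisfies $y_k\le 0$. Nondegeneracy (Definition~\ref{def:nondegenerate}) gives $\det(I-W_\sigma)\neq 0$, so $x^*_\sigma=(I-W_\sigma)^{-1}\bm{b}_\sigma$ is uniquely determined. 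Nondegeneracy will also be used to rule out the boundary cases $x^*_i=0$ and $y_k=0$.

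\emph{Step 1 (on-support conditions).} For $i\in\sigma$ we have $\sigma\cup\{i\}=\sigma$, so $s_i^\sigma=\det((I-W_\sigma)_i;\bm{b}_\sigma)$. Cramer's rule then gives
$$x^*_i = \frac{s_i^\sigma}{\det(I-W_\sigma)}.$$
Hence $\sigma\in\FP(W_\sigma,\bm{b}_\sigma)$ if and only if $\sgn s_i^\sigma=\sgn\det(I-W_\sigma)$ for all $i\in\sigma$. This already implies that all $s_i^\sigma$, $i\in\sigma$, share a sign.

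For the converse, suppose all $s_i^\sigma$ share a sign. I must exclude the possibility that this common sign is opposite to $\sgn\det(I-W_\sigma)$, in which case $x^*_\sigma<0$ entrywise. Here I use competitiveness. Row $i$ of $(I-W_\sigma)x^*_\sigma$ equals $x^*_i+\sum_{j\neq i}|W_{ij}|x^*_j$, which would be strictly negative for every $i$. This contradicts $\bm{b}_\sigma\ge 0$. Note that $\bm{b}_\sigma=0$ is impossible, since then $x^*_\sigma=0$ and all $s_i^\sigma$ would vanish. So the common sign equals $\idx(\sigma)$, which proves the first equivalence together with the formula $\sgn s_i^\sigma=\idx(\sigma)$.

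\emph{Step 2 (off-support conditions).} For $k\notin\sigma$, order the indices of $\sigma\cup\{k\}$ with $k$ last. Moving a row and the corresponding column together does not change the determinant, so this ordering is harmless. The matrix defining $s_k^\sigma$ is then the block matrix
$$\begin{pmatrix} I-W_\sigma & \bm{b}_\sigma\\ -W_{k,\sigma} & b_k\end{pmatrix}.$$
The Schur complement formula gives
$$s_k^\sigma=\det(I-W_\sigma)\bigl(b_k+W_{k,\sigma}(I-W_\sigma)^{-1}\bm{b}_\sigma\bigr)=\det(I-W_\sigma)\,y_k.$$
Thus $y_k<0$ if and only if $\sgn s_k^\sigma=-\idx(\sigma)$. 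By nondegeneracy $y_k\neq 0$, so the condition $y_k\le0$ is equivalent to strict negativity.

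\emph{Step 3 (conclusion).} Combining Steps 1 and 2, $\sigma\in\FP(W,\bm{b})$ holds exactly when $\sgn s_i^\sigma=\sgn s_j^\sigma=-\sgn s_k^\sigma$ for all $i,j\in\sigma$ and $k\notin\sigma$. I expect the main obstacle to be the sign bookkeeping in the Schur complement step, namely checking that the position of the replaced column introduces no extra sign. A secondary delicate point is the competitiveness argument in Step 1 that excludes an all-negative solution. Both are short once the block ordering is fixed as above.
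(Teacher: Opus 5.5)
The paper does not prove this statement; it imports it from \cite{curto2019a}. Your proposal is correct and follows the standard argument behind that result: Cramer's rule gives $x^*_i=s_i^\sigma/\det(I-W_\sigma)$ on $\sigma$, the identity $s_k^\sigma=\det(I-W_\sigma)\,y_k$ (with the harmless simultaneous reordering putting $k$ last) handles $k\notin\sigma$, and competitiveness excludes an entrywise-negative solution, so the common sign must be $\idx(\sigma)$.

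The one point to state more carefully is your use of nondegeneracy to get a nonzero common sign and $y_k\neq 0$. This relies on the paper's standing choice $\bm{b}=\theta\mathbbm{1}$ with $\theta>0$, because clause (c) of Definition~\ref{def:nondegenerate} only applies to index sets on which every $b_i>0$. Your remark about $\bm{b}_\sigma=0$ is also not needed for the contradiction argument.
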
 

	\begin{figure}[!h]
		\begin{center}
			\includegraphics[width=\textwidth]{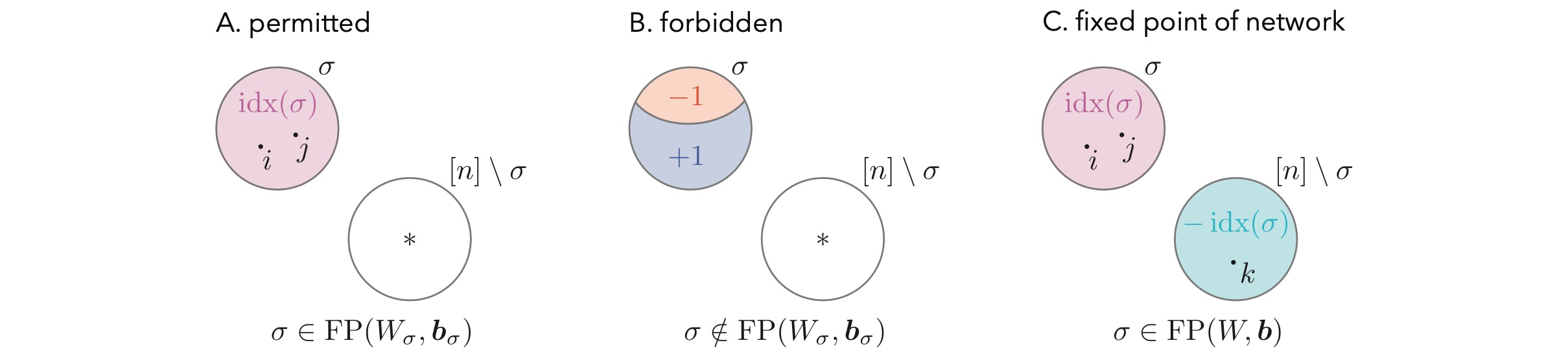}
		\end{center}
		\caption{\textbf{Theorem \ref{thm:sgn-conditions} (sign conditions, \cite{curto2019a}).} Same color indicates the same $\sgn s_i^\sigma$ for $i$ in the given component.
			(A) A subset $\sigma$ is permitted if and only if $\sgn s_i^\sigma = \sgn s_j^\sigma = \idx(\sigma) \text{ for all } i,j \in \sigma.$ $\sgn s_k^\sigma$ for $k \in  [n]\setminus\sigma$ can be anything.
			(B) A subset $\sigma$ is forbidden if and only if there is a mix of signs in it. More precisely, if it exists $i,j \in \sigma$ such that $\sgn s_i^\sigma = -\sgn s_j^\sigma.$ 
			(C) A subset $\sigma$ supports a fixed point of the network if and only if $\sgn s_i^\sigma = \sgn s_j^\sigma = \idx(\sigma) \text{ for all } i,j \in \sigma$ and  $\sgn s_i^\sigma = -\sgn s_k^\sigma  \text{ for all } k \notin \sigma.$}
		\label{fig:sgn-conditions}
	\end{figure}
	
	Two other results from the general TLN theory will be needed. The first relates the indices of all fixed points:
	\begin{theorem}[parity, \cite{curto2019a}]\label{thm:parity}
		Let $(W, \bm{b})$ be a TLN. Then
		$$
		\sum_{\sigma \in \FP(W, \bm{b})} \idx(\sigma)=+1 .
		$$
		In particular, the total number of fixed points $|\FP(W, \bm{b})|$ is always odd.
	\end{theorem}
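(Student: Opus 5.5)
The plan is to read the identity as a topological degree computation for the piecewise-linear map
$$G(x) \od x - [Wx+\bm{b}]_+ ,$$
whose zeros are exactly the fixed points of $(W,\bm{b})$. I would show three things: the local index of $G$ at a fixed point with support $\sigma$ is $\idx(\sigma)=\sgn\det(I-W_\sigma)$; the fixed points are isolated and nondegenerate, so the degree of $G$ on a large ball equals the sum of these local indices; and a homotopy forces that degree to be $+1$.

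\textbf{Step 1 (local structure).} Let $x^*$ be a fixed point with support $\sigma$.
\begin{enumerate}
\item For $i\in\sigma$ we have $(Wx^*+\bm{b})_i=x^*_i>0$.
\item For $k\notin\sigma$ we have $(Wx^*+\bm{b})_k\le 0$. I would show this inequality is strict. The quantity $(Wx^*+\bm{b})_k$, with $x^*_\sigma=(I-W_\sigma)^{-1}\bm{b}_\sigma$, equals $s_k^\sigma/\det(I-W_\sigma)$ up to sign, by Cramer's rule/Schur complement. So nondegeneracy (Definition \ref{def:nondegenerate}), which makes $s_k^\sigma\neq 0$, excludes the boundary case.
\end{enumerate}
With both strict inequalities, $G$ is affine on a neighborhood of $x^*$. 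Its rows indexed by $\sigma$ are $e_i^\intercal - W_{i,\cdot}$, and its rows outside $\sigma$ are $e_k^\intercal$. Ordering coordinates as $(\sigma,\bar\sigma)$, the Jacobian is block triangular:
$$\begin{pmatrix} I-W_\sigma & -W_{\sigma,\bar\sigma}\\ 0 & I\end{pmatrix},$$
so its determinant is $\det(I-W_\sigma)\neq 0$. Hence $x^*$ is an isolated zero with local degree $\sgn\det(I-W_\sigma)=\idx(\sigma)$. This agrees with the definition of $\idx$ in Theorem \ref{thm:sgn-conditions}.

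\textbf{Step 2 (a priori bound and homotopy).} Consider $G_t(x)=x-t[Wx+\bm{b}]_+$ for $t\in[0,1]$.
\begin{enumerate}
\item Any zero of $G_t$ satisfies $x\ge 0$.
\item In the competitive setting $W\le 0$, so for $x\ge 0$ we have $[Wx+\bm{b}]_+\le \bm{b}$. Hence every zero satisfies $0\le x\le \bm{b}$.
\end{enumerate}
All zeros of all $G_t$ therefore lie in a fixed ball $B$ that does not meet $\partial B$. By homotopy invariance, $\deg(G,B,0)=\deg(G_0,B,0)=\deg(\mathrm{id},B,0)=+1$. If one prefers not to use competitiveness, the homotopy $x-[tWx+\bm{b}]_+$ gives the same bound, since $\|[tWx+\bm{b}]_+\|\le \|W\|\|x\|+\|\bm{b}\|$ does not suffice directly. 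Absent competitiveness I would instead use a large-$\|x\|$ argument. For the paper's setting the first homotopy suffices.

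\textbf{Step 3 (conclusion).} The zeros in $B$ are finitely many, since there is at most one per support, and each is nondegenerate by Step 1. Additivity of the degree then gives $\sum_{\sigma\in\FP(W,\bm{b})}\idx(\sigma)=\deg(G,B,0)=+1$. Since each term is $\pm1$, the number of fixed points is odd.

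\textbf{Main obstacle.} I expect the delicate point to be Step 1's strictness claim for $k\notin\sigma$: one must identify $(Wx^*+\bm{b})_k$ with a nonzero multiple of $s_k^\sigma$, using a Schur-complement/Cramer computation on $I-W_{\sigma\cup\{k\}}$. That identification is what makes $G$ genuinely linear near each zero, so the local index is well defined and equal to $\idx(\sigma)$. I would carry it out first by expanding $\det((I-W_{\sigma\cup\{k\}})_k;\bm{b}_{\sigma\cup\{k\}})$ along the replaced column.
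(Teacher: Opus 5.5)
The paper does not prove this statement. It quotes it from \cite{curto2019a} and uses it as a black box, for example in Case~II of the cyclic-union proof, so there is no internal argument to compare with. On its own terms your Brouwer-degree proof is correct.

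The step you flagged as the main obstacle is a one-line Schur complement. Putting $k$ last gives $s_k^\sigma=\det(I-W_\sigma)\,\bigl(b_k+W_{k\sigma}x^*_\sigma\bigr)=\det(I-W_\sigma)\,(Wx^*+\bm{b})_k$ exactly, with no sign ambiguity. This identity is also what turns the condition $\sgn s_k^\sigma=-\idx(\sigma)$ of Theorem~\ref{thm:sgn-conditions} into ``node $k$ receives negative input.'' With it, $G$ is affine near each zero with Jacobian determinant $\det(I-W_\sigma)$. The competitive bound $0\le x\le\bm{b}$ makes the homotopy to the identity admissible, and additivity of the degree finishes the argument. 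What your route adds over a bare citation is an explanation of the name: $\idx(\sigma)$ is literally the local degree of $x\mapsto x-[Wx+\bm{b}]_+$, and parity is a Poincar\'e--Hopf statement for this piecewise-linear field.

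Two things need fixing.

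\textbf{Strictness when $b_k=0$.} Condition (c) of Definition~\ref{def:nondegenerate} only gives $s_k^\sigma\neq0$ when $b_i>0$ on all of $\sigma\cup\{k\}$. On $\sigma$ this is automatic, since $0<x_i^*=(Wx^*+\bm{b})_i\le b_i$ because $W\le 0$. But if $b_k=0$ and $W_{k\sigma}=0$, then $(Wx^*+\bm{b})_k=0$ exactly. Nondegeneracy does not exclude this, and then $G$ is not affine near $x^*$. This cannot happen for uniform $\theta>0$, which is the paper's setting. For general $\bm{b}\ge 0$, first delete the nodes with $b_k=0$:
\begin{itemize}
\item such nodes are never active at a fixed point;
\item extension by zero identifies $\FP(W,\bm{b})$ with $\FP(W_S,\bm{b}_S)$ for $S=\{i : b_i>0\}$, without changing any index;
\item $(W_S,\bm{b}_S)$ is nondegenerate with strictly positive input, so your argument applies to it.
\end{itemize}

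\textbf{The aside about dropping competitiveness.} Remove it: the hypothesis is not merely convenient, the statement is false without it. Take $n=1$, $W=2$, $b=1$. The network is nondegenerate, yet $x=[2x+1]_+$ has no solution, so $\FP(W,b)=\emptyset$ and the index sum is $0$. This matches $\deg(G,B,0)=0$ on large intervals. No large-$\|x\|$ argument can replace the a priori bound that $W\le 0$ provides.
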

	
	The second establishes how fixed points behave when nodes are added to the network.
	\begin{corollary}[inheritance, \cite{curto2019a}]\label{cor:inheritance}
		Let $(W,\bm{b})$  be a TLN on $n$ neurons, and let $\sigma\subseteq [n]$. The following are equivalent:
		\begin{enumerate}
			\item $\sigma \in \FP(W,\bm{b})$
			\item $\sigma \in \FP(W_{\tau},\bm{b}_{\tau})$ for all $\sigma \subseteq \tau \subseteq [n]$
			\item $\sigma \in \FP(W_{\sigma},\bm{b}_{\sigma})$ and $\sigma \in \FP(W_{\sigma \cup k},\bm{b}_{\sigma \cup k})$ for all $k \notin \sigma$
			\item $\sigma \in \FP(W_{\sigma \cup k},\bm{b}_{\sigma \cup k})$ for all $k \notin \sigma$
		\end{enumerate}
	\end{corollary}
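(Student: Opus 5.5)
The plan is to derive everything from Theorem \ref{thm:sgn-conditions} (sign conditions), using one locality observation. By \eqref{eq:s_i}, the quantity $s_i^\sigma = \det((I-W_{\sigma\cup\{i\}})_i;b_{\sigma\cup\{i\}})$ depends only on the entries of $W$ and $\bm{b}$ indexed by $\sigma\cup\{i\}$. Hence, for any $\tau$ with $\sigma\cup\{i\}\subseteq\tau$, the number $s_i^\sigma$ is the same whether it is computed in $(W,\bm{b})$ or in the restricted network $(W_\tau,\bm{b}_\tau)$.

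Applying Theorem \ref{thm:sgn-conditions} to $(W_\tau,\bm{b}_\tau)$ then gives, for every $\tau\supseteq\sigma$,
$$\sigma\in\FP(W_\tau,\bm{b}_\tau)\iff \sgn s_i^\sigma=\sgn s_j^\sigma=-\sgn s_k^\sigma \ \text{ for all } i,j\in\sigma,\ k\in\tau\setminus\sigma,$$
with the \emph{same} numbers $s_i^\sigma$ for every choice of $\tau$. In words, membership of $\sigma$ in $\FP(W_\tau,\bm{b}_\tau)$ is a conjunction of sign conditions: an "internal" condition (all $s_i^\sigma$, $i\in\sigma$, share a sign) and one "external" condition for each $k\in\tau\setminus\sigma$. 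For $\tau=\sigma$ only the internal condition appears, which is exactly the permitted condition $\sigma\in\FP(W_\sigma,\bm{b}_\sigma)$.

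With this reformulation I will close the cycle of implications. For (a)$\Rightarrow$(b), the conditions for $\tau$ are a subset of those for $[n]$. The steps (b)$\Rightarrow$(c) and (c)$\Rightarrow$(d) are specializations to $\tau=\sigma$ and $\tau=\sigma\cup\{k\}$. For (d)$\Rightarrow$(a), fix any $k\notin\sigma$. Then $\sigma\in\FP(W_{\sigma\cup k},\bm{b}_{\sigma\cup k})$ gives the internal condition, i.e.\ all $s_i^\sigma$ with $i\in\sigma$ share a sign $\epsilon$, together with $\sgn s_k^\sigma=-\epsilon$. The sign $\epsilon$ does not depend on $k$, because the $s_i^\sigma$ for $i\in\sigma$ are the same numbers for every $k$. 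Collecting the external conditions over all $k\notin\sigma$ therefore gives exactly the full sign condition for $\sigma\in\FP(W,\bm{b})$.

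The one step needing care is this consistency in (d)$\Rightarrow$(a): the sign $\epsilon$ must not change from one $k$ to another. The locality observation settles it, since $s_i^\sigma$ for $i\in\sigma$ only involves $W_\sigma$ and $\bm{b}_\sigma$.

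There is also a degenerate edge case, $\sigma=[n]$. There (d) is vacuous while (a), (b) and (c) all reduce to $\sigma\in\FP(W_\sigma,\bm{b}_\sigma)$. I would state the corollary for $\sigma\subsetneq[n]$, or equivalently read (d) as including the internal condition, noting that this case is trivial. I would also assume $\sigma\neq\emptyset$, as in Theorem \ref{thm:sgn-conditions}, and the nondegeneracy of $(W,\bm{b})$, which passes to every restriction $(W_\tau,\bm{b}_\tau)$ because Definition \ref{def:nondegenerate} is stated in terms of principal submatrices.
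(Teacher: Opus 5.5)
Your argument is correct and is essentially the standard one. The paper does not reprove this corollary; it imports it from \cite{curto2019a}, where it is read off from Theorem \ref{thm:sgn-conditions} using exactly your observation that $s_i^\sigma$ depends only on $W_{\sigma\cup\{i\}}$ and $\bm{b}_{\sigma\cup\{i\}}$, so that the conditions for $(W_\tau,\bm{b}_\tau)$ form a sub-collection of those for $(W,\bm{b})$ with the same internal sign.

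Your caveat about $\sigma=[n]$ is right: (d) is vacuous there, so (d)$\Rightarrow$(a) genuinely needs $\sigma\neq[n]$, or (d) read as including permittedness. One further small correction: condition (a) of Definition \ref{def:nondegenerate} need not pass to restrictions for general $\bm{b}$, though this is harmless in the paper's uniform-$\theta>0$ setting.
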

	
	From the above corollary, a subset $\sigma$ will  support a fixed point if and only if it it is permitted and \emph{survives} the addition of all other nodes in the network (by (a) $\Leftrightarrow$ (c) above). If $\sigma$ is permitted but $\sigma \notin \FP(W_{\sigma \cup k},\bm{b}_{\sigma \cup k})$ for some $k \notin \sigma$, we say that $\sigma$ \emph{dies} in the larger network. 
	
	This concludes the technical background for inhibition-dominated TLNs. We now turn to domination, a theory specific to gCTLNs, which is an important graph-based simplification tool for the dynamics.
	
	\subsection{Domination background}\label{sec:domination-background}
	
	The domination results introduced in this section hold for gCTLNs, and were originally introduced in \cite{curto2025a}. Formally, we say that a TLN $(W,\theta)$ defines a \emph{generalized combinatorial threshold-linear network} (gCTLN), denoted  $(G, \bm{\varepsilon}, \bm{\delta})$, if $W$ is prescribed by a directed graph $G$ as follows:
	\begin{equation}\label{eq:generalized-synapses}
		W_{ij} = \begin{cases}
			0 & \text{if } i = j, \\
			-1 + \varepsilon_j & \text{if } j \rightarrow i \text{ in } G, \\
			-1 - \delta_j & \text{if } j \not\rightarrow i \text{ in } G,
		\end{cases}
	\end{equation}	
	with $\varepsilon_j,\delta_j>0$, $\varepsilon_j<1$ for all $j \in [n]$, and $\theta>0$. We omit $\theta$ in the gCTLN notation because for uniform input, the exact value of $\theta$ only scales the value of the fixed points, and thus supports do not change \cite{morrison2019,curto2025a}. When $\varepsilon_j = \varepsilon, \delta_j = \delta$ for all $i$, we obtain a \textit{combinatorial threshold-linear network} (CTLN), denoted $(G, \varepsilon, \delta)$.
	
	For these graph-based networks, we can use the concept of domination to simplify its set of fixed point supports:
	\begin{definition}[domination,\cite{curto2025a}]\label{def:domination}
		Let $j, k \in[n]$ be vertices of $G$. We say that $k$ graphically dominates $j$ in $G$ if the following two conditions hold:
		\begin{enumerate}
			\item For each vertex $i \in[n] \backslash\{j, k\}$, if $i \rightarrow j$ then $i \rightarrow k$.
			\item $j \rightarrow k$ and $k \nrightarrow j$.
		\end{enumerate}
		If there exists a $k$ that graphically dominates $j$, we say that $j$ is a \emph{dominated node} (or dominated vertex) of $G$. If $G$ has no dominated nodes, we say that it is \emph{domination free}.
	\end{definition}
	
	When a gCTLN is defined by a graph that contains domination relations, dominated nodes can be removed from the network without changing the set of fixed point supports:
	
	\begin{theorem}\label{thm:removal-of-dominated-nodes}
		Let $G$ be a graph with vertex set $[n]$, and suppose $j \in[n]$ is a dominated node in $G$. Then, for any choice of gCTLN parameters $\bm{\varepsilon}, \bm{\delta}$ and $\theta$,		
		$$
		\FP(G,\bm{\varepsilon}, \bm{\delta})=\FP(G|_{[n] \backslash j},\bm{\varepsilon}_{[n] \backslash j}, \bm{\delta}_{[n] \backslash j})
		$$
		In such a case, we say that $j$ is a \emph{removable} node.
	\end{theorem}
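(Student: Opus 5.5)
The plan is to argue directly at the level of fixed points using the "input" comparison between the dominated node $j$ and its dominator $k$, rather than through the sign conditions of Theorem \ref{thm:sgn-conditions}. For a vector $x\geq 0$, write $y_i(x) = \sum_{\ell} W_{i\ell}x_\ell + \theta$. Then $x^*$ is a fixed point exactly when $x^*_i = [y_i(x^*)]_+$ for all $i$.

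\textbf{Step 1: a key inequality.} For any $x \geq 0$ I will show
$$y_k(x) - y_j(x) \;\geq\; (1+\delta_k)x_k - (1-\varepsilon_j)x_j.$$
To see this, expand $y_k - y_j = \sum_\ell (W_{k\ell}-W_{j\ell})x_\ell$ and handle the terms as follows.
\begin{itemize}
\item For $\ell \notin\{j,k\}$, condition (a) of Definition \ref{def:domination} gives $W_{k\ell}\geq W_{j\ell}$. Indeed, if $\ell\to j$ then $\ell\to k$ and both entries equal $-1+\varepsilon_\ell$. Otherwise $W_{j\ell}=-1-\delta_\ell$, which is the smallest possible value.
\item For $\ell=j$, we have $j\to k$, so $W_{kj}-W_{jj} = -1+\varepsilon_j$.
\item For $\ell=k$, we have $k\not\to j$, so $W_{kk}-W_{jk} = 1+\delta_k$.
\end{itemize}
This uses only $\varepsilon_\ell,\delta_\ell>0$, so it holds for every choice of gCTLN parameters.

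\textbf{Step 2: $j$ lies in no fixed point support of $G$.} Suppose $x^*$ is a fixed point with $j\in\sigma=\supp x^*$, so $x^*_j=y_j>0$.
\begin{itemize}
\item If $k\in\sigma$, then $x^*_k=y_k$, and the inequality becomes $-\delta_k y_k \geq \varepsilon_j y_j$. The left side is negative and the right side is positive, a contradiction.
\item If $k\notin\sigma$, then $x^*_k=0$ and $y_k\leq 0$. The inequality gives $y_k\geq \varepsilon_j y_j>0$, again a contradiction.
\end{itemize}
Hence every $\sigma\in\FP(G,\bm\varepsilon,\bm\delta)$ satisfies $\sigma\subseteq[n]\setminus j$. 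By Corollary \ref{cor:inheritance}, such a $\sigma$ is then also in $\FP$ of the restricted network $G|_{[n]\setminus j}$, whose weight matrix is exactly $W_{[n]\setminus j}$ with the inherited parameters.

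\textbf{Step 3: the converse.} Take $\sigma \in \FP(G|_{[n]\setminus j},\bm\varepsilon_{[n]\setminus j},\bm\delta_{[n]\setminus j})$, and extend its fixed point by setting $x_j=0$. By Corollary \ref{cor:inheritance}(c), it remains only to check that $\sigma$ survives the addition of $j$, i.e.\ that $y_j(x^*)<0$. With $x_j=0$ the key inequality reads $y_j\leq y_k-(1+\delta_k)x_k$.
\begin{itemize}
\item If $k\in\sigma$, then $y_k=x_k>0$, so $y_j\leq -\delta_k x_k<0$.
\item If $k\notin\sigma$, then $y_j\leq y_k\leq 0$, and strictness has to come from elsewhere.
\end{itemize}

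\textbf{Main obstacle.} The delicate point is the boundary case $y_j=0$ in the second bullet of Step 3. I would dispose of it with nondegeneracy (Definition \ref{def:nondegenerate}): $y_j(x^*)$ is, up to the nonzero factor $\det(I-W_\sigma)$, the determinant $s_j^\sigma$, and this is nonzero by the nondegeneracy assumption. So $y_j<0$, and $\sigma$ survives. Together with Step 2 this gives the claimed equality of fixed point sets.
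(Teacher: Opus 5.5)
Your proof is correct. The paper gives no proof of its own (the theorem is quoted from \cite{curto2025a}), so there is nothing internal to compare against. Your argument is the natural direct one: the input-comparison argument that underlies domination results. Writing $\tilde y_i = y_i - x_i$, your Step 1 inequality says exactly $\tilde y_k - \tilde y_j \ge \delta_k x_k + \varepsilon_j x_j \ge 0$. Steps 2 and 3 then confront this with the fixed point conditions: $\tilde y_i = 0$ on the support, and $\tilde y_i = y_i \le 0$ off it.

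One correction: the ``main obstacle'' in Step 3 is not actually an obstacle. For an inactive node, the fixed point condition is $x_j^* = [y_j(x^*)]_+ = 0$, which means $y_j \le 0$, not $y_j<0$. So the weak inequality $y_j \le y_k \le 0$ already shows that extending $x^*$ by $x_j=0$ gives a fixed point of the full network with support $\sigma$. This is because $x_j=0$ leaves every other $y_i$ unchanged.

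Your nondegeneracy argument is nevertheless valid. The identity $s_j^\sigma = \det(I-W_\sigma)\, y_j(x^*)$ follows from a Schur complement, and both factors are nonzero under Definition~\ref{def:nondegenerate}. Alternatively, nondegeneracy gives $y_k<0$ directly. But the step is superfluous. Dropping it makes the entire proof independent of nondegeneracy, since Step 2 never used it and the restriction in Step 2 is also elementary. That independence fits the ``for any choice of gCTLN parameters'' phrasing better.
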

	
	All other domination-related results needed here derive from this theorem, and will be introduced and proven as needed in subsequent sections. With this, and the general TLN tools, we are now well-equipped to prove our main results.
	
	\section{Results}\label{sec:results}
	
	We begin by introducing a concept that is the basis for the proofs in this section, it specifies the relationship that a given component has with the rest of the graph:
	\begin{definition}\label{def:simply-embedded-component}
		Let $(W,\theta)$ be a TLN with $n$ nodes. We say that $\tau \subseteq [n]$ is \emph{simply-embedded} in the network if for every $j \notin \tau$, $W_{ij} = W_{kj} = \gamma_j $ for all $i,k \in \tau$. That is, if $\tau = \{1,\dots,\ell\}$ and $[n]\setminus\tau = \{\ell+1,\dots,n\}$, then the matrix $W$ has the form:
		\begin{align}
			W = \left[\begin{array}{c|ccc}
				& \vert &       & \vert \\
				W_{\tau} & \gamma_{\ell+1} & \cdots & \gamma_{n} \\
				& \vert &       & \vert \\
				\hline
				\ast & \multicolumn{3}{c}{W_{[n]\setminus\tau}} 
			\end{array}\right]
		\end{align}
	\end{definition}	

	Note that a low-rank gluing of components $(W_1,\theta), \dots, (W_N,\theta)$ with $W_i = W_{\tau_i,\tau_i}$ defines a partition $\tau_1, \dots, \tau_N$ of the vertex set $[n]$ in which every component is simply-embedded.  This concept was originally introduced in \cite{parmelee2022a} as a \emph{simply-embedded partition}. Similarly,  a rank-1 gluing defines a partition in which not only every component is simply-embedded, but also $[n]\setminus\tau_i$ is simply-embedded. This was termed a \emph{strongly simply-embedded partition} in \cite{parmelee2022a}. Thus, our rank-1 and low-rank gluings are a direct generalization of these concepts from CTLNs to the broader class of TLNs. 
	
	In this section, we prove that these simply-embedded structures allows us to factorize the determinant quantities $s_i^\sigma$  (Eq. \eqref{eq:s_i}), which in turn lets us relate the fixed points of the whole network to the fixed points of its components. This is made possible by the central technical tool for this section, a determinant factorization for matrices with a rank 1 block. This lemma is a common feature in proofs of this kind \cite{curto2019a, parmelee2022a}, but has never been explicitly presented on its own, as we do next. 
	\begingroup
	\def\thelemma{\ref{lemma:det-linear-algebra}} 
	\begin{lemma}
		Let 
		\begin{figure}[H]
			\centering
			\includegraphics[width=0.15\textwidth]{Figures/det-factorization-lemma-v2}
		\end{figure}
		be an $(m+n-1)\times(m+n-1)$ matrix, consisting of the blocks $A$ ($m\times m$), $B$ ($m\times n$), $C$ ($n\times m$) and $D$ ($n\times n$),	where adjacent blocks overlap in one row/column. Note that $a_{m, m}=b_{m,1}=c_{1,m}=d_{1,1}$ is the single entry where all four matrices overlap. If $B$ or $C$ is rank 1 and $a_{m,m} \neq 0$, then $$\det M=\frac{1}{a_{m,m}} \det A \det D.$$
	\end{lemma}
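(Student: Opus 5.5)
The plan is to reduce $M$ to block lower-triangular form by elementary column operations that use the rank-1 hypothesis, and then read off the determinant. First fix coordinates. $A$ occupies rows and columns $1,\dots,m$. $D$ occupies rows and columns $m,\dots,m+n-1$. $B$ is the block in rows $1,\dots,m$ and columns $m,\dots,m+n-1$, and $C$ is the block in rows $m,\dots,m+n-1$ and columns $1,\dots,m$. Everything shares the pivot entry $p \od a_{m,m}$. I would first treat the case where $B$ has rank 1. The case where $C$ has rank 1 then follows by applying the same argument to $M^\intercal$, since transposition swaps the roles of $B$ and $C$, transposes $A$ and $D$, and preserves all three determinants.

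Assume $B$ has rank 1. The first column of $B$ is the last column of $A$, $(a_{1,m},\dots,a_{m,m})^\intercal$, and it is nonzero because $p\neq 0$. Rank 1 therefore forces every column of $B$ to be a scalar multiple of this first column. Comparing the entries in row $m$ identifies the multiplier of the $j$-th column as $b_{m,j}/p = d_{1,j}/p$. For each $j=2,\dots,n$, subtract $d_{1,j}/p$ times column $m$ of $M$ from column $m+j-1$. These operations do not change $\det M$. Their effect on the three relevant regions is:
\begin{itemize}
\item in rows $1,\dots,m$, the $j$-th column of $B$ becomes zero for every $j\ge 2$; this includes the row-$m$ entries $d_{1,j}$;
\item the $A$ block is untouched;
\item in rows $m+1,\dots,m+n-1$, column $m+j-1$ becomes $d_{\cdot,j}-(d_{1,j}/d_{1,1})\,d_{\cdot,1}$.
\end{itemize}

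The resulting matrix $M'$ has zeros in rows $1,\dots,m$ and columns $m+1,\dots,m+n-1$, so it is block lower triangular with diagonal blocks $A$ (size $m$) and $\tilde D$ (size $n-1$). Here $\tilde D$ is the lower-right $(n-1)\times(n-1)$ block of the modified $D$. Hence $\det M=\det A\cdot\det\tilde D$. The final step is to perform the identical column operations on $D$ alone, namely subtracting $(d_{1,j}/d_{1,1})$ times its first column from column $j$. This turns the first row of $D$ into $(d_{1,1},0,\dots,0)$ and produces exactly $\tilde D$ as its lower-right block. Expanding along that row gives $\det D=d_{1,1}\det\tilde D = p\det\tilde D$. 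Combining the two identities yields $\det M=\frac{1}{a_{m,m}}\det A\det D$.

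The main point requiring care is the bookkeeping of the overlaps. One must check that a single set of column operations simultaneously annihilates the off-$A$ part of the top $m$ rows, including row $m$ (the shared row of $A$, $B$ and $D$), and modifies $D$ in exactly the way needed to factor out $d_{1,1}$. The hypothesis $p\neq0$ is used twice: to guarantee that the first column of $B$ spans its column space, and to divide by $d_{1,1}=p$.
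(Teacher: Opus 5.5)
Your proposal is correct and follows essentially the paper's argument. Both use column operations based on the rank-1 structure of $B$ to reach a block lower-triangular form with diagonal blocks $A$ and $\tilde D$, and then relate $\det \tilde D$ to $\det D = d_{1,1}\det\tilde D$ by applying the same operations to $D$. Your transposition remark handling the case where $C$ has rank 1 fills in a case the paper leaves implicit.
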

	\addtocounter{lemma}{-1}
	\endgroup

	\begin{proof}
		Let \begin{align*}
			A&=\left[\begin{array}{ccc}
				a_{1,1} & \cdots & a_{1, m} \\
				\vdots & & \vdots \\
				a_{m, 1} & \cdots & a_{m, m}
			\end{array}\right], 
			B=\left[\begin{array}{ccc}
				b_{1,1} & \cdots & b_{1, n} \\
				\vdots & & \vdots \\
				b_{m, 1} & \cdots & b_{m, n}
			\end{array}\right], \\
			C&=\left[\begin{array}{ccc}
				c_{1,1} & \cdots & c_{1, m} \\
				\vdots & \cdots & \vdots \\
				c_{n, 1} & \cdots & c_{n, m}
			\end{array}\right], 
			D=\left[\begin{array}{ccc}
				d_{1,1} & \cdots & d_{m, n} \\
				\vdots & \cdots & \vdots \\
				d_{n, 1} & \cdots & d_{n, n}
			\end{array}\right].
		\end{align*} 
		Note that  
		\begin{align*}
			\left[\begin{array}{c}
				a_{1, m} \\
				\vdots \\
				a_{m, m}
			\end{array}\right]=\left[\begin{array}{c}
				b_{1,1} \\
				\vdots \\
				b_{m, 1}
			\end{array}\right], 
			\left[\begin{array}{c}
				c_{1, m} \\
				\vdots \\
				c_{n, m}
			\end{array}\right]=\left[\begin{array}{c}
				d_{1,1} \\
				\vdots \\
				d_{n, 1}
			\end{array}\right],\left[\begin{array}{llll}
				b_{m, 1} & \cdots & b_{m, n}
			\end{array}\right]=\left[\begin{array}{llll}
				d_{1,1} & \cdots & d_{1, n}
			\end{array}\right]
		\end{align*}  and $\left[\begin{array}{llll}
			c_{1,1} & \cdots & c_{1, m}
		\end{array}\right] = \left[\begin{array}{llll}
			a_{m,1} & \cdots & a_{m,m}
		\end{array}\right]$. 
		Assume first that $B$ is rank 1, then we can take all columns of $B$ to be multiples of its first column. Let $\beta_k$ be such that 
		$$
		\left[\begin{array}{c}
			b_{1, k} \\
			\vdots \\
			b_{m, k}
		\end{array}\right]=\beta_k\left[\begin{array}{c}
			b_{1,1} \\
			\vdots \\
			b_{m, 1}
		\end{array}\right],
		$$ with $\beta_1 = 1$. Subtracting $\beta_k$ times the $m$-th column of $M$ from its $k$-th column, we get
		\begin{align}
			\det M &= \left[\begin{array}{ccc|ccc}
				a_{1,1} & \cdots & a_{1, m} & 0 & \cdots & 0 \\
				\vdots & & \vdots & \vdots & & \vdots \\
				a_{m, 1} & \cdots & a_{m, m} & 0 & \cdots & 0 \\
				\hline
				c_{1,1} & \cdots & d_{2,1} & d_{2,2}-\beta_2 d_{2,1} & \cdots & d_{2, n}-\beta_n d_{2,1} \\
				\vdots & & \vdots & \vdots & & \vdots \\
				c_{n, 1} & \cdots & d_{n, 1} & d_{n, 2}-\beta_2 d_{n, 1} & \cdots & d_{n, n}-\beta_n d_{n, 1}
			\end{array}\right] \notag \\
			\displaybreak[2]
			&= \det A \det\left[\begin{array}{cccc}
				d_{2,2}-\beta_2 d_{2,1} & \cdots & d_{2, n}-\beta_n d_{2,1} \\
				\vdots & & \vdots \\
				d_{n, 2}-\beta_2 d_{n, 1} & \cdots & d_{n, n}-\beta_n d_{n, 1}
			\end{array}\right] \notag \\
			\displaybreak[2]
			&= \det A \det\left[\begin{array}{c|ccc}
				1 & 0 & \cdots & 0 \\
				\hline
				& d_{2,2}-\beta_2 d_{2,1} & \cdots & d_{2, n}-\beta_n d_{2,1} \\
				*  & \vdots & & \vdots \\
				& d_{n, 2}-\beta_2 d_{n, 1} & \cdots & d_{n, n}-\beta_n d_{n, 1}
			\end{array}\right] \notag \\
			\displaybreak[2]
			&= \det A \frac{1}{b_{m, 1}} \det\left[\begin{array}{c|ccc}
				b_{m, 1} & 0 & \cdots & 0 \\
				\hline
				d_{2,1} & d_{2,2}-\beta_2 d_{2,1} & \cdots & d_{2, n}-\beta_n d_{2,1} \\
				\vdots  & \vdots & & \vdots \\
				d_{n, 1} & d_{n, 2}-\beta_2 d_{n,1} & \cdots & d_{n, n}-\beta_n d_{n, 1}
			\end{array}\right] \notag \\
			\displaybreak[2]
			&= \frac{1}{b_{m, 1}} \det A \det\left[\begin{array}{cccc}
				b_{m, 1} & \beta_2 b_{m, 1} & \cdots & \beta_n b_{m, 1} \\
				d_{2,1} & d_{2,2} & \cdots & d_{2, n} \\
				\vdots  & \vdots & & \vdots \\
				d_{n, 1} & d_{n, 2} & \cdots & d_{n, n}
			\end{array}\right] \notag \\
			\displaybreak[2]
			&= \frac{1}{a_{m, m}} \det A \det\left[\begin{array}{cccc}
				d_{1,1} & d_{1,2} & \cdots & d_{1, n} \\
				d_{2,1} & d_{2,2} & \cdots & d_{2, n} \\
				\vdots & \vdots & & \vdots \\
				d_{n, 1} & d_{n, 2} & \cdots & d_{n, n}
			\end{array}\right] = \frac{1}{a_{m, m}} \det A \det D \notag
		\end{align}
	\end{proof}

	\subsection{Low-rank gluings and proof of Theorem \ref{thm:low-rank-gluing-menu}}\label{sec:low-rank-gluing}
	
	The $s_i^\sigma$ quantities for a simply-embedded component $\tau$ will have precisely the overlapping, rank-1 block structure required by Lemma \ref{lemma:det-linear-algebra}. This allows us to factor the $s_i^\sigma$ quantities into the simply-embedded part, $\sigma\cap\tau$, and the rest of the graph $\sigma\cap\omega$, for $\omega = [n]\setminus\tau$. This factorization bridges the simply-embedded structure of the connectivity matrix with statements on the network dynamics via Theorem \ref{thm:sgn-conditions}.
	\begin{theorem}\label{thm:simply-embedded-si-factorization}
		Let $\tau$ be simply-embedded in $(W,\theta)$. Let $\omega = [n]\setminus\tau$. Then for any $\sigma \subseteq [n]$, we have
		$$s_i^\sigma = \frac{1}{\theta}s_i^{\sigma\cap\omega} s^{\sigma\cap\tau}_i = \alpha s^{\sigma\cap\tau}_i \quad\text{for each $i \in \tau$,}$$
		where $\alpha = \frac{1}{\theta}s_i^{\sigma\cap\omega}$ has the same value for every $i \in \tau$.
	\end{theorem}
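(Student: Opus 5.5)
The plan is to apply Lemma~\ref{lemma:det-linear-algebra} directly to the matrix whose determinant defines $s_i^\sigma$, with the replaced column $i$ serving as the overlapping row/column. Fix $i \in \tau$ and set $\rho = (\sigma\cap\tau)\cup\{i\}$ and $\eta = (\sigma\cap\omega)\cup\{i\}$. Then $\sigma\cup\{i\} = \rho \cup \eta$ and $\rho\cap\eta = \{i\}$; this holds whether or not $i\in\sigma$.

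\textbf{Step 1: reorder and identify the blocks.} Order the index set $\sigma\cup\{i\}$ as the elements of $\sigma\cap\tau\setminus\{i\}$, then $i$, then the elements of $\sigma\cap\omega$. Apply the same permutation to rows and columns of $M=((I-W_{\sigma\cup\{i\}})_i;\theta\mathbbm{1})$. A simultaneous permutation conjugates $M$ by a permutation matrix, so it leaves $\det M$ unchanged. The replaced column stays attached to the index $i$, and it now sits at position $m=|\rho|$.

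In this ordering $M$ has the overlapping block form of the lemma, with the following blocks:
\begin{itemize}
\item $A$ is the principal submatrix on $\rho$. It equals $((I-W_\rho)_i;\theta\mathbbm{1}_\rho)$, so $\det A = s_i^{\sigma\cap\tau}$.
\item $D$ is the principal submatrix on $\eta$. It equals $((I-W_\eta)_i;\theta\mathbbm{1}_\eta)$, so $\det D = s_i^{\sigma\cap\omega}$.
\item The shared corner entry is the $(i,i)$ entry of the replaced column, namely $a_{m,m}=\theta$, which is nonzero.
\end{itemize}

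\textbf{Step 2: the rank-1 hypothesis.} The block $B$ has rows indexed by $\rho\subseteq\tau$ and columns indexed by $\eta$. Its column at $i$ is $\theta\mathbbm{1}_\rho$. Its column at $j\in\sigma\cap\omega$ has entries $(I-W)_{kj} = -W_{kj} = -\gamma_j$ for $k\in\rho$, since $j\neq k$ and $\tau$ is simply-embedded. So every column of $B$ is a multiple of $\mathbbm{1}_\rho$, and $B$ has rank 1. Lemma~\ref{lemma:det-linear-algebra} then gives
\[
s_i^\sigma = \det M = \frac{1}{\theta}\, s_i^{\sigma\cap\tau}\, s_i^{\sigma\cap\omega}.
\]

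\textbf{Step 3: $\alpha$ does not depend on $i\in\tau$.} Compare the matrices $((I-W_\eta)_i;\theta\mathbbm{1}_\eta)$ for different $i,i'\in\tau$, listing $i$ (respectively $i'$) in the first position. The column at that index is $\theta\mathbbm{1}$ in both cases. The row at that index has entries $-W_{ij}=-\gamma_j$ for $j\in\sigma\cap\omega$, the same for every $i\in\tau$ by simple embedding. The remaining block is $(I-W)_{\sigma\cap\omega}$, which does not involve $i$ at all. Hence the two matrices are literally equal, $s_i^{\sigma\cap\omega}=s_{i'}^{\sigma\cap\omega}$, and $\alpha=\frac1\theta s_i^{\sigma\cap\omega}$ is common to all $i\in\tau$.

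\textbf{Degenerate cases and main obstacle.} If $\sigma\cap\omega=\emptyset$, then $D=[\theta]$ and $\alpha=1$. If $\sigma\cap\tau=\emptyset$, then $A=[\theta]$. In both cases the formula still holds, consistent with the convention $s_i^\emptyset=\theta$. I expect the main obstacle to be bookkeeping rather than substance: making sure the permutation carries the replaced column along with the index $i$, so that the corner entry is indeed $\theta$ and the factors are exactly the $s_i$ quantities with the correct sign.
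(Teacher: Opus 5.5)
Your proposal is correct and follows the paper's own proof. You reorder $\sigma\cup\{i\}$ so that $i$ is the overlap index, note that the block with rows in $(\sigma\cap\tau)\cup\{i\}$ and columns in $(\sigma\cap\omega)\cup\{i\}$ has constant columns ($\theta$ and $-\gamma_j$), and apply Lemma~\ref{lemma:det-linear-algebra} with corner entry $\theta$; your Step 3 spells out the paper's one-line remark that the second factor does not depend on $i\in\tau$, and your entries $-\gamma_j$ carry the correct sign where the paper's display writes $\gamma_j$.
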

	\begin{proof}
		Since $\tau$ is simply-embedded with uniform input, $W$ has the form
		$$W=\left[\begin{array}{c|c}
			W_\tau & \mathbbm{1}_\tau \bm{\gamma}^\intercal \\
			\hline
			* & W_\omega
		\end{array}\right],$$ where $\mathbbm{1}_\tau$ is a column vector of all ones indexed by $\tau$, and $\bm{\gamma}^\intercal = \left[\begin{array}{lll}
			\gamma_{1} & \cdots & \gamma_{m} 
		\end{array}\right]$ so that
		$$\mathbbm{1}_\tau \bm{\gamma}^\intercal=\left[\begin{array}{lll}
			\vert &  & \vert \\
			\gamma_1 & \cdots & \gamma_m \\
			\vert &  &  \vert
		\end{array}\right].$$
		Let $i \in \tau$, then
		\begin{align*}
			s_{i}^{\sigma} & = \det\left(\left(I-W_{\sigma\cup\{i\}}\right)_{i};\theta\mathbbm{1}\right) 
			=\det\left[\begin{array}{c|c|ccc}
				& \vert & \gamma_1 & & \gamma_m \\
				I-W_{\sigma\cap\tau} & \theta & \vdots & \cdots & \vdots \\
				& \vert & \gamma_1 &  & \gamma_m \\
				\cline{1-2}
				\rotvert \, (I-W_{\sigma\cap\tau})_i \, \rotvert & \theta & \gamma_1 & \cdots & \gamma_m\\
				\hline
				& \vert & & & \\
				* & \theta& & I-W_{\sigma\cap\omega} & \\
				& \vert & & & \\
			\end{array}\right]
		\end{align*}
		Since the upper right block,  $\mathbbm{1}_{\sigma \cap \tau} \left[\begin{array}{cccc}
			\theta & \gamma_{1} & \cdots & \gamma_m
		\end{array}\right]$, is rank 1 and $\theta \neq 0$, by Lemma \ref{lemma:det-linear-algebra}, 
		\begin{align*}
			s_{i}^{\sigma} & = \frac{1}{\theta} \det \left[\begin{array}{c|c}
				\multirow{ 2}{*}{$I-W_{(\sigma \cap \tau)\setminus\{i\}}$}  & \vert \\
				& \theta \\
				\cline{1-1}\rotvert (I-W_{\sigma\cap\tau})_i \rotvert & \vert \\
			\end{array}\right] 
			\det\left[\begin{array}{c|ccc}
				\theta & \gamma_1 & \cdots & \gamma_n\\
				\hline
				\vert & & & \\
				\theta & & I-W_{\sigma\cap\omega}& \\
				\vert & & & \\
			\end{array}\right] \\
			& =\frac{1}{\theta}  \det((I-W_{\sigma\cap\tau})_i;\theta\mathbbm{1}) \det((I-W_{(\sigma\cap\omega) \cup \{i\}})_i;\theta\mathbbm{1})	\\
			& = \frac{1}{\theta}s_i^{\sigma\cap\tau} s^{\sigma\cap\omega}_i
		\end{align*}
		note that the last matrix does not depend on $i$.	Making $\alpha = \frac{s_i^{\sigma\cap\omega}}{\theta}$, the result follows.
	\end{proof}
	
	We have thus showed that the $s_i^\sigma$ values factorize for simply-embedded components. When each component is simply-embedded, the signs of the $s_i$'s are inherited to and from each component:
	\begin{lemma}\label{lemma:simply-embedded-inheritance}
		Let $(W,\theta)$ be a low-rank gluing of $(W_1,\theta), \dots, (W_N,\theta)$, with $W_{\tau_i,\tau_i} = W_i$. For any $\sigma \subseteq [n]$, let $\sigma_\ell \od \sigma \cap \tau_\ell$.  Then for any $\sigma_i \neq \emptyset$,
		$$\sgn s_i^\sigma = \sgn s^\sigma_j \quad \Leftrightarrow \quad \sgn s_i^{\sigma_\ell} = \sgn s_j^{\sigma_\ell}, \quad\text{for all $i,j \in \tau_\ell$}.$$
	\end{lemma}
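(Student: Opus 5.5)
The plan is to apply Theorem \ref{thm:simply-embedded-si-factorization} directly to the component $\tau_\ell$. In a low-rank gluing every $\tau_\ell$ is simply-embedded in $(W,\theta)$: for $j\notin\tau_\ell$, say $j\in\tau_m$, the column $W_{\tau_\ell,j}$ is constant because $W_{\tau_\ell,\tau_m}=\mathbbm{1}_{\tau_\ell}\bm{\gamma}^\intercal$. So I set $\tau=\tau_\ell$ and $\omega=[n]\setminus\tau_\ell$. For every $i\in\tau_\ell$ the theorem then gives
\[
s_i^\sigma=\alpha\, s_i^{\sigma\cap\tau_\ell}=\alpha\, s_i^{\sigma_\ell},\qquad \alpha=\tfrac{1}{\theta}s_i^{\sigma\cap\omega}.
\]
The key point is that $\alpha$ does not depend on $i\in\tau_\ell$. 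In the proof of the factorization, the second determinant has first column $\theta\mathbbm{1}$ and first row $(\theta,\gamma_1,\dots,\gamma_m)$. The $\gamma$'s are the common entries $W_{i,j}$ for $i\in\tau_\ell$, $j\in\sigma\cap\omega$, so this determinant is the same for every $i\in\tau_\ell$.

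Next I would make sure $\alpha\neq0$; this is the only real obstacle. The determinant defining $\alpha$ is $s_i^{\sigma\cap\omega}$ computed with $i$ adjoined to $\sigma\cap\omega$. It is a Cramer's determinant for the support $(\sigma\cap\omega)\cup\{i\}$ with uniform positive input $\theta$. Hence it is nonzero by item (c) of Definition \ref{def:nondegenerate}, since all networks are assumed nondegenerate. If $\sigma\cap\omega=\emptyset$, it reduces to $\det[\theta]=\theta>0$, which is also nonzero. This is the step I would state explicitly, since a vanishing $\alpha$ would make both sides degenerate.

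With $\alpha\neq0$ and independent of $i$, the conclusion is immediate. For $i,j\in\tau_\ell$ we get $\sgn s_i^\sigma=\sgn(\alpha)\sgn s_i^{\sigma_\ell}$ and $\sgn s_j^\sigma=\sgn(\alpha)\sgn s_j^{\sigma_\ell}$. Multiplying by the common nonzero sign $\sgn\alpha$ preserves equality of signs in both directions. I would also note two further points. The hypothesis $\sigma_\ell\neq\emptyset$ only serves to make the statement meaningful, and the argument itself does not use it. The quantities $s_i^{\sigma_\ell}$ depend only on $W_{\tau_\ell,\tau_\ell}=W_\ell$, so they are the local quantities of $(W_\ell,\theta)$; this is what the later proofs will exploit.
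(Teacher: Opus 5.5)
Your proposal is correct and follows the paper's proof: apply Theorem~\ref{thm:simply-embedded-si-factorization} with $\tau=\tau_\ell$ to get $s_i^\sigma=\alpha\, s_i^{\sigma_\ell}$ with a common factor $\alpha$, then cancel $\sgn\alpha$. Your explicit check that $\alpha\neq 0$ via condition (c) of Definition~\ref{def:nondegenerate} (or $\alpha=1$ when $\sigma\cap\omega=\emptyset$) is a worthwhile addition, since the paper's proof relies on this without stating it.
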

	\begin{proof}
		Since each $\tau_\ell$ is simply-embedded in $(W,\theta)$, and input is assumed to be uniform across all components, by Theorem \ref{thm:simply-embedded-si-factorization} we have $s_i^\sigma = \alpha s_i^{\sigma_\ell}$ for all $i \in \tau_\ell$, where $\alpha = \frac{1}{\theta}s_i^{\sigma\setminus\tau_\ell}$ has the same value for every $i \in \tau_\ell$.
		Hence, for all $i,j \in \tau_\ell$, we have that $\sgn s_i^\sigma = \sgn s_j^\sigma$ if and only if  $\sgn \alpha s_i^{\sigma_\ell} = \sgn \alpha s_j^{\sigma_\ell}$  if and only if $\sgn s_i^{\sigma_\ell} = \sgn s_j^{\sigma_\ell}$.	
	\end{proof}
	
	This sign-inheritance is the final piece needed to prove Theorem \ref{thm:low-rank-gluing-menu}, as it allows for a straightforward application of Theorem \ref{thm:sgn-conditions} to check for fixed point survival.

	\begingroup
	\def\thetheorem{\ref{thm:low-rank-gluing-menu}} 
	\begin{theorem}[fixed point compositionality]
		Let $(W,\theta)$ be a low-rank gluing of TLNs $(W_1,\theta), \dots, (W_N,\theta)$. Then, for each $\sigma \in \FP(W,\theta)$, $\sigma$ has the form: 
		$$\sigma = \bigcup_{\ell=1}^{N} \sigma_\ell, \text{ where } \sigma_\ell \in \FP(W_\ell,\theta) \cup \{\emptyset\}.$$ That is, the fixed point supports of $(W,\theta)$ are unions of the fixed point supports of component networks, at most one per component.
	\end{theorem}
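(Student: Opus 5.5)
Fix $\sigma \in \FP(W,\theta)$ and set $\sigma_\ell \od \sigma \cap \tau_\ell$ for each $\ell \in [N]$. Since the $\tau_\ell$ partition $[n]$, we have $\sigma = \bigcup_\ell \sigma_\ell$. It therefore suffices to show that every nonempty $\sigma_\ell$ lies in $\FP(W_\ell,\theta)$; empty pieces are allowed by the statement. The plan is to move the sign conditions of Theorem \ref{thm:sgn-conditions} from the global network to the component, using the factorization of Theorem \ref{thm:simply-embedded-si-factorization} in the form recorded in Lemma \ref{lemma:simply-embedded-inheritance}.

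First I would note that the quantities $s_i^{\sigma_\ell}$ computed in $(W,\theta)$ for $i \in \tau_\ell$ coincide with those computed in the isolated component $(W_\ell,\theta)$. Indeed, $s_i^{\sigma_\ell}$ only involves the submatrix $W_{\sigma_\ell \cup \{i\}}$, which lies inside $W_{\tau_\ell,\tau_\ell} = W_\ell$. The input entries are $\theta$ in both networks. So the sign data of $\sigma_\ell$ relative to the nodes of $\tau_\ell$ is intrinsic to the component.

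Next, apply the global sign conditions. Since $\sigma \in \FP(W,\theta)$, Theorem \ref{thm:sgn-conditions} gives
\[
\sgn s_i^\sigma = \sgn s_j^\sigma = -\sgn s_k^\sigma \quad \text{for all } i,j\in\sigma,\ k\notin\sigma.
\]
Restrict to $i,j \in \sigma_\ell$ and $k \in \tau_\ell \setminus \sigma_\ell$. By Theorem \ref{thm:simply-embedded-si-factorization}, for every $i \in \tau_\ell$ we have $s_i^\sigma = \alpha\, s_i^{\sigma_\ell}$ with a single common factor $\alpha = \frac{1}{\theta} s_i^{\sigma\setminus\tau_\ell}$. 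The key observation is that $\alpha \neq 0$: for $i \in \sigma_\ell \subseteq \sigma$, nondegeneracy (the Cramer's determinant condition of Definition \ref{def:nondegenerate}) gives $s_i^\sigma \neq 0$, which forces $\alpha \neq 0$. Multiplying by the nonzero constant $\alpha$ preserves equalities and oppositions of signs across all $i \in \tau_\ell$. Hence
\[
\sgn s_i^{\sigma_\ell} = \sgn s_j^{\sigma_\ell} = -\sgn s_k^{\sigma_\ell} \quad \text{for all } i,j\in\sigma_\ell,\ k\in\tau_\ell\setminus\sigma_\ell.
\]

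Finally, applying Theorem \ref{thm:sgn-conditions} to the TLN $(W_\ell,\theta)$ on the node set $\tau_\ell$ yields $\sigma_\ell \in \FP(W_\ell,\theta)$, and the claim follows.

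The main obstacle is the middle step. Lemma \ref{lemma:simply-embedded-inheritance} as stated only transfers sign equalities, whereas we also need the oppositions with the outside nodes $k \in \tau_\ell\setminus\sigma_\ell$. This requires checking that the same constant $\alpha$ applies to those $k$ as well, i.e.\ that Theorem \ref{thm:simply-embedded-si-factorization} holds for all $i\in\tau$, not only $i\in\sigma$, which it does as stated. It also requires justifying $\alpha \neq 0$ via nondegeneracy, as done above.
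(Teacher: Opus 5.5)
Your proof is correct and takes the same route as the paper. You restrict the global sign conditions of Theorem \ref{thm:sgn-conditions} to $\tau_\ell$, transfer them to $\sigma_\ell$ through the common factor $\alpha$ from Theorem \ref{thm:simply-embedded-si-factorization} (which the paper packages as Lemma \ref{lemma:simply-embedded-inheritance}), and then reapply the sign conditions in $(W_\ell,\theta)$. Your explicit checks fill in details the paper leaves implicit: that $\alpha\neq 0$ by nondegeneracy, and that the factorization also covers $k\in\tau_\ell\setminus\sigma_\ell$, so the sign oppositions transfer and not only the equalities.
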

	\addtocounter{theorem}{-1}
	\endgroup
	
	\begin{proof}
		Let  $W_{\tau_i,\tau_i} = W_i$ and $\sigma_\ell \od \sigma \cap \tau_\ell$. Then, for $\sigma \in \FP(W,\theta)$, we have
		$$\sgn s_i^\sigma = \sgn s_j^\sigma = -\sgn s_k^\sigma$$
		for any $i,j \in \sigma_\ell$ and $k \in \tau_\ell\setminus\sigma_\ell$, by Theorem~\ref{thm:sgn-conditions} (sign conditions). Then by Lemma \ref{lemma:simply-embedded-inheritance}, we see that whenever $\sigma_\ell \neq \emptyset$,
		$$\sgn s_i^{\sigma_\ell} = \sgn s_j^{\sigma_\ell} = -\sgn_k^{\sigma_\ell},$$  and so $\sigma_\ell$ satisfies the sign conditions in $(W_{\tau_\ell},\theta)$. Thus $\sigma_\ell \in \FP(W_{\tau_\ell},\theta)$ for every nonempty $\sigma_\ell$.
	\end{proof}

	While Theorem \ref{thm:low-rank-gluing-menu} shows that a global fixed point must be a union of component fixed points, it does not say which unions are indeed global fixed points. Next, we prove a `weak reverse' of Theorem \ref{thm:low-rank-gluing-menu}, showing that if a given support is a union of component fixed points, and in addition it is permitted, then this fixed point it will be a global fixed point of the network. 
	\begin{lemma}\label{lemma:weak-reverse-menu}
		Let $(W,\theta)$ be a low-rank gluing of $(W_1,\theta), \dots, (W_N,\theta)$, with $W_{\tau_i,\tau_i} = W_i$, and let $\sigma_i = \sigma \cap \tau_i$. If $\sigma \in \FP(W_\sigma, \theta)$ and $\sigma_i \in \FP(W_{\tau_i}, \theta)$, then $ \sigma \in \FP(W, \theta)$.
	\end{lemma}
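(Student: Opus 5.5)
We verify the full sign conditions of Theorem~\ref{thm:sgn-conditions} for $\sigma$ in $(W,\theta)$. Since $\sigma \in \FP(W_\sigma,\theta)$, $\sigma$ is permitted: $\sgn s_i^\sigma = \idx(\sigma)$ for every $i \in \sigma$. By Theorem~\ref{thm:sgn-conditions}, it remains to show that $\sgn s_k^\sigma = -\idx(\sigma)$ for every $k \notin \sigma$.

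Fix $k \notin \sigma$ and let $\ell$ be the component with $k \in \tau_\ell$, so $k \in \tau_\ell \setminus \sigma_\ell$. There are two cases.

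\emph{Case $\sigma_\ell \neq \emptyset$.} Pick any $i \in \sigma_\ell$. Since $\sigma_\ell \in \FP(W_{\tau_\ell},\theta)$, Theorem~\ref{thm:sgn-conditions} applied inside the component gives $\sgn s_k^{\sigma_\ell} = -\sgn s_i^{\sigma_\ell}$. Here $s^{\sigma_\ell}$ is computed from $W_{\tau_\ell}$, which agrees with the computation in $W$ because only indices in $\sigma_\ell \cup \{k\} \subseteq \tau_\ell$ are involved. By Theorem~\ref{thm:simply-embedded-si-factorization}, $s_k^\sigma = \alpha s_k^{\sigma_\ell}$ and $s_i^\sigma = \alpha s_i^{\sigma_\ell}$ with the same $\alpha = \frac{1}{\theta} s^{\sigma\setminus\tau_\ell}_{\cdot}$, since $\alpha$ does not depend on the index within $\tau_\ell$. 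We need $\alpha \neq 0$. This follows because $s_i^\sigma \neq 0$: $i \in \sigma$ and $\sigma$ is permitted, so $\sgn s_i^\sigma = \idx(\sigma) \neq 0$ by nondegeneracy. Hence $\sgn s_k^\sigma = -\sgn s_i^\sigma = -\idx(\sigma)$.

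\emph{Case $\sigma_\ell = \emptyset$.} This is the main obstacle, since the component offers no local fixed point to compare against. With $\sigma \cap \tau_\ell = \emptyset$, Theorem~\ref{thm:simply-embedded-si-factorization} gives $s_k^\sigma = \frac{1}{\theta} s_k^{\sigma} s_k^{\emptyset}$, which is circular. The plan is instead to compute $s_k^\sigma$ directly using the low-rank structure. Since $k \notin \sigma$, the row of $k$ in $W_{\sigma\cup\{k\}}$ restricted to $\sigma$ consists of blocks $\gamma^{(\ell)}$ coming from the constant columns of $W_{\tau_\ell,\tau_j}$. For every $j \neq \ell$, this row equals the row of any $i \in \tau_\ell$, but we have no such $i$ in $\sigma$. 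So the first attempt is to use the constant-column structure in the other direction.

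Pick some component $m$ with $\sigma_m \neq \emptyset$ and some $i \in \sigma_m$. Then the column of $k$ in $W_{\sigma\cup\{k\}}$ restricted to $\sigma_m$ is constant. So $\{k\}\cup(\sigma \setminus \sigma_m)$ versus $\sigma_m$ is simply-embedded in $W_{\sigma \cup\{k\}}$, and we apply Theorem~\ref{thm:simply-embedded-si-factorization} to the smaller network $W_{\sigma\cup\{k\}}$ with $\tau = \sigma_m$. This factors $s_i^{\sigma}$ and relates $s_i^{\sigma\cup\{k\}}$-type quantities to the local ones. Combined with the identity that $k \notin \sigma$ is equivalent to $\sigma \in \FP(W_{\sigma\cup k},\theta)$ via the parity result (Theorem~\ref{thm:parity}) on the network $W_{\sigma\cup\{k\}}$, we count the fixed points there. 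By Theorem~\ref{thm:low-rank-gluing-menu} applied to this subnetwork, which is itself a low-rank gluing with $\{k\}$ as a component, its fixed points are unions of local ones. Comparing the index sums should then force $\sgn s_k^\sigma = -\idx(\sigma)$. If this bookkeeping fails, the fallback is an induction on the number of components $N$, with the single-node component $\{k\}$ handled by the expansion of Lemma~\ref{lemma:det-linear-algebra}.

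Finally, Corollary~\ref{cor:inheritance}(d) assembles the cases: $\sigma \in \FP(W_{\sigma\cup k},\theta)$ for all $k \notin \sigma$, hence $\sigma \in \FP(W,\theta)$.
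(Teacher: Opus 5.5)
Your first case is the paper's argument. Theorem~\ref{thm:simply-embedded-si-factorization} gives $s^\sigma_k=\alpha s^{\sigma_\ell}_k$ and $s^\sigma_i=\alpha s^{\sigma_\ell}_i$ with a common $\alpha$, so the local sign conditions for $\sigma_\ell$ transfer to $\sigma$. Your check that $\alpha\neq 0$ (because $s_i^\sigma\neq 0$ by nondegeneracy) is a point the paper leaves implicit in Lemma~\ref{lemma:simply-embedded-inheritance}.

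The gap is your second case. As written it is not a proof: ``should then force'' and the fallback induction are placeholders. More importantly, it cannot be completed, because if some $\sigma_\ell$ were allowed to be empty the conclusion would be false. Take two single-node components with $W_{12}=W_{21}=-1+\varepsilon$, $0<\varepsilon<1$; this is a $2$-clique and trivially a low-rank gluing. Let $\sigma=\{1\}$. Then $\sigma\in\FP(W_\sigma,\theta)$ and $\sigma_1=\{1\}\in\FP(W_1,\theta)$. But at the candidate fixed point $y_2=\varepsilon\theta>0$, so $\{1\}\notin\FP(W,\theta)$ (cf.\ Lemma~\ref{lemma:sinks}). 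Now replace the couplings by $-1-\delta$ (an independent set): the local data are identical, yet $\{1\}$ survives. So no parity or index bookkeeping on $W_{\sigma\cup\{k\}}$ can determine $\sgn s_k^\sigma$ when $\sigma\cap\tau_\ell=\emptyset$; the answer depends on the coupling $W_{k,\sigma}$.

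The missing observation is that this case never arises under the hypothesis. The lemma requires $\sigma_i\in\FP(W_{\tau_i},\theta)$ for every $i\in[N]$. The empty set is never a fixed point support, since at $x=0$ every $\dot x_i=\theta>0$. Hence every $\sigma_i$ is nonempty, which is exactly how the paper's proof begins. After that, every $k\notin\sigma$ lies in some $\tau_\ell$ with $\sigma_\ell\neq\emptyset$, your first case covers it, and Theorem~\ref{thm:sgn-conditions} finishes. Delete the second case and replace it with this one-line remark.
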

	\begin{proof}
		Note $\sigma_i \in \FP(W_{\tau_i}, \theta)$ implies $\sigma_i \neq \emptyset$ for all $i \in [N]$, and
		$$\sgn s_j^{\sigma_i} = \sgn s_k^{\sigma_i} = -\sgn s_l^{\sigma_i} \quad \forall j,k \in \sigma_i, \ l \in \tau_i \setminus \sigma_i$$
		by sign conditions.
		
		On the other hand, $\sigma \in \FP(W_\sigma, \theta)$ implies $\sgn s_j^{\sigma} = \sgn s_k^{\sigma}$ for all $j,k \in \sigma$. By Lemma \ref{lemma:simply-embedded-inheritance}, we get:
		$$\sgn s_l^{\sigma} = -\sgn s_j^{\sigma} \quad \forall l \in [n] \setminus \sigma, \ j \in \sigma.$$
		
		By Theorem~\ref{thm:sgn-conditions}, it follows that $\sigma \in \FP(W, \theta)$.
	\end{proof}
	
	Putting these two together, we get a weak characterization of the fixed points of a low-rank gluing:
	\begin{lemma}\label{lemma:weak-fixed-point-partition}
		Let $(W,\theta)$ be a low-rank gluing of $(W_1,\theta), \dots, (W_N,\theta)$, with $W_{\tau_i,\tau_i} = W_i$, and let $\sigma_i = \sigma \cap \tau_i$. Given $\sigma \subseteq [n]$ such that $\sigma_i \neq \emptyset$ for all $i \in [N]$, we have:
		$$
		\sigma \in \FP(W, \theta) \quad \Leftrightarrow \quad \sigma \in \FP(W_\sigma, \theta) \text{ and } \sigma_i \in \FP(W_{\tau_i}, \theta) \quad \forall i \in [N].
		$$
	\end{lemma}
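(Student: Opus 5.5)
The two directions come from results already in hand: the forward direction from Theorem~\ref{thm:low-rank-gluing-menu} together with Corollary~\ref{cor:inheritance}, and the reverse direction from Lemma~\ref{lemma:weak-reverse-menu}. The hypothesis $\sigma_i\neq\emptyset$ for every $i\in[N]$ is what removes the ``$\cup\{\emptyset\}$'' slack in Theorem~\ref{thm:low-rank-gluing-menu}. It also guarantees that every component contains at least one active index, which is needed to pass sign information back out to the nodes of each $\tau_i$.

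\emph{Forward direction.} Assume $\sigma\in\FP(W,\theta)$.
\begin{enumerate}
\item Apply Corollary~\ref{cor:inheritance} ((a)$\Rightarrow$(b)) with $\tau=\sigma$. This gives $\sigma\in\FP(W_\sigma,\theta)$.
\item Apply Theorem~\ref{thm:low-rank-gluing-menu}. This writes $\sigma=\bigcup_\ell \sigma'_\ell$ with $\sigma'_\ell\in\FP(W_\ell,\theta)\cup\{\emptyset\}$ and $\sigma'_\ell\subseteq\tau_\ell$.
\item Since the $\tau_\ell$ partition $[n]$, we have $\sigma'_\ell=\sigma\cap\tau_\ell=\sigma_\ell$.
\item Each $\sigma_\ell$ is nonempty by hypothesis, so $\sigma_\ell\in\FP(W_{\tau_\ell},\theta)$.
\end{enumerate}
One can also bypass the theorem: the sign conditions of Theorem~\ref{thm:sgn-conditions} for $\sigma$, restricted to indices in $\tau_\ell$, pass to $\sigma_\ell$ through the factorization $s_i^\sigma=\alpha s_i^{\sigma_\ell}$ of Theorem~\ref{thm:simply-embedded-si-factorization}. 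This uses $\theta\neq0$ and the nondegeneracy that makes $\alpha\neq 0$.

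\emph{Reverse direction.} This is precisely Lemma~\ref{lemma:weak-reverse-menu}, whose hypotheses are $\sigma\in\FP(W_\sigma,\theta)$ and $\sigma_i\in\FP(W_{\tau_i},\theta)$ for all $i$. The sign chain behind that lemma runs as follows.
\begin{enumerate}
\item Fix a component $\tau_i$ and take $j\in\sigma_i$ and $l\in\tau_i\setminus\sigma_i$.
\item Because $\sigma_i$ is a local fixed point, $\sgn s_l^{\sigma_i}=-\sgn s_j^{\sigma_i}$.
\item Multiplying both sides by the common factor $\alpha=s_j^{\sigma\setminus\tau_i}/\theta$ gives $\sgn s_l^{\sigma}=-\sgn s_j^{\sigma}$.
\item Permittedness of $\sigma$ makes $\sgn s_j^\sigma=\idx(\sigma)$ for all $j\in\sigma$. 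Hence every $l\notin\sigma$ has the opposite sign, and Theorem~\ref{thm:sgn-conditions} yields $\sigma\in\FP(W,\theta)$.
\end{enumerate}

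\emph{Main obstacle.} The delicate point is step 4 of the reverse chain. It requires some $j\in\sigma_i$ in the same component as $l$; this is exactly where $\sigma_i\neq\emptyset$ for all $i$ is used, since otherwise nodes of an empty component carry no sign comparison. The chain also needs $\alpha\neq0$, which I would justify from nondegeneracy: $s_i^\sigma\neq 0$ forces both factors in Theorem~\ref{thm:simply-embedded-si-factorization} to be nonzero. Everything else is bookkeeping.
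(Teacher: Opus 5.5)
Your proposal is correct and follows the paper's route. The paper's proof simply cites Theorem~\ref{thm:low-rank-gluing-menu} for the forward direction and Lemma~\ref{lemma:weak-reverse-menu} for the reverse, and your argument does the same while filling in two details that one-line proof leaves implicit: the appeal to Corollary~\ref{cor:inheritance} to obtain $\sigma\in\FP(W_\sigma,\theta)$, and the use of nondegeneracy to guarantee $\alpha\neq 0$ in the sign-transfer step.
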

	\begin{proof}
		Follows from Theorem \ref{thm:low-rank-gluing-menu} and Lemma \ref{lemma:weak-reverse-menu}.
	\end{proof}
	
	In the next section, we get stronger results for the characterization of the fixed points of rank-1 gluings.
	
	\subsection{Rank-1 gluings and proof of Theorem \ref{thm:rank-1-gluings}}\label{sec:rank-1-gluing}
	
	Recall from Definitions \ref{def:low-rank-gluings} and \ref{def:simply-embedded-component} that a rank-1 gluing of the component networks $(W_{\tau_1},\theta), \dots, (W_{\tau_N},\theta)$, with $W_{\tau_i,\tau_i} = W_i$, defines a partition $\{\tau_1,\cdots,\tau_N\}$ of the vertex set $[n]$ in which, in addition to each $\tau_\ell$ being simply-embedded in $(W,\theta)$, its complement $[n]\setminus\tau_\ell$ is also simply-embedded. 
	
	This stricter structure now allows for a full factorization of the $s_i^\sigma$ quantities. This complete factorization will allow us to prove Theorem \ref{thm:rank-1-gluings}, which explicitly characterizes the fixed points of the glued network in terms of component fixed points that either survive or die in the full network. To that end, we introduce notation for the sets of surviving and dying fixed point supports for  any subnetwork $(W_\sigma,\theta)$ of a given TLN $(W,\theta)$, as
	$$S_\sigma \od \FP(W_\sigma,\theta) \cap \FP(W,\theta), \quad \text{and} \quad D_\sigma \od \FP(W_\sigma,\theta)\setminus S_\sigma,$$ respectively.
	
	We begin, as we did in the last section, by proving a factorization lemma for the $s_i^{\sigma}$ quantities. As before, this will be the main tool for proving the full characterization via Theorem \ref{thm:sgn-conditions}.
	
	\begin{lemma}\label{lemma:full-factorization}
		Let $(W,\theta)$ be a rank-1 gluing of $(W_1,\theta), \dots, (W_N,\theta)$, with $W_{\tau_i,\tau_i} = W_i$. For any $\sigma \subseteq [n]$, denote $\sigma_\ell \od \sigma \cap \tau_\ell$, and $\sigma_{\ell_1\dots \ell_k} \od \sigma_{\ell_1} \cup \dots \cup \sigma_{\ell_k}$ and let $I = \{\ell \in [N]~|~ \sigma_\ell \neq \emptyset\}$. Then for every $i\in [n]$, we have 
		$$s_i^\sigma = \frac{1}{\theta^{|I|-1}}\prod_{\ell\in I} s_i^{\sigma_\ell},$$
		where $s_i^{\sigma_\ell}$ has the same value for every $i \in [n]\setminus \tau_\ell$. Moreover, for any $\sigma_\ell \in \FP(W_{\tau_\ell},  \theta)$ and $j \in \tau_\ell$:
		$$\sgn s_j^{\sigma_\ell} = \begin{cases}
			\phantom{-}\idx(\sigma_\ell) & \text{if } j \in \sigma_\ell \\
			-\idx(\sigma_\ell) & \text{if } j \in \tau_\ell\setminus \sigma_\ell
		\end{cases}$$
		while for any $k\notin \tau_\ell$,
		$$\sgn s_k^{\sigma_\ell} = \begin{cases}
			-\idx(\sigma_\ell) & \text{if } \sigma_\ell \in S_{\tau_\ell}\\
			\phantom{-}\idx(\sigma_\ell) & \text{if } \sigma_\ell \in D_{\tau_\ell}.
		\end{cases}$$	
	\end{lemma}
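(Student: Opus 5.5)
The factorization splits into three claims, and I prove each with a tool already in the paper.

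\textbf{Step 1: the product formula.} I induct on $|I|$, peeling off one component at a time with Theorem~\ref{thm:simply-embedded-si-factorization}. If $|I|\le 1$ the formula is trivial; for $I=\emptyset$ both sides are $\theta$ under the convention $s_i^\emptyset=\theta$. Now fix $\ell\in I$ and set $\omega=[n]\setminus\tau_\ell$. There are two cases according to where $i$ lies.
\begin{itemize}
\item If $i\in\tau_\ell$, simple-embeddedness of $\tau_\ell$ gives $s_i^\sigma=\frac1\theta s_i^{\sigma_\ell}s_i^{\sigma\cap\omega}$ directly.
\item If $i\in\omega$, I use that in a rank-1 gluing $\omega$ is itself simply-embedded. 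Applying Theorem~\ref{thm:simply-embedded-si-factorization} with the roles of $\tau$ and $\omega$ swapped gives $s_i^\sigma=\frac1\theta s_i^{\sigma\cap\omega}s_i^{\sigma_\ell}$.
\end{itemize}
In either case $s_i^\sigma=\frac1\theta s_i^{\sigma_\ell}s_i^{\sigma\setminus\tau_\ell}$ for every $i\in[n]$.

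Next I check that $\sigma\setminus\tau_\ell$, viewed inside $W$, still carries the rank-1 gluing structure for the remaining components. The inductive hypothesis should be applied to $s_i^{\sigma\setminus\tau_\ell}$ computed in the full $W$. This quantity only involves the rows and columns of $\sigma\setminus\tau_\ell$ together with $i$. The off-diagonal structure $W_{([n]\setminus\tau_m),\tau_m}=\mathbbm{1}\bm\gamma^\intercal$ restricts to any index subset, so the restricted matrix is again a rank-1 gluing. Induction then gives the product with the factor $\theta^{-(|I|-1)}$.

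\textbf{Step 2: $s_i^{\sigma_\ell}$ is constant for $i\notin\tau_\ell$.} Write $s_i^{\sigma_\ell}=\det((I-W_{\sigma_\ell\cup\{i\}})_i;\theta\mathbbm{1})$. The row indexed by $i$ is $(-\gamma_j)_{j\in\sigma_\ell}$, since $\tau_\ell$ projects identically to everything outside it. In that row the $i$-th column entry is replaced by $\theta$. The $i$-th column is entirely $\theta$, and the $\sigma_\ell$ block is $I-W_{\sigma_\ell}$. None of these entries depends on which $i\notin\tau_\ell$ is chosen, so the value is the same for all such $i$.

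\textbf{Step 3: the sign statements.} The first display follows from Theorem~\ref{thm:sgn-conditions} applied to $\sigma_\ell\in\FP(W_{\tau_\ell},\theta)$, because the $s_j^{\sigma_\ell}$ for $j\in\tau_\ell$ involve only indices in $\tau_\ell$.

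For $k\notin\tau_\ell$, Corollary~\ref{cor:inheritance} says $\sigma_\ell\in\FP(W,\theta)$ iff it survives the addition of every node. Nodes in $\tau_\ell$ already pass, so survival is decided by nodes outside $\tau_\ell$. By Theorem~\ref{thm:sgn-conditions} this requires $\sgn s_k^{\sigma_\ell}=-\idx(\sigma_\ell)$ for each such $k$.

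By Step 2 all these $s_k^{\sigma_\ell}$ share one value, which is nonzero by nondegeneracy. So either every $k$ has the opposite sign, giving $\sigma_\ell\in S_{\tau_\ell}$, or every $k$ has sign $\idx(\sigma_\ell)$, giving $\sigma_\ell\in D_{\tau_\ell}$. This dichotomy is exactly the second display.

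The main obstacle is the bookkeeping in Step 1. The key point is that the factor $s_i^{\sigma\setminus\tau_\ell}$, for $i$ in some other component, is still a determinant in a rank-1 gluing. Only then can the induction continue, and only then are the powers of $\theta$ counted correctly.
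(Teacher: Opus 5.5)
Your proposal is correct and follows essentially the paper's proof. You peel off one component at a time with Theorem~\ref{thm:simply-embedded-si-factorization}, using that both $\tau_\ell$ and $[n]\setminus\tau_\ell$ are simply-embedded so the factorization holds for every $i\in[n]$. You handle empty components via $s_i^{\emptyset}=\theta$, and get the survive/die dichotomy from the sign conditions plus the constancy of $s_k^{\sigma_\ell}$ for $k\notin\tau_\ell$. The differences are cosmetic: you induct on $|I|$ rather than iterating over all $N$ components and then discarding the empty ones, and you check constancy directly from the determinant instead of citing the theorem. Your restriction remark in Step 1 is unnecessary, because the induction on $|I|$ already takes place inside the fixed network $W$.
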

	
	\begin{proof}
		Since $\tau_1$ is simply-embedded in $[n]\setminus\tau_1$, $$s_i^\sigma = \frac{1}{\theta} s_i^{\sigma_{2\dots N}}s_i^{\sigma_1} \text{ for all } i \in \tau_1$$
		by Theorem~\ref{thm:simply-embedded-si-factorization}.  On the other hand, since $[n]\setminus\tau_1$ is also simply-embedded, we also have
		$$s_i^\sigma = \frac{1}{\theta} s_i^{\sigma_1} s_i^{\sigma_{2\dots N}} \text{ for all } i \in [n]\setminus\tau_1.$$
		Therefore, the above factorization holds for all $i \in [n]$. Similarly, since both $\tau_2$ and $[n]\setminus\tau_2$ are simply-embedded, 
		$$s_i^{\sigma_{2\dots N}}= \frac{1}{\theta} s_i^{\sigma_2} s_i^{\sigma_{3\dots N}} \text{ for all } i \in [n]$$
		by Theorem~\ref{thm:simply-embedded-si-factorization}, and so $s_i^\sigma = \frac{1}{\theta^2} s_i^{\sigma_1} s_i^{\sigma_2} s_i^{\sigma_{3\dots N}}$.  Continuing in this fashion, we see that for any $i \in [n]$,
		$$s_i^{\sigma} =\frac{1}{\theta^{N-1}} s_i^{\sigma_1}\dots s_i^{\sigma_N}.$$
		Note that if $\sigma_\ell = \emptyset$, then $s_i^{\sigma_\ell} = s_i^{\emptyset} = s_i^{\{j\}} = \theta$, and thus for all $i \in [n]$,
		$$s_i^{\sigma} = \frac{\theta^{N-|I|}}{\theta^{N-1}}\prod_{i \in I}s_i^{\sigma_\ell} = \frac{1}{\theta^{|I|-1}}\prod_{i\in I} s_i^{\sigma_\ell}.$$
		The fact that $s_i^{\sigma_\ell}$ has the same value for every $i \in [n]\setminus \tau_\ell$ is a direct consequence of Theorem \ref{thm:simply-embedded-si-factorization}.
		
		Finally, to prove the last statements about the signs of $s_i^{\sigma_\ell}$, observe that for $i \in \tau_\ell$, the values of $\sgn s_i^{\sigma_\ell}$ are fully determined by Theorem \ref{thm:sgn-conditions} (sign conditions) since $\sigma_\ell \in \FP(G|_{\tau_\ell})$ by hypothesis.   In particular, if $\sigma_\ell \in S_{\tau_\ell}$, then $\sigma_\ell$ survives the addition of every $k \notin \tau_\ell$, and so $\sgn s_k^{\sigma_\ell} = -\idx(\sigma_\ell)$ by the same theorem.  On the other hand, if $\sigma_\ell \in D_{\tau_\ell}$ then $\sigma_\ell$ dies in $G$ and so there is some $k \notin \tau_\ell$ for which $\sgn s_k^{\sigma_\ell} = \idx(\sigma_\ell)$.  But by the first part of the theorem, all the $s_k^{\sigma_\ell}$ values are identical for $k \in [n] \setminus \tau_\ell$, and thus $\sgn s_k^{\sigma_\ell} = \idx(\sigma_\ell)$ for all such $k$.
	\end{proof}
	
	We are now ready to prove Theorem \ref{thm:rank-1-gluings}, which generalizes results originally presented in \cite{curto2019a,parmelee2022a}. The proof is also a generalized version of the arguments used in those earlier works.
	
	\begingroup
	\def\thetheorem{\ref{thm:rank-1-gluings}} 
	\begin{theorem}
		Let $(W,\theta)$ be a rank-1 gluing of $(W_1,\theta), \dots, (W_N,\theta)$.  Then $\sigma \in \FP(W,\theta)$ if and only if $\sigma = \bigcup_{\ell=1}^{N} \sigma_\ell$ for $\sigma_\ell \in \FP(W_\ell,\theta) \cup \{\emptyset\}$, and either
		\begin{enumerate}
			\item none of the $\sigma_\ell$ are in $\FP(W,\theta) \cup \{ \emptyset\}$, or
			\item every $\sigma_\ell$ is in $\FP(W,\theta) \cup \{ \emptyset\}$.
		\end{enumerate}
		In other words, $\sigma \in \FP(W,\theta)$ if and only if $\sigma$ is either a union of dying fixed points, exactly one from every component or it is a union of surviving fixed points $\sigma_i$, at most one per component.
	\end{theorem}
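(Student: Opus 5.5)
The plan is to apply Theorem \ref{thm:sgn-conditions} (sign conditions) directly to $\sigma$, using the full factorization of Lemma \ref{lemma:full-factorization} to express every $\sgn s_i^\sigma$ through the local indices $\idx(\sigma_\ell)$ and the survive/die status of each $\sigma_\ell$. By Theorem \ref{thm:low-rank-gluing-menu}, any $\sigma \in \FP(W,\theta)$ already has the form $\bigcup_\ell \sigma_\ell$ with $\sigma_\ell \in \FP(W_\ell,\theta)\cup\{\emptyset\}$. So in both directions we may fix such a decomposition and let $I = \{\ell : \sigma_\ell \neq \emptyset\}$. The statement then reduces to a sign computation. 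Throughout, "$\sigma_\ell \in \FP(W,\theta)$" is read as $\sigma_\ell \in S_{\tau_\ell}$, and dying as $\sigma_\ell\in D_{\tau_\ell}$.

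To organize the computation, assign to each $\ell \in I$ a sign $\epsilon_\ell = +1$ if $\sigma_\ell \in D_{\tau_\ell}$ and $\epsilon_\ell = -1$ if $\sigma_\ell \in S_{\tau_\ell}$. Set $P = \prod_{\ell \in I} \idx(\sigma_\ell)\,\epsilon_\ell$, and take $\theta>0$ so that the prefactor $\theta^{-(|I|-1)}$ is irrelevant to signs. Lemma \ref{lemma:full-factorization} then gives the signs of $s_i^\sigma$ by cases.
\begin{itemize}
\item For $i \in \sigma_m$, $m\in I$: the factor $s_i^{\sigma_m}$ has sign $\idx(\sigma_m)$, and each other factor $\ell\in I\setminus\{m\}$ has sign $\idx(\sigma_\ell)\epsilon_\ell$. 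Hence $\sgn s_i^\sigma = P\,\epsilon_m$.
\item For $i \in \tau_m\setminus\sigma_m$ with $m \in I$: the same computation gives $\sgn s_i^\sigma = -P\,\epsilon_m$.
\item For $i \in \tau_m$ with $m\notin I$: every factor is an "outside" factor, so $\sgn s_i^\sigma = P$.
\end{itemize}

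Now read off the sign conditions. The condition $\sgn s_i^\sigma$ constant on $\sigma$ holds iff all $\epsilon_m$, $m\in I$, are equal. If so, call the common value $\epsilon$. Nodes in $\tau_m\setminus\sigma_m$ with $m\in I$ then automatically have the opposite sign $-P\epsilon$. The only remaining constraint comes from empty components $m\notin I$, which require $P = -P\epsilon$, i.e.\ $\epsilon=-1$.

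This yields both directions. If all $\sigma_\ell$ survive ($\epsilon=-1$), every condition holds, with empty components allowed. If all die ($\epsilon=+1$), the conditions hold iff there are no empty components, i.e.\ exactly one dying support from every component. A mixture of surviving and dying supports violates constancy on $\sigma$.

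The main obstacle I expect is bookkeeping rather than ideas. First, I need to confirm that the factorization of Lemma \ref{lemma:full-factorization} legitimately applies to indices $i$ in empty components; this uses $s_i^{\emptyset}=\theta$. Second, I need the "outside" sign of $s_k^{\sigma_\ell}$ to be uniform for all $k\notin\tau_\ell$, which the lemma supplies. I would also handle the convention that case (a) of Theorem \ref{thm:rank-1-gluings} means all $\sigma_\ell$ nonempty and dying, matching the empty-component constraint above.
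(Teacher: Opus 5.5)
Your proposal is correct and follows essentially the paper's own argument: combine Lemma~\ref{lemma:full-factorization} with the sign conditions. Your $P\epsilon_m$ for $i\in\sigma_m$ is exactly the paper's formula $(-1)^{|\mathcal{S}\setminus\{m\}|}\prod_{\ell\in I}\idx(\sigma_\ell)$, where $\mathcal{S}$ is the set of components with surviving supports. The only difference is presentational: your single sign table, which includes the value $P$ on empty components, gives both directions at once, whereas the paper proves $(\Rightarrow)$ by contradiction and $(\Leftarrow)$ separately.
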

	\addtocounter{theorem}{-1}
	\endgroup
	
	\begin{proof}
		Let $W_{\tau_i,\tau_i} = W_i$ and $\sigma_\ell \od \sigma \cap \tau_\ell$. Assuming without loss of generality that $\theta=1$, by Lemma \ref{lemma:full-factorization} we have that for $I \od \{i ~|~ \sigma_i \neq \emptyset \}$ and $\sigma\subseteq[n]$,
		\begin{equation}\label{eq:full_factorization}
			s_j^\sigma = \prod_{i \in I} s_j^{\sigma_i},
		\end{equation}
		with $s_j^{\sigma_i}$ being constant for all $j \in [n]\setminus\tau_i$, for each $i \in [N]$ (Fig. \ref{fig:first_si_factorization_cartoon}).
		
		($\Rightarrow$) Suppose $\sigma \in \FP(W,\theta)$, so that 
		$$
		\sgn s_i^\sigma =
		\begin{cases}
			\idx \sigma & \text{if } i \in \sigma, \\
			-\idx \sigma & \text{if } i \notin \sigma. \\
		\end{cases}
		$$
		By Lemma \ref{lemma:full-factorization}, $\sgn s_i^{\sigma} = \prod_{i \in I} \sgn s_j^{\sigma_i}$ (Fig. \ref{fig:first_si_factorization_cartoon}). But by Theorem \ref{thm:low-rank-gluing-menu}, $\sigma_i \in \FP(W_{\tau_i},\theta)$ for every $i \in I$; and by the $\sgn$ formula of Lemma \ref{lemma:full-factorization},  if we denote by $\S \od \{a\in I~|~\sigma_a \in S_a\}$, and fix $j \in \sigma$, we get that for $j \in \sigma_i$: 
		\begin{equation}\label{eq:sgn-s_j}
			\sgn s_j^\sigma = \idx(\sigma_i) \prod_{a \in \S\setminus\{i\}} -\idx(\sigma_a)  \prod_{b \in \S^c\setminus\{i\}} \idx(\sigma_b) = (-1)^{|\S \setminus \{i\}|}\prod_{\ell\in I} \idx(\sigma_\ell),
		\end{equation}
		\begin{figure}[H]
			\centering
			\includegraphics[width=\textwidth]{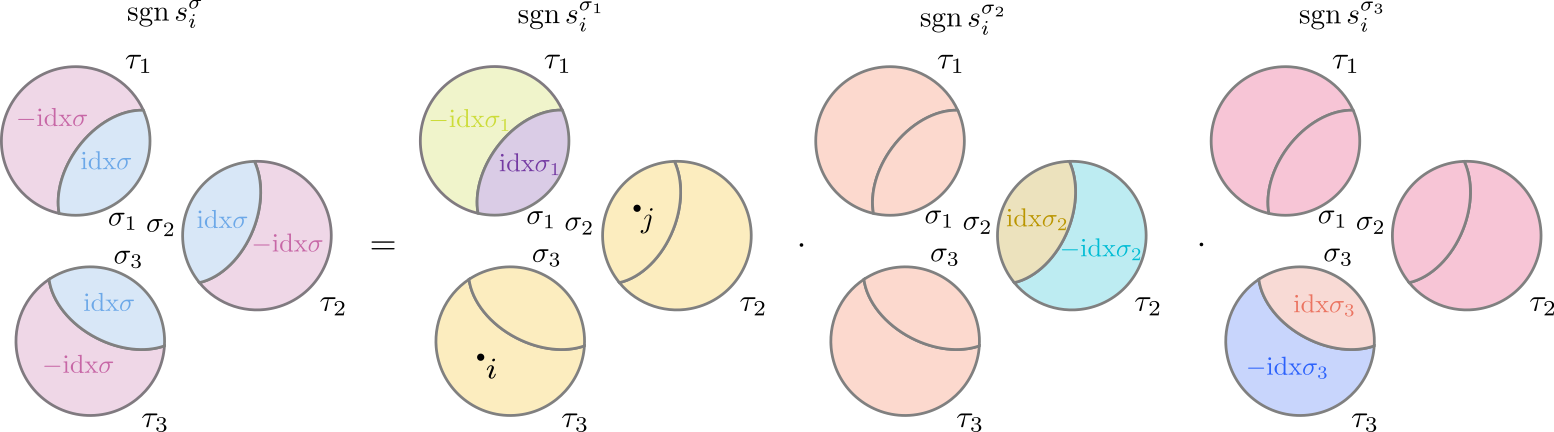}
			\caption{Cartoon showing $\sgn s_i^{\sigma} = \prod_{i \in I} \sgn s_j^{\sigma_i}$ for three components. Note that $\sgn s_j^{\sigma_i}$ is constant for any $j \in [n]\setminus\tau_i$, for all $i \in [N]$. 
				On the left hand side, we have the index equalities because $\sigma \in \FP(W,\theta)$. On the right hand side, we have the index equalities because $\sigma_i \in \FP(W_{\tau_i},\theta)$.}
			\label{fig:first_si_factorization_cartoon}
		\end{figure}
		
		Now, observe that if $\sigma$ contained a mix of $\sigma_a \in S_a$ and $\sigma_b \in D_b$, then there would be $i, j \in \sigma$ such that $i \in \sigma_a$ for some $a\in \S$, while $j \in \sigma_b$ for some $b \notin \S$ and thus $\sgn s_i^{\sigma} = (-1)^{|S|-1} \prod_{\ell \in I} \idx \sigma_\ell$ and  $\sgn s_j^{\sigma} = (-1)^{|S|} \prod_{\ell \in I} \idx \sigma_\ell = -\sgn s_i^{\sigma}$. But this contradicts the fact that $\sigma \in \FP(W,\theta)$ by Theorem \ref{thm:sgn-conditions} (since we assumed $i,j \in \sigma$). Thus, we must have either $\sigma_i \in S_{\tau_i}$ for all $i \in I$, as in (a), or $\sigma_i \in D_{\tau_i}$ for all $i \in I$ as in (b).
		
		Now to see that when $\sigma_i \in D_{\tau_i}$ for all $i \in I$, we must have $I = [N]$, suppose to the contrary that $I \subsetneq [N]$ so that there is some $m \in [N]$ such that $\tau_m \cap \sigma = \emptyset$ (Fig. \ref{fig:second_si_factorization_cartoon}). Then, for $k\in\tau_m$ (so $k \notin \sigma$), we have $\sgn s_k^{\sigma_\ell} = \idx(\sigma_\ell)$ for all $\ell \in I$, by Lemma~\ref{lemma:full-factorization}, since $\sigma_\ell \in D_{\tau_\ell}$.  Thus
		$\sgn s_k^\sigma = \prod_{\ell \in I} \idx(\sigma_\ell)$.
		\begin{figure}[H]
			\centering
			\includegraphics[width=\textwidth]{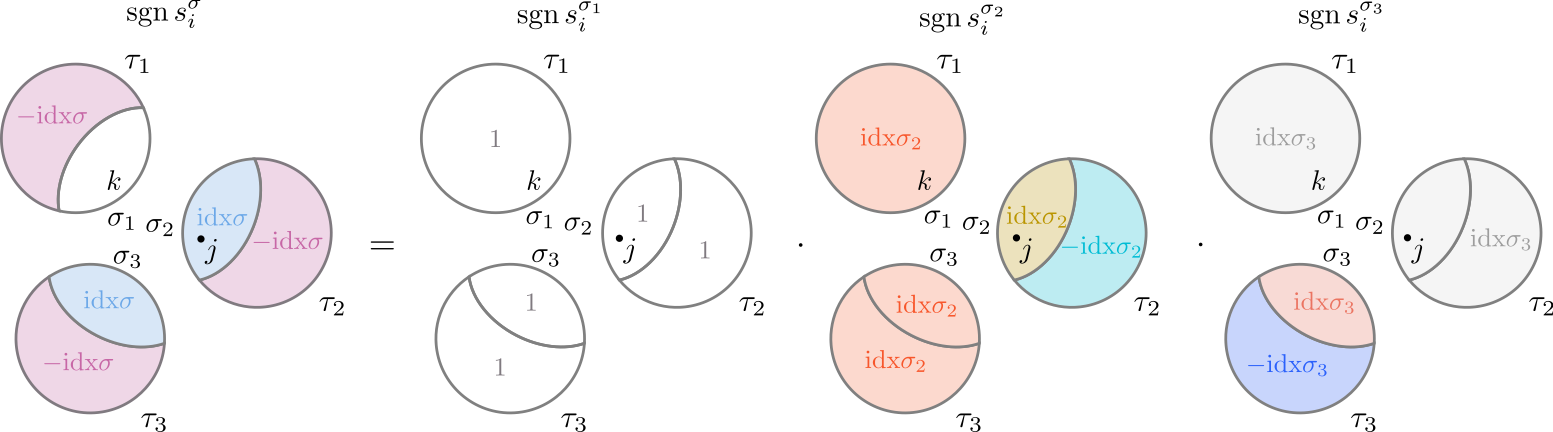}
			\caption{Cartoon showing the assumption that $\sigma_1 = \emptyset$ ($m=1$). This implies $\sgn s_i^{\sigma_1} = 1$. On the right hand side, we have the index equalities because, as in Figure \ref{fig:first_si_factorization_cartoon}, $\sigma_i \in \FP(W_{\tau_i},\theta)$ but now additionally $\sgn s_k^{\sigma_\ell}=\idx \sigma_\ell$.}
			\label{fig:second_si_factorization_cartoon}
		\end{figure}
		
		On the other hand, if $j \in \sigma$, with $j \in \tau_i$, we have $$\sgn s_j^{\sigma} =(-1)^{|\S \setminus \{i\}|}\prod_{\ell\in I} \idx(\sigma_\ell) = \prod_{\ell\in I} \idx(\sigma_\ell).$$ Where the first equality follows from Equation \eqref{eq:sgn-s_j}, and the second because $\S = \emptyset$ in this case.
		
		But then $\sgn_k^\sigma = \sgn_j^\sigma$ for $k \notin \sigma$ and $j \in \sigma$, which contradicts $\sigma\in\FP(W,\theta)$ by Theorem \ref{thm:sgn-conditions}. Thus $I=[N]$ when $\sigma_\ell \in D_\ell$ for all $\ell \in I$.
		
		($\Leftarrow$) Suppose (a) holds and so $\sigma_i \in S_{\tau_i}$ for all $i \in I$. Let us check the sign conditions for $\sigma \od \bigcup_{i \in I} \sigma_i$. 
		
		For any $j \in \sigma$, there exists $i \in I$ such that $j \in \tau_i$ (Fig. \ref{fig:third_si_factorization_cartoon}).  Then by Equation \eqref{eq:sgn-s_j}, we have
		$$\sgn s_j^{\sigma} = (-1)^{|\S \setminus \{i\}|}\prod_{\ell\in I} \idx(\sigma_\ell) = (-1)^{|I|-1}\prod_{\ell\in I}\idx\sigma_\ell,$$
		since $\S = I$ in this case.
		\begin{figure}[H]
			\centering
			\includegraphics[width=\textwidth]{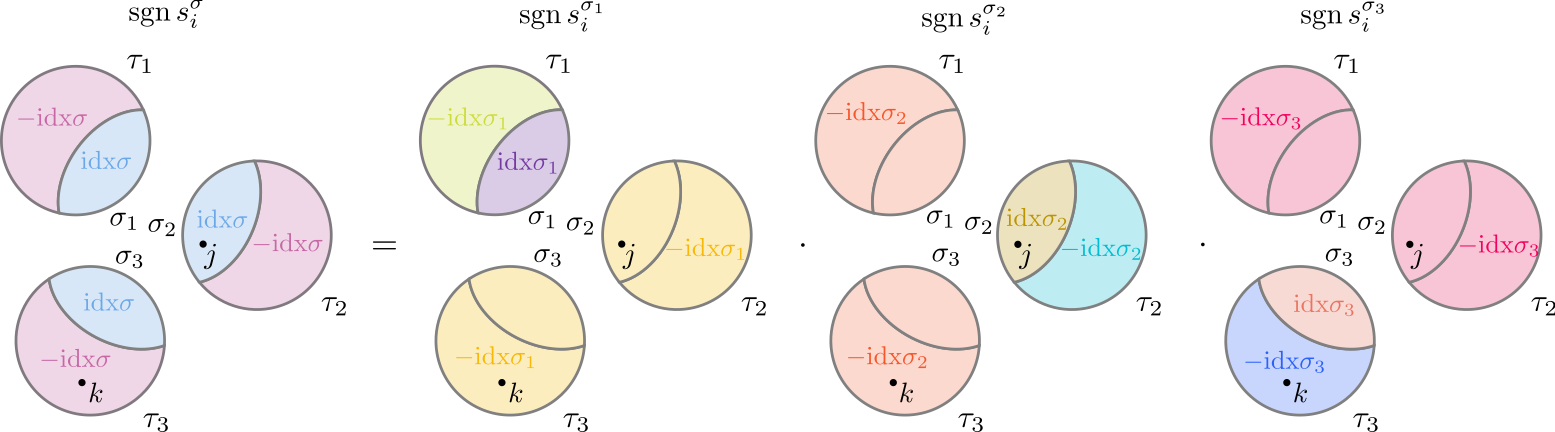}
			\caption{Checking sign conditions for $\sigma \od \bigcup_{i \in I} \sigma_i$, assuming $j \in \tau_2$. On the right hand side, we have the index equalities because $\sigma_i \in \FP(W_{\tau_i},\theta) \cup \{\emptyset\}$ for each $i \in [N]$ and  $\sgn s_k^{\sigma_\ell} = -\idx \sigma_\ell$ for all $\ell \in I$.}
			\label{fig:third_si_factorization_cartoon}
		\end{figure}
		On the other hand, for $k \notin \sigma$, we have $\sgn s_k^{\sigma_\ell} = -\idx \sigma_\ell$ for all $\ell \in I$, by Lemma~\ref{lemma:full-factorization}, since $\sigma_\ell \in S_{\tau_\ell}$.  Thus
		$$\sgn s_k^{\sigma} = \prod_{\ell\in I}(-\idx\sigma_\ell) = (-1)^{|I|}\prod_{\ell\in I}\idx\sigma_\ell = -\sgn s_j^\sigma.$$
		Therefore $\sigma \in \FP(W,\theta)$ by Theorem \ref{thm:sgn-conditions} (sign conditions).
		
		Next, suppose (b) holds so $\sigma_\ell \in D_{\tau_\ell}$ for all $\ell \in [N]$ (so $I = [N]$). Then for any $j \in \sigma$, there is $i \in [N]$ such that $j \in \sigma_i$ and by Equation \eqref{eq:sgn-s_j}, we have
		$$\sgn s_j^{\sigma}=(-1)^{|\S \setminus \{i\}|}\prod_{\ell\in [N]} \idx(\sigma_\ell) = \prod_{\ell\in [N]} \idx(\sigma_\ell),$$
		since $\S=\emptyset$.
		
		Now, let $k \notin \sigma$, with $k \in \tau_m$. Since $\sigma_m \in \FP(W_{\tau_m},\theta)$ with $\sigma_m \neq \emptyset$ (because $I = [N]$), we have $\sgn s_k^{\sigma_m} = -\idx(\sigma_m)$ and thus
		$$\sgn s_k^{\sigma}=\sgn s_k^{\sigma_m} \hspace{-.15in}\prod_{\ell\in [N]\setminus\{m\}}  \hspace{-.15in}\sgn s_k^{\sigma_\ell} = -\idx(\sigma_m)  \hspace{-.15in}\prod_{\ell\in [N]\setminus\{m\}}  \hspace{-.15in}\idx(\sigma_\ell) = -\prod_{\ell\in [N]} \idx(\sigma_\ell) = -\sgn s_j^\sigma.$$
		Thus sign conditions are satisfied, and so $\sigma \in \FP(W,\theta)$.
	\end{proof}
	
	These results will find a more intuitive application in the realm of graph-based networks, where there exists clear survival rules for fixed points \cite{curto2019a,curto2023,curto2025a}. In particular because, for gCTLNs, these fixed point decompositions will translate into insights on the attractors of the network.
	
	\subsection{Application to  gCTLNs}\label{sec:gCTLN-results}
	
	Recall that a for gCTLNs its connectivity matrix is prescribed by a directed graph $G$ and neuron-specific parameters $\varepsilon_j>0$ and $\delta_j>0$:	
	\begin{equation*}
		W_{ij} = \begin{cases}
			0 & \text{if } i = j, \\
			-1 + \varepsilon_j & \text{if } j \rightarrow i \text{ in } G, \\
			-1 - \delta_j & \text{if } j \not\rightarrow i \text{ in } G.
		\end{cases}
	\end{equation*}
	
	Definition \ref{def:simply-embedded-component} easily translates to graph-based networks: a given component $\tau$ is simply-embedded if any node outside of $\tau$ projects identically to every node within $\tau$ (Fig.~\ref{fig:graph-based-se-vs-strongly-se}A). Note that different nodes outside of $\tau$ can have their own constant projection into $\tau$. We can similarly translate low-rank and rank-1 gluings. 	  Low-rank gluings constrain all incoming weights to a component from any other single node to be identical (Fig. \ref{fig:graph-based-se-vs-strongly-se}B), while rank-1 gluings require a much stricter, symmetric condition: all nodes in a component $\tau_i$ must project out identically to all nodes in any other component $\tau_j$ (Fig. \ref{fig:graph-based-se-vs-strongly-se}C).	
	\begin{figure}[!h]
		\begin{center}
			\includegraphics[width=0.75\textwidth]{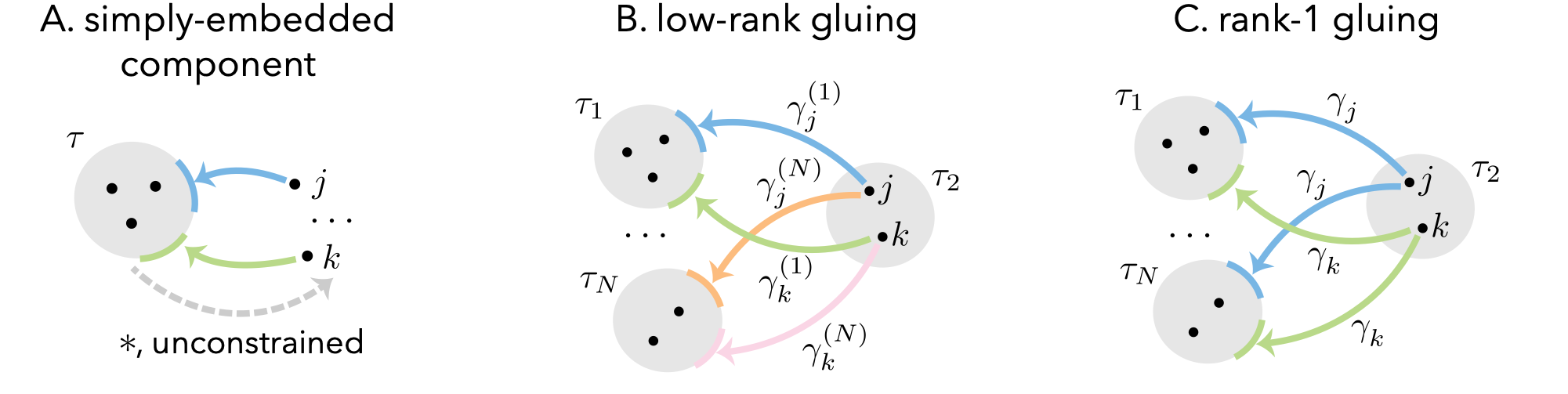}
		\end{center}
		\caption{\textbf{Low-rank gluings versus rank-1 gluings in graph-based networks.} 
			(A) A simply-embedded component $\tau$.
			(B) A low-rank gluing: every node outside of $\tau_i$ treats all of the nodes in $\tau_i$ identically, for all $i \in [N]$.  
			(C) A rank-1 gluing: every node in $\tau_k$ treats all other components $\tau_i$ identically.
		}
		\label{fig:graph-based-se-vs-strongly-se}
	\end{figure}
	
	Because gCTLNs are a subset of the TLNs considered in the previous sections, theorems apply directly. For example, Theorem \ref{thm:low-rank-gluing-menu} indicates that any low-rank gluing defined on the components in Figure \ref{fig:graph-based-menu-thm-example}A is restricted to the $(1+1) * (3+1) * (1+1) - 1 = 15$ fixed point supports listed there. This is a dramatic reduction from the $2^7 - 1 = 127$ supports possible in an unrestricted 7-node network. The specific intercomponent connectivity further restrict these 15 choices, as some of the component fixed points will die from having outgoing edges (as it was also observed in Fig. \ref{fig:menu-thm-low-rank-gluings-example}). Figure \ref{fig:graph-based-menu-thm-example}B-E shows some of the possibilities.
	\begin{figure}[!h]
		\begin{center}
			\includegraphics[width=\textwidth]{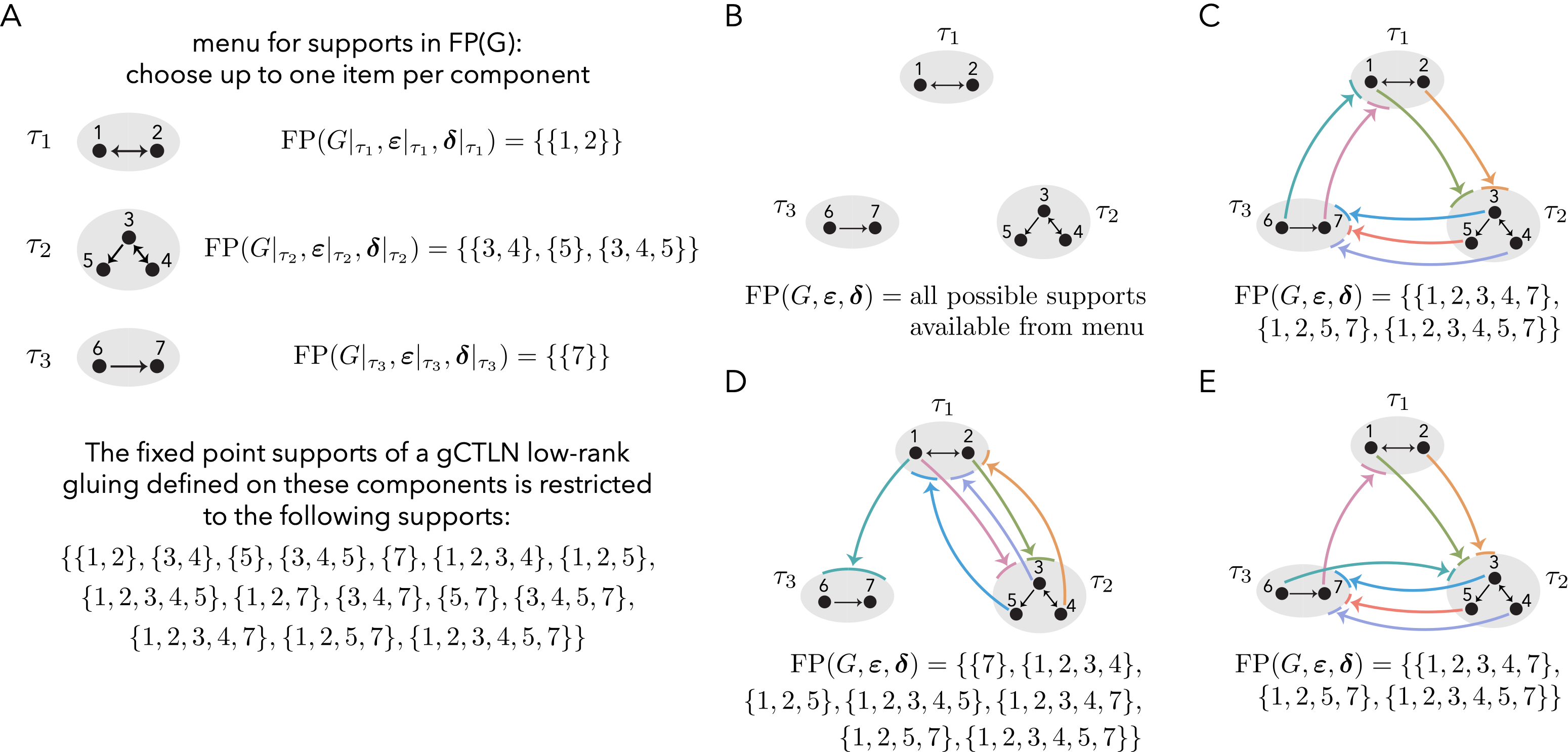}
		\end{center}
		\caption{\textbf{A gCTLN illustration of Theorem \ref{thm:low-rank-gluing-menu}.} 
			Four graphs defined in the same components, with different low-rank gluings. 
			(A) Unions of local fixed points form the set of possible global fixed points.
			(B - E) Examples of low-rank gluings, along with the resulting global fixed point supports.
			Adapted from \cite{parmelee2022a}.
		}
		\label{fig:graph-based-menu-thm-example}
	\end{figure}
	
	\subsubsection{Cyclic unions and proof of Theorem \ref{thm:generalized-cyclic-union}}\label{sec:cyclic-unions}
	
	Recall the cyclic union construction from Figure \ref{fig:CTLN_results_summary}B. Formally, as defined in \cite{curto2019a}:
	
	\begin{definition}
		Let $\{\tau_1, \ldots, \tau_N\}$ be a partition of the index set of a directed graph $G$. We say that $G$ is a \textit{cyclic union} of the induced subgraphs $G|_{\tau_1}, \ldots, G|_{\tau_N}$ if it contains all edges from $\tau_i$ to $\tau_{i+1}$ for $i = 1, \ldots, N-1$, and all edges from $\tau_N$ to $\tau_1$, with no other edges between distinct components $\tau_i, \tau_j$.
	\end{definition}
	
	Cyclic unions are a useful architecture for generating sequential attractors, so ubiquitous in neuroscience. For example, modeling the rhythmic, sequential limb activations found in animal gaits can be naturally done using a sequential attractor. In prior work, we used CTLN cyclic unions to model a quadruped's bound gait \cite{londonoalvarez2026}, where the front limbs synchronously activate, followed by the back limbs (Fig. \ref{fig:bound_gait}A).  Presumably, the stronger muscles in the back limbs should be driven by a stronger neural signal (i.e., a higher firing rate) than that driving front limbs. However, this cannot be achieved with CTLNs, since the binary inhibition parameters force symmetry in the firing rates.
	
	We can easily break this symmetry with gCTLNs, as illustrated in Figure \ref{fig:bound_gait}. There, both the CTLN and gCTLN are built on the same defining graph (Fig. \ref{fig:bound_gait}B) and, as Theorem \ref{thm:generalized-cyclic-union} shows, they both have the same set of fixed point supports (Fig. \ref{fig:bound_gait}C). However, the resulting CTLN attractor (Fig. \ref{fig:bound_gait}D) shows symmetric firing rates for the front and back limbs, while the gCTLN attractor (Fig. \ref{fig:bound_gait}E) can be easily tuned to produce a stronger activation for the back limbs.
	\begin{figure}[!h]
		\begin{center}
			\includegraphics[width=0.95\textwidth]{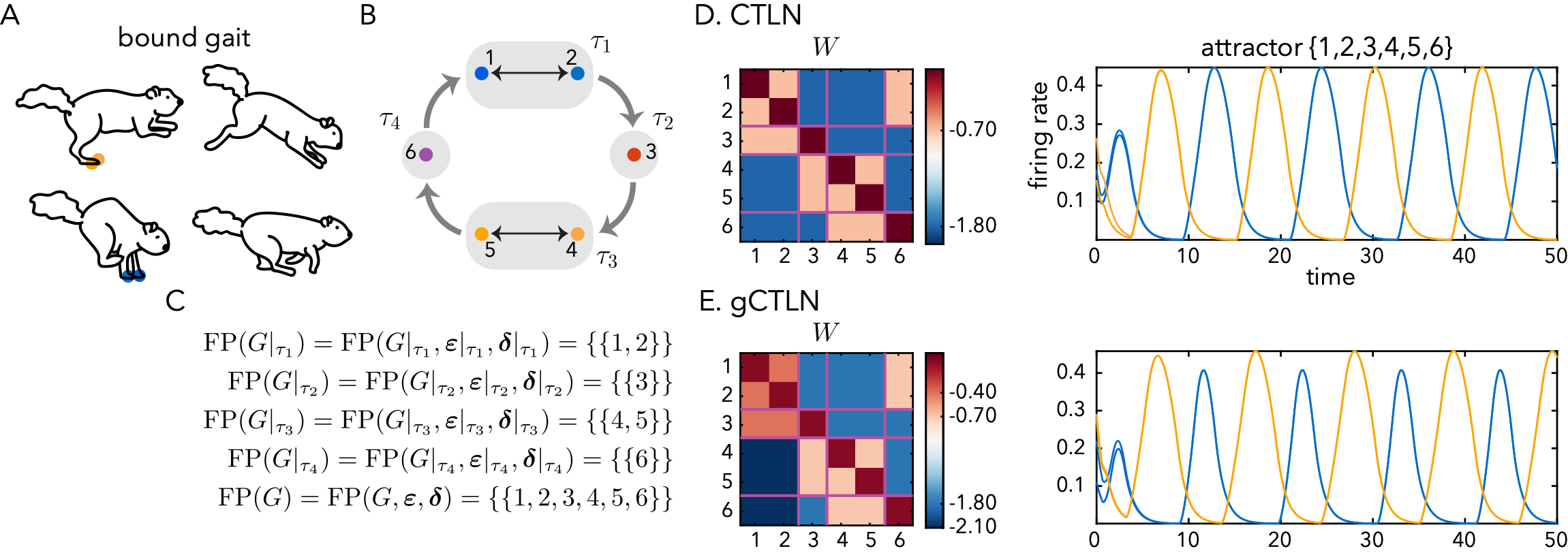}
		\end{center}
		\caption{\textbf{Bound gait as a CTLN and a gCTLN.}  
			(A) The bound gait. Drawing adapted from \cite{vanderweele2001}.
			(B) A cyclic union designed to mimic the limb activations of the bound gait. $\tau_1$ (front limbs) and $\tau_3$ (back limbs) contain the readout nodes.
			(C) Fixed point supports for the network in B, identical for both the CTLN (panel D) and gCTLN (panel E).
			(D-E) Connectivity matrix for the CTLN and gCTLN, respectively, and  the attractor corresponding to the $\{1,2,3,4,5,6\}$ support. Only readout nodes are shown. Adapted from \cite{londonoalvarez2026}.
		}
		\label{fig:bound_gait}
	\end{figure}
	
	This simple example highlights the interest in generalizing these structures to allow for more flexible dynamics. We now turn to proving Theorem \ref{thm:generalized-cyclic-union}, which provides the foundation for the network in Figure \ref{fig:bound_gait}. 
	
	It can be easily seen that a gCTLN defined by a graph that is a cyclic union will constitute a low-rank gluing of its components (it is not a rank-1 gluing, as a given node will not project identically to all components). Thus, to prove Theorem \ref{thm:generalized-cyclic-union}, we can use Theorem \ref{thm:low-rank-gluing-menu}. In addition, we will need several domination-related lemmas to determine survival rules for component fixed points in the cyclic union, which we present next.
	
	\paragraph{New domination-based results.}\label{sec:domination-results} The proof of Theorem \ref{thm:generalized-cyclic-union} will proceed by induction on the number of nodes $n$, for a fixed number of components $N$.  The lemmas in this section are required to establish the base case where $n=N$, and to handle the inductive step. For the base case $G$ is an $n$-cycle and, as we show next, the only fixed point is the one with full support, $\FP(G, \bm{\varepsilon}, \bm{\delta}) = \{[n]\}$. The proof of this fact requires several small steps (lemmas), of interest on their own.
	
	First, we need to understand what happens in an ``open'' cyclic union: a composite linear chain.
	\begin{lemma}\label{lemma:reduced-linear-chain}
		Let $G$ be a \emph{composite linear chain}, that is, a graph with a partition $\tau_1,\dots\tau_N$ of its $n$ nodes, such that there is edges from every node in $\tau_i$ to every node in $\tau_{i+1}$. Then $\FP(G, \bm{\varepsilon}, \bm{\delta}) = \FP(G|_{\tau_N}, \bm{\varepsilon}_{\tau_N}, \bm{\delta}_{\tau_N})$.
	\end{lemma}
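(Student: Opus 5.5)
The plan is to repeatedly remove dominated nodes using Theorem \ref{thm:removal-of-dominated-nodes}. I would peel the chain from its source end $\tau_1$ until only $\tau_N$ remains. I read the definition of a composite linear chain as saying there are no edges between distinct components other than the forward edges $\tau_i \to \tau_{i+1}$, and I assume every $\tau_i$ is nonempty. I would argue by induction on the number of nodes in $\tau_1 \cup \dots \cup \tau_{N-1}$. If that number is $0$, then $N=1$ and there is nothing to prove.

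\textbf{Key step: a node of $\tau_1$ is dominated.} Suppose $N \geq 2$ and let $j \in \tau_1$. Pick any $k \in \tau_2$; this is possible since $\tau_2 \neq \emptyset$. I check the two conditions of Definition \ref{def:domination}.
\begin{enumerate}
\item Every in-neighbor $i$ of $j$ lies in $\tau_1$, because no component feeds into $\tau_1$. By the chain structure every node of $\tau_1$ projects to every node of $\tau_2$. Hence $i \to j$ implies $i \to k$ for all $i \neq j,k$.
\item We have $j \to k$, since $\tau_1$ projects fully to $\tau_2$. We have $k \not\to j$, since there are no backward edges.
\end{enumerate}
So $k$ graphically dominates $j$. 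By Theorem \ref{thm:removal-of-dominated-nodes},
$$\FP(G,\bm{\varepsilon},\bm{\delta}) = \FP(G|_{[n]\setminus j},\bm{\varepsilon}_{[n]\setminus j},\bm{\delta}_{[n]\setminus j}),$$
and this holds for any choice of gCTLN parameters.

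\textbf{Inductive step.} The graph $G|_{[n]\setminus j}$ is again a composite linear chain, with partition $\tau_1\setminus\{j\},\tau_2,\dots,\tau_N$. If $\tau_1 = \{j\}$, I drop the now-empty first part and get the chain $\tau_2,\dots,\tau_N$ with $N-1$ components. Internal edges and all remaining inter-component edges are unchanged, so the chain structure is preserved. The restricted parameters are exactly the original parameters on the surviving nodes. The nodes outside the last component have decreased by one while $\tau_N$ is unchanged, so the induction hypothesis applies and yields
$$\FP(G,\bm{\varepsilon},\bm{\delta}) = \FP(G|_{\tau_N},\bm{\varepsilon}_{\tau_N},\bm{\delta}_{\tau_N}).$$

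\textbf{Main obstacle.} The one delicate point is choosing which end to peel from. Nodes in later components, for example $\tau_{N-1}$, generally cannot be dominated by nodes of $\tau_N$. The reason is that their inputs from $\tau_{N-2}$ do not reach $\tau_N$, so condition (a) fails. Only the source component has all of its inputs internal, and this is exactly what makes domination work. The other thing to check is that each removal leaves a genuine composite linear chain, so that the induction goes through. Domination being purely graph-theoretic, the argument does not depend on the particular $\bm{\varepsilon},\bm{\delta}$.
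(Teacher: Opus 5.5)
Your proposal is correct and follows the paper's proof. The paper also shows that each node of $\tau_1$ is dominated by a node of $\tau_2$, removes it via Theorem~\ref{thm:removal-of-dominated-nodes}, and repeats until only $\tau_N$ remains. Your explicit check of the two domination conditions, your induction, and your stated reading that the chain has no inter-component edges other than $\tau_i \to \tau_{i+1}$ (which the lemma needs) make precise what the paper leaves implicit.
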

	\begin{proof}
		By the structure of the graph, all nodes in $\tau_1$ are dominated by some node in $\tau_2$, and thus are removable. Continuing this process of removals results in $\FP(G, \bm{\varepsilon}, \bm{\delta}) = \FP(G|_{\tau_N}, \bm{\varepsilon}_{\tau_N}, \bm{\delta}_{\tau_N})$.
	\end{proof}

	This lemma shows that the fixed points of a composite linear chain reduce to the last component. This can be seen in the example of Figure \ref{fig:composite_linear_chain}, where the activity is initialized in the first component, and then it trickles down through the chain to eventually stabilize in the last component, which happens to be a clique union of a single node and a 3-cycle. Thus, the attractor corresponding to $\{7,8,9,10\} \in \FP(G|_{\tau_4}, \bm{\varepsilon}_{\tau_4}, \bm{\delta}_{\tau_4})$ it's a fusion of the 7-8-9 repeating sequence, with the more stable activity of node 10.
	\begin{figure}[!h]
		\begin{center}
			\includegraphics[width=\textwidth]{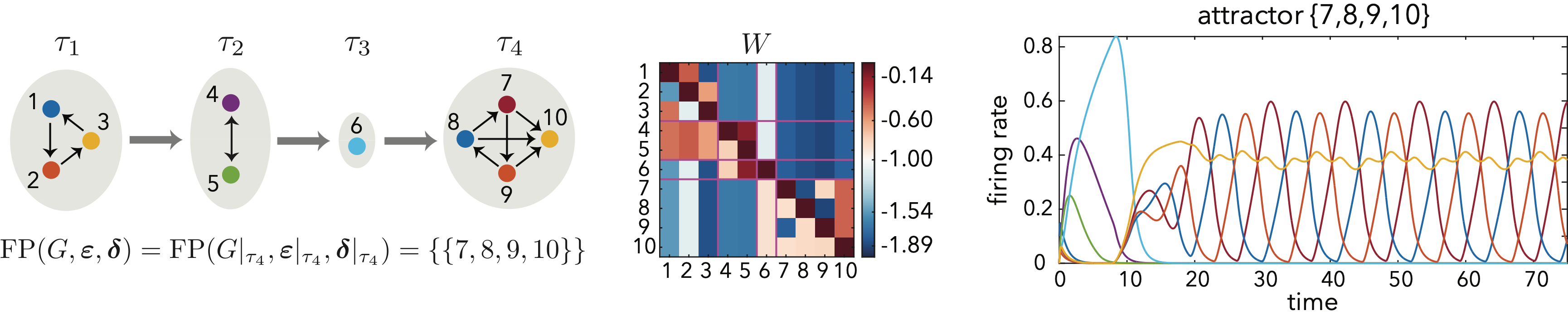}
		\end{center}
		\caption{\textbf{Composite linear chain.} A graph that is a composite linear chain and the connectivity matrix defined by it as a gCTLN. The fixed point supports reduce to those of the last component $\tau_4$. When the network is initialized in the first component, the activity trickles down the chain to finally settle in the last component.
			Adapted from \cite{parmelee2022a}.
		}
		\label{fig:composite_linear_chain}
	\end{figure}
	
	Next, we prove that the only singleton fixed point supports in any gCTLN correspond to sinks in the graph.
	
	\begin{lemma}\label{lemma:sinks}
		In a gCTLN, $\{k\} \in \FP(G, \bm{\varepsilon}, \bm{\delta})$ if and only if $k$ is a sink in $G$.
	\end{lemma}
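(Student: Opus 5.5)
We verify the fixed point conditions for the singleton support $\sigma=\{k\}$ directly, using Theorem \ref{thm:sgn-conditions} (sign conditions) in the form of Corollary \ref{cor:inheritance}(d). A singleton is always permitted: in the one-node network $(W_{\{k\}},\theta)$ the fixed point is $x_k^*=\theta>0$. So $\{k\}\in\FP(G,\bm{\varepsilon},\bm{\delta})$ if and only if $\{k\}$ survives the addition of every other node $j\neq k$. We compute the relevant determinants explicitly. Take $\theta=1$, and note that $s_k^{\{k\}}=\det(\theta)=1>0$, so $\idx(\{k\})=+1$. For $j\neq k$, $s_j^{\{k\}}$ is the $2\times 2$ determinant of $I-W_{\{j,k\}}$ with column $j$ replaced by $\theta\mathbbm{1}$:
$$s_j^{\{k\}}=\det\begin{bmatrix}1 & 1\\ -W_{jk} & 1\end{bmatrix}=1+W_{jk},$$
taking the ordering $(k,j)$ for rows and columns.

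The sign conditions then require $s_j^{\{k\}}<0$ for every $j\neq k$, that is, $W_{jk}<-1$. This is equivalent to the direct check that node $j$ stays off at the fixed point: $x^*=\theta e_k$, and node $j$ receives net input $W_{jk}\theta+\theta=\theta(1+W_{jk})$, which must be negative. By the gCTLN definition, $W_{jk}=-1+\varepsilon_k>-1$ when $k\to j$, and $W_{jk}=-1-\delta_k<-1$ when $k\not\to j$, since $\varepsilon_k,\delta_k>0$. Hence $\{k\}$ survives the addition of $j$ exactly when $k\not\to j$.

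Collecting over all $j\neq k$: $\{k\}\in\FP(G,\bm{\varepsilon},\bm{\delta})$ if and only if $k$ has no outgoing edges, i.e.\ $k$ is a sink. The only step needing care is the sign bookkeeping in the $2\times2$ Cramer determinant, specifically which entry $W_{jk}$ versus $W_{kj}$ appears. The direct off-node input computation above serves as the cross-check, and it confirms that only the outgoing weight $W_{jk}$ from $k$ to $j$ matters. Nondegeneracy holds automatically here, since $1+W_{jk}=\varepsilon_k$ or $-\delta_k$, neither of which is zero.
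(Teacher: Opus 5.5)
Your proof is correct and is essentially the paper's argument: the paper does exactly your cross-check, substituting $x^*=\theta e_k$ into the fixed point equations and requiring $[\theta(1+W_{ik})]_+=0$ for all $i\neq k$, and your Cramer determinant $s_j^{\{k\}}=1+W_{jk}$ is the same quantity (the input to $j$ at that fixed point, since $\det(I-W_{\{k\}})=1$). The only difference is that the direct check yields $W_{jk}\le -1$ while the sign conditions yield $W_{jk}<-1$; these coincide here because $1+W_{jk}\in\{\varepsilon_k,-\delta_k\}$ is never zero.
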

	\begin{proof}
		By definition, $\{k\} \in \FP(G, \bm{\varepsilon}, \bm{\delta})$ if and only if there exists a fixed point $x^*$ with $x^*_k = \theta$ and $x^*_i = 0$ for all $i \neq k$. At a fixed point, $\left.\frac{dx}{dt}\right|_{x^*} = 0$. Putting those together we get:
		\begin{align*}
			0 &= \frac{dx_k}{dt} = -x^*_k + \left[\sum_{j \in [n]} W_{kj} x^*_j + \theta \right]_+ 
			= -\theta + [\theta]_+ = 0, \\
			0 &= \frac{dx_i}{dt} = -x^*_i + \left[\sum_{j \in [n]} W_{ij} x^*_j + \theta \right]_+ 
			= \left[W_{ik} \theta + \theta \right]_+, \quad \text{for } i \neq k.
		\end{align*}
		
		While the first equation always holds, the second gives the constraint:
		\[
		\left[W_{ik} \theta + \theta \right]_+ = 0 
		\quad \Leftrightarrow \quad W_{ik} \theta + \theta \leq 0 
		\quad \Leftrightarrow \quad W_{ik} \leq -1.
		\]
		
		Therefore, $\{k\} \in \FP(G, \bm{\varepsilon}, \bm{\delta})$ if and only if $W_{ik} \leq -1$ for all $i \neq k$, which means that $k$ is a sink.
	\end{proof}
	
	With these two results, we can easily prove the base case for our induction: the only fixed point of a cycle is the one with full support.
	
	\begin{lemma}\label{lemma:fp-generalized-cycle}
		Let $G$ be an $n$-cycle. Then $\FP(G, \bm{\varepsilon}, \bm{\delta}) = \{[n]\}$.
	\end{lemma}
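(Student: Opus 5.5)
Label the cycle $1\to2\to\cdots\to n\to1$. The plan is to show every proper nonempty subset $\sigma\subsetneq[n]$ fails to be a fixed point support, and then use the parity theorem (Theorem~\ref{thm:parity}) to conclude that $[n]$ itself must be one. Parity implies $\FP(G,\bm{\varepsilon},\bm{\delta})$ is nonempty (its size is odd). So once all proper subsets are excluded, $\FP(G,\bm{\varepsilon},\bm{\delta})=\{[n]\}$ follows immediately.

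To exclude a proper subset $\sigma$, I would use the inheritance corollary (Corollary~\ref{cor:inheritance}): $\sigma\in\FP(G,\bm{\varepsilon},\bm{\delta})$ requires $\sigma\in\FP$ of the restriction to every $\tau$ with $\sigma\subseteq\tau$. Since $\sigma\neq[n]$, pick a vertex $k\notin\sigma$ and set $\tau=[n]\setminus\{k\}$. The induced subgraph $G|_\tau$ is a path $k+1\to k+2\to\cdots\to k-1$ (indices mod $n$), and $\sigma\subseteq\tau$. This path is a composite linear chain whose components are all singletons, so by Lemma~\ref{lemma:reduced-linear-chain} its fixed point supports are those of its last node, namely $\FP(G|_\tau)=\{\{k-1\}\}$. (Equivalently, $k-1$ is the unique sink, so Lemma~\ref{lemma:sinks} applies, and removing dominated nodes leaves only it.)

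Hence the only candidate is $\sigma=\{k-1\}$. But in $G$ the vertex $k-1$ has the out-edge $k-1\to k$, so it is not a sink, and Lemma~\ref{lemma:sinks} gives $\{k-1\}\notin\FP(G,\bm{\varepsilon},\bm{\delta})$. Thus no proper subset is a fixed point support, and the parity argument finishes the proof.

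The step needing the most care is applying Lemma~\ref{lemma:reduced-linear-chain} to the restricted gCTLN. The domination removals must be legitimate inside $G|_\tau$, where each node $i$ is dominated by its successor $i+1$ along the path. The restricted parameters $\bm{\varepsilon}_\tau,\bm{\delta}_\tau$ are still a valid gCTLN, so Theorem~\ref{thm:removal-of-dominated-nodes} applies. The degenerate case $n=2$ works the same way: there $\tau$ is a single node, and that node trivially has $\FP=\{\tau\}$.
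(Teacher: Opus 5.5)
Your argument is correct and is essentially the paper's proof: delete a vertex $k\notin\sigma$ via inheritance, apply Lemma~\ref{lemma:reduced-linear-chain} to the resulting path to force $\sigma=\{k-1\}$, and contradict Lemma~\ref{lemma:sinks} because $k-1\to k$ in $G$. Your explicit appeal to parity to guarantee $\FP(G,\bm{\varepsilon},\bm{\delta})\neq\emptyset$, and hence that $[n]$ really is a support, fills in a step the paper leaves implicit; one small correction is that in the full path only the source node is dominated, so the removals must be done one at a time, as in the proof of Lemma~\ref{lemma:reduced-linear-chain}.
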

	\begin{proof}
		Let $\sigma \in \FP(G, \bm{\varepsilon}, \bm{\delta}) $ and suppose for a contradiction that $\sigma \neq [n]$. This means that there exists some $k \in [n] \setminus \sigma$. Thus by Corollary \ref{cor:inheritance} we get that $\sigma \in \FP(G|_{[n]\setminus \{k\}},\bm{\varepsilon}_{[n]\setminus \{k\}}, \bm{\delta}_{[n]\setminus \{k\}})$. But $G|_{[n]\setminus \{k\}}$ is a linear chain (composite linear chain with components of size one) and thus by Lemma \ref{lemma:reduced-linear-chain} its set of fixed points reduces to that of the last component before $k$, i.e., $\sigma \in \FP(G|_{[n]\setminus \{k\}},\bm{\varepsilon}_{[n]\setminus \{k\}}, \bm{\delta}_{[n]\setminus \{k\}}) = \FP(G|_{\{k-1\}},\bm{\varepsilon}_{\{k-1\}}, \bm{\delta}_{\{k-1\}})$. This means that $\sigma = \{k-1\}$, for $\sigma \in \FP(G, \bm{\varepsilon}, \bm{\delta})$. But by Lemma \ref{lemma:sinks}, a singleton can only be a fixed point of the corresponding gCTLN if it's a sink, which $\{k-1\}$ is not in the larger graph (the $n$-cycle), reaching the desired contradiction. 
	\end{proof}
	
	We only need one more domination-related lemma for the main proof of Theorem \ref{thm:generalized-cyclic-union}. This lemma shows that any fixed point of a gCTLN defined as a cyclic union must have nonempty intersection with every component.
	
	\begin{lemma}\label{lemma:non-empty-intersections}
		Let $G$ be a cyclic union of components $\tau_1, \dots, \tau_N$, and let $(G, \bm{\varepsilon}, \bm{\delta})$ be a gCTLN. Then
		$$
		\sigma \in \FP(G, \bm{\varepsilon}, \bm{\delta}) \quad \Rightarrow \quad \sigma \cap {\tau_i} \neq \emptyset \quad \forall i \in [N].
		$$
	\end{lemma}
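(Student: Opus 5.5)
We argue by contradiction. Suppose $\sigma \in \FP(G,\bm{\varepsilon},\bm{\delta})$ and $\sigma \cap \tau_m = \emptyset$ for some $m \in [N]$. Relabel the components cyclically so that $m = N$. Set $\rho = [n]\setminus\tau_N$. Since $\sigma \subseteq \rho \subseteq [n]$, Corollary \ref{cor:inheritance} (a)$\Rightarrow$(b) gives
$$\sigma \in \FP(G|_{\rho},\bm{\varepsilon}_\rho,\bm{\delta}_\rho).$$
Deleting $\tau_N$ from the cyclic union destroys the only edges closing the cycle, namely $\tau_{N-1}\to\tau_N$ and $\tau_N\to\tau_1$. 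What remains, $G|_\rho$, has all edges from $\tau_i$ to $\tau_{i+1}$ for $i=1,\dots,N-2$ and no other edges between components. This is exactly a composite linear chain on $\tau_1,\dots,\tau_{N-1}$.

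By Lemma \ref{lemma:reduced-linear-chain}, it follows that $\sigma \in \FP(G|_{\tau_{N-1}},\bm{\varepsilon}_{\tau_{N-1}},\bm{\delta}_{\tau_{N-1}})$, so $\emptyset \neq \sigma\subseteq \tau_{N-1}$. The degenerate case $N=1$ is vacuous, since then $\sigma\cap\tau_1=\sigma\neq\emptyset$. When $N=2$ the chain is just $\tau_1$, and the same conclusion $\sigma \subseteq \tau_{N-1}=\tau_1$ holds trivially.

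It remains to contradict $\sigma\in\FP(G,\bm{\varepsilon},\bm{\delta})$ using the edges $\tau_{N-1}\to\tau_N$. Pick any $k \in \tau_N$. Every node of $\sigma\subseteq\tau_{N-1}$ sends an edge to $k$, and by Corollary \ref{cor:inheritance}(d) we must have $\sigma \in \FP(G|_{\sigma\cup\{k\}})$. In the graph $G|_{\sigma\cup\{k\}}$, the node $k$ graphically dominates every $j\in\sigma$ whenever $k$ receives from all nodes of $\sigma$ and sends to none of them, and in addition every $i\in\sigma\setminus\{j\}$ with $i\to j$ also satisfies $i\to k$. Condition (a) holds because $i \to k$ for every $i\in\sigma$. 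Condition (b) holds because $j\to k$, while $k\not\to j$ is guaranteed since the only outgoing edges of $\tau_N$ go to $\tau_1$. For $N=2$ we have $\tau_{N-1}=\tau_1$, so $k\to j$ and this case needs separate care.

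Assume first $N\ge 3$. Removing dominated nodes one at a time via Theorem \ref{thm:removal-of-dominated-nodes} (domination by $k$ persists after each removal) gives $\FP(G|_{\sigma\cup\{k\}})=\FP(G|_{\{k\}})=\{\{k\}\}$. This does not contain $\sigma$, which is a contradiction.

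The main obstacle is the case $N=2$, where the cycle has edges both ways between $\tau_1$ and $\tau_2$, so no domination is available. Here I would instead use the TLN machinery directly. A 2-component cyclic union is a clique-type gluing that is in fact a rank-1 gluing: each node projects with weight $-1+\varepsilon_j$ to all nodes of the other component. We have $\sigma\in\FP(G|_{\tau_1})$, and it survives in $G$ by assumption. I would then try to show that no fixed point of $G|_{\tau_1}$ survives the addition of a node $k$ receiving from all of $\sigma$. To do this, compute $s_k^{\sigma}$ via Theorem \ref{thm:simply-embedded-si-factorization} applied to $\tau_1$, together with the fact that $k$'s row restricted to $\sigma$ is uniform with entries $-1+\varepsilon_j$. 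The claim is that $\sgn s_k^\sigma=\idx(\sigma)$, i.e.\ that $k$ is "activated" by $\sigma$. Concretely, at the fixed point $x^*$ supported on $\sigma$ we have $\sum_{j\in\sigma}(-1+\varepsilon_j)x^*_j+\theta>0$, because $x^*_j=\theta+\sum_{l\in\sigma}W_{jl}x^*_l$ and $W_{jl}\ge -1-\delta_l$. This inequality is the step I expect to need the most work. If it proves stubborn, I would fall back on a direct off-support inequality: compare $k$'s input with that of the maximal node of $\sigma$, using $W_{kj}=-1+\varepsilon_j\ge W_{ij}$ for all $j\in\sigma$ with $j\neq i$, and $W_{ii}=0$. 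This argument works for all $N$ and also gives an alternative proof of the domination step.
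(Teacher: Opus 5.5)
Your argument for $N\ge 3$ is the paper's proof. It goes by inheritance to $[n]\setminus\tau_N$, then reduces the composite linear chain to $\tau_{N-1}$, then has a node of $\tau_N$ dominate the nodes of $\sigma$. Restricting to $G|_{\sigma\cup\{k\}}$ instead of $G|_{\tau_{N-1}\cup\tau_N}$ changes nothing.

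You are also right that $N=2$ is special. There the back-edges $\tau_2\to\tau_1$ violate condition (b) of Definition~\ref{def:domination}. So the paper's justification (``nodes in $\tau_i$ dominate the nodes of $\sigma$'') does not apply in that case, and the paper passes over this. Your fallback input comparison is the right repair; it is exactly Lemma~\ref{lemma:single-node-survival}(a), which the paper proves later. Every $k\in\tau_N$ is a target of $\sigma\subseteq\tau_{N-1}$ for all $N\ge 2$, so this comparison handles every case at once. Neither the domination step nor the $s_k^\sigma$ detour is needed. The factorization of Theorem~\ref{thm:simply-embedded-si-factorization} only tells you that $s_k^\sigma$ is the same for all $k\in\tau_2$, not its sign.

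One correction concerns your first $N=2$ attempt. The inequality you cite there, $W_{jl}\ge -1-\delta_l$, points the wrong way and gives no lower bound on $k$'s input. What you need is $W_{jl}\le -1+\varepsilon_l$ for $l\ne j$, together with $W_{jj}=0$ and $x_j^*=\theta+\sum_{l\in\sigma\setminus\{j\}}W_{jl}x_l^*$. For any $j\in\sigma$ (no maximality is needed), this gives
\[
\theta+\sum_{l\in\sigma}(-1+\varepsilon_l)x_l^*\;\ge\;\Bigl(\theta+\sum_{l\in\sigma\setminus\{j\}}W_{jl}x_l^*\Bigr)+(-1+\varepsilon_j)x_j^*\;=\;\varepsilon_j x_j^*\;>\;0 .
\]
Hence $\sigma\notin\FP(G|_{\sigma\cup\{k\}},\bm{\varepsilon}_{\sigma\cup\{k\}},\bm{\delta}_{\sigma\cup\{k\}})$, which contradicts Corollary~\ref{cor:inheritance}.
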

	\begin{proof} 
		Suppose that $ \sigma \cap {\tau_i}  = \emptyset$ for some $i$. This implies that $\sigma \subseteq [n] \setminus \tau_i$ and thus by Corollary \ref{cor:inheritance}, $\sigma \in \FP(G|_{[n] \setminus \tau_i},\bm{\varepsilon}_{[n] \setminus \tau_i},\bm{\delta}_{[n] \setminus \tau_i})$. But $G|_{[n] \setminus \tau_i}$ is a linear chain and thus by Lemma \ref{lemma:reduced-linear-chain}, it follows that $\FP(G|_{[n] \setminus \tau_i},\bm{\varepsilon}_{[n] \setminus \tau_i},\bm{\delta}_{[n] \setminus \tau_i}) =  \FP(G|_{\tau_{i-1}},\bm{\varepsilon}_{\tau_{i-1}}, \bm{\delta}_{\tau_{i-1}})$. This implies $\sigma \subseteq \tau_{i-1}$. However, $$\sigma \notin \FP(G|_{\tau_{i-1}\cup\tau_{i}},\bm{\varepsilon}_{\tau_{i-1}\cup\tau_{i}},\bm{\delta}_{\tau_{i-1}\cup\tau_{i}}),$$ because nodes in $\tau_{i}$ dominate the nodes of $\sigma$. But then Corollary \ref{cor:inheritance} implies that $\sigma \notin \FP(G, \bm{\varepsilon}, \bm{\delta})$, reaching a contradiction. 
	\end{proof}
	
	\paragraph{Cyclic union proof.}\label{sec:cyclic-union-proof}
	
	We are now in position to prove Theorem \ref{thm:generalized-cyclic-union}, which we restate below for clarity:
	\begingroup
	\def\thetheorem{\ref{thm:generalized-cyclic-union}} 
	\begin{theorem}[generalized cyclic union]
		Let $G$ be a cyclic union of components $\tau_1, \dots, \tau_N$, and let $(G, \bm{\varepsilon}, \bm{\delta})$ be a gCTLN. Then
		$$ \sigma \in \FP(G, \bm{\varepsilon}, \bm{\delta}) \quad \Leftrightarrow \quad \sigma = \bigcup_{\ell=1}^{N} \sigma_\ell, \text{ with } \sigma_\ell \in \FP(G|_{\tau_\ell}, \bm{\varepsilon}_{\tau_\ell}, \bm{\delta}_{\tau_\ell}).$$
		Moreover, if $\sigma \in \FP(G, \bm{\varepsilon}, \bm{\delta})$, then
		$$ \idx(\sigma) = \prod_{i=1}^N \idx(\sigma \cap \tau_i). $$
	\end{theorem}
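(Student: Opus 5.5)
We prove both directions together with the index formula by induction on $n$, keeping the number of components $N$ fixed. The base case $n=N$ is the $n$-cycle, settled by Lemma~\ref{lemma:fp-generalized-cycle}: $\FP=\{[n]\}$, every $\sigma_\ell=\tau_\ell$ is a singleton with index $+1$, and $\idx([n])=+1$ by Theorem~\ref{thm:parity}.

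The first observation is that a gCTLN on a cyclic union is a low-rank gluing of its components. A node $j\in\tau_i$ sends weight $-1+\varepsilon_j$ to every node of $\tau_{i+1}$ and weight $-1-\delta_j$ to every node of every other component. So each off-diagonal block has the form $\mathbbm{1}_{\tau_m}\bm{\gamma}^\intercal$ with constant columns.

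\emph{Forward direction.} Theorem~\ref{thm:low-rank-gluing-menu} gives $\sigma=\bigcup_\ell\sigma_\ell$ with $\sigma_\ell\in\FP(G|_{\tau_\ell},\bm{\varepsilon}_{\tau_\ell},\bm{\delta}_{\tau_\ell})\cup\{\emptyset\}$. Lemma~\ref{lemma:non-empty-intersections} rules out the empty choices, so every $\sigma_\ell$ is a genuine local fixed point support.

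\emph{Reverse direction for $\sigma\neq[n]$.} Take $\sigma=\bigcup_\ell\sigma_\ell$ with each $\sigma_\ell$ a local support, so each $\sigma_\ell\neq\emptyset$.
\begin{itemize}
\item The restriction $G|_\sigma$ is itself a cyclic union of $G|_{\sigma_1},\dots,G|_{\sigma_N}$. It has $N$ components and fewer than $n$ nodes.
\item By Corollary~\ref{cor:inheritance}, $\sigma_\ell\in\FP(G|_{\sigma_\ell})$, since $\sigma_\ell$ is a fixed point of the larger network $G|_{\tau_\ell}$.
\item The inductive hypothesis then gives $\sigma\in\FP(G|_\sigma)$, i.e.\ $\sigma$ is permitted, with $\idx(\sigma)=\prod_\ell\idx(\sigma_\ell)$.
\item Lemma~\ref{lemma:weak-fixed-point-partition} (via Lemma~\ref{lemma:weak-reverse-menu}) upgrades this to $\sigma\in\FP(G,\bm{\varepsilon},\bm{\delta})$.
\item The index is intrinsic to $W_\sigma$, since $\idx(\sigma)=\sgn\det(I-W_\sigma)$. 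So the product formula also holds for $\sigma$ in $G$.
\end{itemize}

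\emph{The remaining case $\sigma=[n]$.} This occurs exactly when $\tau_\ell\in\FP(G|_{\tau_\ell})$ for every $\ell$, and it is the main obstacle. Induction gives nothing here, because $G|_\sigma=G$. A direct sign computation via Theorem~\ref{thm:simply-embedded-si-factorization} would require controlling $s_i^{[n]\setminus\tau_\ell}$, and this fails when some $\tau_\ell$ is a singleton. Instead I would use Theorem~\ref{thm:parity}.

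By the forward direction and the previous step, $\FP(G)$ consists of all unions $\bigcup_\ell\sigma_\ell$ with $\sigma_\ell$ a local support, except possibly $[n]$. Every proper such union lies in $\FP(G)$ with multiplicative index. Applying parity to each component, $\sum_{\sigma_\ell\in\FP(G|_{\tau_\ell})}\idx(\sigma_\ell)=1$. Expanding the product of these sums gives
\begin{equation*}
\sum_{\substack{\sigma\in\FP(G)\\ \sigma\neq[n]}}\idx(\sigma)=\prod_{\ell=1}^N\Bigl(\sum_{\sigma_\ell}\idx(\sigma_\ell)\Bigr)-\prod_{\ell=1}^N\idx(\tau_\ell)=1-\prod_{\ell=1}^N\idx(\tau_\ell).
\end{equation*}
If $[n]\notin\FP(G)$, parity would force $\prod_\ell\idx(\tau_\ell)=0$, which is impossible since each index is $\pm1$. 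Hence $[n]\in\FP(G)$. Parity then gives $\idx([n])=\prod_\ell\idx(\tau_\ell)$, which completes both the characterization and the index formula.

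One point must be checked here. When some $\tau_\ell\notin\FP(G|_{\tau_\ell})$, the set $[n]$ is not a candidate union at all, and the parity bookkeeping closes with no missing term. This is consistent with the above.
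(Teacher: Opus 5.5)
Your proposal is correct and follows essentially the same route as the paper. Both argue by induction on $n$ with the $n$-cycle as base case. For $\sigma\neq[n]$, both use the low-rank-gluing lemmas together with Lemma~\ref{lemma:non-empty-intersections}, and both settle $\sigma=[n]$ with Theorem~\ref{thm:parity}. The one difference is minor: you get $[n]\in\FP(G,\bm{\varepsilon},\bm{\delta})$ from the signed index sum (otherwise $\prod_\ell\idx(\tau_\ell)=0$), whereas the paper uses the odd cardinality of the fixed point set; your version lets a single computation give both membership and the index formula.
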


	\addtocounter{theorem}{-1}
	\endgroup
	\begin{proof}
		We proceed by complete induction on the number of nodes $n$ in the network, for a fixed number of components $N$. For the base case where $n = N$, $G$ is an $n$-cycle. By Lemma \ref{lemma:fp-generalized-cycle}, $\FP(G, \bm{\varepsilon}, \bm{\delta}) = \{[n]\}$. Because every component is a single vertex, the result trivially holds. The index formula also follows trivially from this observation.
		
		Now assume the result holds for all cyclic unions with $N$ components and $m$ vertices, where $N \leq m < n$. Because all components of a generalized cyclic union are simply-embedded in the larger network, by Lemmas \ref{lemma:weak-fixed-point-partition} and \ref{lemma:non-empty-intersections}, we get:
		\begin{align*}
			\sigma \in \FP(G, \bm{\varepsilon}, \bm{\delta}) 
			&\Leftrightarrow \sigma \in \FP(G|_\sigma, \bm{\varepsilon}_\sigma, \bm{\delta}_\sigma) \\
			&\quad \text{and } \sigma \cap \tau_i \in \FP(G|_{\tau_i}, \bm{\varepsilon}_{\tau_i}, \bm{\delta}_{\tau_i}) 
			\quad \text{for all } i \in [N].
		\end{align*}
		
		Thus, it suffices to show that if we have $\sigma \cap \tau_i \in \FP(G|_{\tau_i}, \bm{\varepsilon}_{\tau_i}, \bm{\delta}_{\tau_i})$ for all $i \in [N]$, then we also have $\sigma \in \FP(G|_\sigma, \bm{\varepsilon}_\sigma, \bm{\delta}_\sigma)$.
		
		\underline{Case I}: If $|\sigma| < n$: suppose $\sigma \cap \tau_i \in \FP(G|_{\tau_i}, \bm{\varepsilon}_{\tau_i}, \bm{\delta}_{\tau_i})$ for all $i \in [N]$. By Corollary \ref{cor:inheritance}, we obtain $\sigma \cap \tau_i \in \FP(G|_{\sigma \cap \tau_i}, \bm{\varepsilon}_{\sigma \cap \tau_i}, \bm{\delta}_{\sigma \cap \tau_i})$. Applying the inductive hypothesis to $G|_{\sigma}$ we get that $ \sigma \in \FP(G|_\sigma, \bm{\varepsilon}_\sigma, \bm{\delta}_\sigma)$, as desired. The index formula holds because $\idx(\sigma)$ is the same as what it was in the smaller graph $G|_{\sigma}$ (the index only depends on $G|_{\sigma}$), and thus it is given by the inductive hypothesis.
		
		\underline{Case II}: If $|\sigma| = n$. Then $G|_\sigma = G$ and $\sigma \cap \tau_i = \tau_i$ for all $i \in [N]$. By Lemma \ref{lemma:weak-fixed-point-partition}, $\sigma \in \FP(G, \bm{\varepsilon}, \bm{\delta}) \Rightarrow \sigma \cap \tau_i \in \FP(G|_{\tau_i}, \bm{\varepsilon}_{\tau_i}, \bm{\delta}_{\tau_i})$ for all $i \in [N]$. To show the converse direction, suppose $\tau_i \in \FP(G|_{\tau_i}, \bm{\varepsilon}_{\tau_i}, \bm{\delta}_{\tau_i})$ for all $i \in [N]$.	If $\sigma = [n] \notin \FP(G, \bm{\varepsilon}, \bm{\delta})$, then $|\FP(G, \bm{\varepsilon}, \bm{\delta})| = \prod_{i \in [N]} |\FP(G|_{\tau_i}, \bm{\varepsilon}_{\tau_i}, \bm{\delta}_{\tau_i})| - 1$,	since all the smaller elements of $\FP(G, \bm{\varepsilon}, \bm{\delta})$ are indeed given  by picking a fixed point support from each component graph $G|_{\tau_i}$.  By parity, each $|\FP(G|_{\tau_i}, \bm{\varepsilon}_{\tau_i}, \bm{\delta}_{\tau_i})|$ is odd, thus the product is odd. This implies that $|\FP(G, \bm{\varepsilon}, \bm{\delta})|$ even, which contradicts parity for $G$. Thus $\sigma \in \FP(G, \bm{\varepsilon}, \bm{\delta})$.
		
		To show that the index formula holds for $\sigma = [n]$, using Theorem \ref{thm:parity} for $\FP(G, \bm{\varepsilon}, \bm{\delta})$, we compute:
		\begin{align*}
			1 &= \sum_{\sigma \in \FP(G, \bm{\varepsilon}, \bm{\delta})} \idx(\sigma) = \sum_{\sigma \in \FP(G, \bm{\varepsilon}, \bm{\delta}) \setminus \{[n]\}} \idx(\sigma) + \idx([n]) \\ 
			&= \sum_{\sigma \in \FP(G, \bm{\varepsilon}, \bm{\delta}) \setminus \{[n]\}} \left(\prod_{i \in [N]} \idx(\sigma \cap \tau_i)\right) + \idx([n]) \\
			&= \prod_{i \in [N]} \left(\sum_{\omega \in \FP(G|_{\tau_i}, \bm{\varepsilon}_{\tau_i}, \bm{\delta}_{\tau_i})} \idx(\omega)\right) - \prod_{i \in [N]} \idx(\tau_i) + \idx([n]) \\
			&= 1 - \prod_{i \in [N]} \idx(\tau_i) + \idx([n]).
		\end{align*}
		Thus,
		$$
		\idx([n]) = \prod_{i \in [N]} \idx(\tau_i),
		$$
		as desired.
	\end{proof}
	
	Note that Theorem \ref{thm:low-rank-gluing-menu} was used implicitly in this proof through the application of Lemma \ref{lemma:weak-fixed-point-partition}.

	\subsubsection{Clique and disjoint unions and proof of Theorem \ref{thm:generalized-disjoint-clique-union}}\label{sec:clique-disjoint-union}
	
	Recall the clique and disjoint union constructions from Fig \ref{fig:CTLN_results_summary}C-D. Formally, as defined in \cite{curto2019a}:
	\begin{definition}
		Let $\{\tau_1, \ldots, \tau_N\}$ be a partition of the index set of a directed graph $G$. We say that $G$ is a:
		\begin{enumerate}
			\item \textit{clique union} of the induced subgraphs $G|_{\tau_1}, \ldots, G|_{\tau_N}$ if it contains all edges between vertices in distinct components $\tau_i, \tau_j$ for all $i, j \in [N]$.
			\item \textit{disjoint union} of the induced subgraphs $G|_{\tau_1}, \ldots, G|_{\tau_N}$ if there are no edges between distinct components $\tau_i, \tau_j$.
		\end{enumerate}
	\end{definition}
	
	A gCTLN defined by these graphs is a rank-1 gluing. A simple application of Theorem~\ref{thm:rank-1-gluings}, along with one elementary survival lemma for fixed points in gCTLNs will yield Theorem \ref{thm:generalized-disjoint-clique-union}. This survival lemma will allow us to determine which case of Theorem \ref{thm:rank-1-gluings}, (a) survivors or (b) dyers, applies to disjoint and clique unions. This lemma does not require domination, though it uses similar proof arguments. It relies directly on the gCTLN weight prescription to check the survival/death of a fixed point support when a new node $k$ is added. 
	
	\begin{lemma}\label{lemma:single-node-survival}
		Let $(G, \bm{\varepsilon}, \bm{\delta})$ be a gCTLN, and let $\sigma \in \FP(G|_\sigma, \bm{\varepsilon}_\sigma, \bm{\delta}_\sigma)$ with $\sigma \neq \emptyset$. Let $k \notin \sigma$. Then 
		\begin{enumerate}
			\item if, for all $i \in \sigma$, we have $i \rightarrow k$, then $\sigma \notin \FP(G|_{\sigma \cup \{k\}}, \bm{\varepsilon}_{\sigma \cup \{k\}}, \bm{\delta}_{\sigma \cup \{k\}})$ ($\sigma$ dies).
			\item if, for all $i \in \sigma$, we have $i \not\rightarrow k$, then $\sigma \in \FP(G|_{\sigma \cup \{k\}}, \bm{\varepsilon}_{\sigma \cup \{k\}}, \bm{\delta}_{\sigma \cup \{k\}})$ ($\sigma$ survives).
		\end{enumerate}
	\end{lemma}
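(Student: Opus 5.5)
Since $\sigma$ is permitted, there is a fixed point $x^*$ of the restricted network $(W_\sigma,\theta)$ with $x^*_i>0$ for all $i\in\sigma$. We extend it by zero to $\sigma\cup\{k\}$. By the fixed point equations, this extension is a fixed point of the subnetwork on $\sigma\cup\{k\}$ exactly when the total input to $k$ is non-positive:
$$y_k \od \sum_{i\in\sigma} W_{ki}x^*_i + \theta \le 0 .$$
Nondegeneracy (equivalently, the sign conditions of Theorem~\ref{thm:sgn-conditions}) rules out $y_k=0$. So $\sigma$ survives iff $y_k<0$ and dies iff $y_k>0$. The plan is to compute the sign of $y_k$ in each case by comparing it with the input $y_j$ to a node $j\in\sigma$, which equals $x^*_j>0$.

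The key step is to exploit the gCTLN structure. For $j\in\sigma$ the fixed point equation reads
$$x^*_j = \sum_{i\in\sigma\setminus\{j\}} W_{ji}x^*_i + \theta,$$
where each $W_{ji}$ is either $-1+\varepsilon_i$ or $-1-\delta_i$, and $W_{jj}=0$.

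In case (a), every $i\in\sigma$ satisfies $i\to k$, so $W_{ki}=-1+\varepsilon_i$ for all $i\in\sigma$. This is the largest value any entry of column $i$ can take off the diagonal. Hence $W_{ki}\ge W_{ji}$ for $i\ne j$. On the diagonal, $W_{kj}=-1+\varepsilon_j$ while $W_{jj}=0$, but we handle that term exactly. This gives
$$y_k - x^*_j = \sum_{i\in\sigma\setminus\{j\}} (W_{ki}-W_{ji})x^*_i + (-1+\varepsilon_j)x^*_j \ \ge\ (-1+\varepsilon_j)x^*_j ,$$
so $y_k \ge \varepsilon_j x^*_j>0$. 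Thus $\sigma$ dies.

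In case (b), $W_{ki}=-1-\delta_i$ is the smallest possible off-diagonal value, so $W_{ki}\le W_{ji}$ for $i\neq j$. The same computation gives $y_k \le x^*_j + (-1-\delta_j)x^*_j = -\delta_j x^*_j<0$. Hence $\sigma$ survives.

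The main obstacle is getting the comparison right, in particular the diagonal term $W_{jj}=0$ versus $W_{kj}$. It must be treated separately rather than bounded, and then it combines with $x^*_j$ on the left to give the clean $\pm\varepsilon_j$, $\delta_j$ factor. Choosing any single $j\in\sigma$ suffices, which is where $\sigma\ne\emptyset$ is used.

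As an alternative, one could phrase the argument via the sign conditions. Here $\sgn s_k^\sigma$ equals $\idx(\sigma)$ times the sign of $y_k$, since by Cramer's rule $s_k^\sigma$ is $\det(I-W_\sigma)$ times $y_k$. Either route yields the result; I would present the direct fixed-point argument.
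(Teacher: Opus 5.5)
Your proof is correct and follows essentially the same route as the paper. You compare the input $y_k^*$ to node $k$ with $y_j^*=x_j^*$ for some $j\in\sigma$, bound the off-diagonal terms columnwise using $-1-\delta_i<-1+\varepsilon_i$, and keep the $W_{kj}x_j^*$ term exact, which gives $y_k^*\ge\varepsilon_j x_j^*>0$ in case (a) and $y_k^*\le-\delta_j x_j^*<0$ in case (b). You also justify why the sign of $y_k^*$ decides survival, via uniqueness of the fixed point supported on $\sigma$ and the identity $s_k^\sigma=\det(I-W_\sigma)\,y_k^*$, a step the paper only asserts with ``it suffices to verify.''
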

	
	Anytime we have $i \rightarrow k$ for all $i \in \sigma$, we say that $k$ is a \textit{target} of $\sigma$. The essence of the proof is to compare the input received by the new node $k \notin \sigma$ with the input received by a node $j \in \sigma$. In case (i), $k$ is strongly inhibited by the active neurons in $\sigma$, ensuring it remains off. In case (ii), $k$ receives weak inhibition from $\sigma$ (at least as much input as any $j \in \sigma$). This guarantees $k$ will turn on, then ``killing'' the fixed point $\sigma$.
	
	\begin{proof}
		Define
		\[
		y_i(x) = \sum_{j=1}^n W_{ij} x_j + \theta.
		\]
		Note that $y_i$ depends on the (sub)network under consideration, as determined by the range of indices in the sum (full network in the definition above). With this notation, the equations for the dynamics of the network become
		\[
		\frac{dx_i}{dt} = -x_i + [y_i(x)]_+ \quad \text{for } i \in [n].
		\]
		Thus, at a fixed point $x^*$ we have $x_i^* = [y_i^*]_+$, where $y_i^* = y_i(x^*)$.
		
		To prove the lemma, it suffices to verify that for (i), $y_k^* < 0$ (so $k$ remains off) and for (ii) $y_k^* > 0$ (so $k$ turns on), at the fixed point $x^*$ supported on $\sigma \in \FP(G|_\sigma, \bm{\varepsilon}_\sigma, \bm{\delta}_\sigma)$. Let $j \in \sigma$ (which must be nonempty, since the empty set cannot support a fixed point).
		
		For (a),  by assumption we have that for all $i \in \sigma$, 
		\[
		W_{ki} = -1 + \varepsilon_i \geq W_{ji} = 
		\begin{cases}
			-1 + \varepsilon_i & \text{if } i \rightarrow j \text{ in } G|_\sigma, \\
			-1 - \delta_i & \text{if } i \not\rightarrow j \text{ in } G|_\sigma.
		\end{cases}
		\]
		Then
		\begin{align*}
			y_k^* &= \sum_{i \in \sigma} W_{ki} x_i^* + \theta \\
			&= \sum_{i \in \sigma \setminus \{j\}} W_{ki} x_i^* + W_{kj} x_j^* + \theta \\
			&\geq \sum_{i \in \sigma \setminus \{j\}} W_{ji} x_i^* + W_{kj} x_j^* + \theta \\
			&= \left( \sum_{i \in \sigma \setminus \{j\}} W_{ji} x_i^* + \theta \right) + W_{kj} x_j^* \\
			&= y_j^* + W_{kj} x_j^*.
		\end{align*}
		Since $j \in \sigma$ and $\sigma \in \FP(G|_\sigma, \bm{\varepsilon}_\sigma, \bm{\delta}_\sigma)$, we have $y_j^* = x_j^*$. Therefore,
		\[
		y_k^* \geq (1 + W_{kj}) x_j^* = \varepsilon_j x_j^* > 0.
		\]
		
		For (b),  the assumption gives instead that for all $i \in \sigma$,
		\[
		W_{ki} = -1 - \delta_i \leq W_{ji} = 
		\begin{cases}
			-1 + \varepsilon_i & \text{if } i \rightarrow j \text{ in } G|_\sigma, \\
			-1 - \delta_i & \text{if } i \not\rightarrow j \text{ in } G|_\sigma.
		\end{cases}
		\]
		Then
		\begin{align*}
			y_k^* &= \sum_{i \in \sigma} W_{ki} x_i^* + \theta \\
			&= \sum_{i \in \sigma \setminus \{j\}} W_{ki} x_i^* + W_{kj} x_j^* + \theta \\
			&\leq \sum_{i \in \sigma \setminus \{j\}} W_{ji} x_i^* + W_{kj} x_j^* + \theta \\
			&= \left( \sum_{i \in \sigma \setminus \{j\}} W_{ji} x_i^* + \theta \right) + W_{kj} x_j^* \\
			&= y_j^* + W_{kj} x_j^*.
		\end{align*}
		Again, since $y_j^* = x_j^*$, we have
		\[
		y_k^* \leq (1 + W_{kj}) x_j^* = -\delta_j x_j^* < 0.
		\]
	\end{proof}
	
	We can now prove Theorem \ref{thm:generalized-disjoint-clique-union}. The proof is a straightforward application of Theorem \ref{thm:rank-1-gluings} and Lemma \ref{lemma:single-node-survival}. To apply Theorem \ref{thm:rank-1-gluings} it suffices to show that for a disjoint union all component fixed points survive, and for a clique union all component fixed points die. 
	
	Indeed, for clique unions, any node $k$ in any other component $\tau_j$ ($j \neq i$) is a target of $\sigma_i$ (by the definition of a clique union). Lemma \ref{lemma:single-node-survival}(i) guarantees that every component fixed point support $\sigma_i \in \FP(G|_{\tau_i}, \bm{\varepsilon}_{\tau_i}, \bm{\delta}_{\tau_i})$ dies in the full network. Thus, all component fixed point supports die ($\sigma_i \in D_{\tau_i}$) and Theorem \ref{thm:rank-1-gluings}(a) applies, giving the result.
	
	Similarly, for disjoint unions, take any component fixed point $\sigma_i \in \FP(G|_{\tau_i}, \bm{\varepsilon}_{\tau_i}, \bm{\delta}_{\tau_i})$ and consider any node $k \notin \tau_i$. By the definition of a disjoint union, there are no edges from $\sigma_i$ to $k$ and thus Lemma \ref{lemma:single-node-survival}(ii) applies. This holds for all $k \notin \tau_i$, so $\sigma_i$ survives the addition of all nodes in all other components. Thus, every component fixed point support is a survivor ($\sigma_i \in S_{\tau_i}$) and Theorem \ref{thm:rank-1-gluings}(b) applies, giving the result.
	
	This proves the following:
	\begingroup
	\def\thetheorem{\ref{thm:generalized-disjoint-clique-union}}
	\begin{theorem}[generalized clique and disjoint union]
		Let $G$ be a graph with a partition of its vertices $\{\tau_1 \mid \cdots \mid \tau_N\}$, and let $(G, \bm{\varepsilon}, \bm{\delta})$ define a gCTLN.
		\begin{enumerate}
			\item If $G$ is a clique union of $G|_{\tau_1}, \ldots, G|_{\tau_N}$, then
			$$ \sigma \in \FP(G, \bm{\varepsilon}, \bm{\delta}) \quad \Leftrightarrow \quad \sigma = \bigcup_{\ell=1}^{N} \sigma_\ell, \text{ with } \sigma_\ell \in \FP(G|_{\tau_\ell}, \bm{\varepsilon}_{\tau_\ell}, \bm{\delta}_{\tau_\ell}).$$
			\item If $G$ is a disjoint union of $G|_{\tau_1}, \ldots, G|_{\tau_N}$, then 
			$$\sigma \in \FP(G, \bm{\varepsilon}, \bm{\delta}) \quad \Leftrightarrow \quad \sigma = \bigcup_{\ell=1}^{N} \sigma_\ell, \text{ with }  \sigma_\ell \in \FP(G|_{\tau_\ell}, \bm{\varepsilon}_{\tau_\ell}, \bm{\delta}_{\tau_\ell}) \cup \{\emptyset\}$$
		\end{enumerate}
	\end{theorem}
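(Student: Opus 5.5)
The plan is to apply Theorem \ref{thm:rank-1-gluings} after two checks: the gCTLN is a rank-1 gluing, and Lemma \ref{lemma:single-node-survival} tells us whether each component fixed point survives or dies.

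\textbf{Rank-1 structure.} In both a clique union and a disjoint union, a node $j\in\tau_i$ either sends an edge to every node outside $\tau_i$ or to none of them. Hence column $j$ of $W$ restricted to $[n]\setminus\tau_i$ is the constant $-1+\varepsilon_j$ in the clique case and $-1-\delta_j$ in the disjoint case. So $W_{([n]\setminus\tau_i),\tau_i}=\mathbbm{1}\bm{\gamma}^\intercal$ for every $i$. This is exactly the rank-1 gluing condition of Definition \ref{def:low-rank-gluings}, and it also gives the low-rank condition. The input is uniform, so Theorem \ref{thm:rank-1-gluings} applies with $W_\ell$ the gCTLN matrix of $(G|_{\tau_\ell},\bm{\varepsilon}_{\tau_\ell},\bm{\delta}_{\tau_\ell})$.

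\textbf{Survival or death.} Fix $\sigma_\ell\in\FP(G|_{\tau_\ell},\bm{\varepsilon}_{\tau_\ell},\bm{\delta}_{\tau_\ell})$. By Corollary \ref{cor:inheritance}, $\sigma_\ell\in\FP(G|_{\sigma_\ell},\dots)$, and $\sigma_\ell$ is nonempty.
\begin{enumerate}
\item Clique union. Take any $k\notin\tau_\ell$. Every $i\in\sigma_\ell$ satisfies $i\to k$, so Lemma \ref{lemma:single-node-survival}(a) shows that $\sigma_\ell$ dies in $G|_{\sigma_\ell\cup\{k\}}$. By Corollary \ref{cor:inheritance} (a)$\Leftrightarrow$(d), $\sigma_\ell\notin\FP(G,\bm{\varepsilon},\bm{\delta})$, so every component fixed point lies in $D_{\tau_\ell}$.
\item Disjoint union. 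For every $k\notin\tau_\ell$ we have $i\not\to k$, so Lemma \ref{lemma:single-node-survival}(b) shows that $\sigma_\ell$ survives in $G|_{\sigma_\ell\cup\{k\}}$. For $k\in\tau_\ell\setminus\sigma_\ell$, survival in $G|_{\sigma_\ell\cup\{k\}}$ follows from $\sigma_\ell\in\FP(G|_{\tau_\ell},\dots)$ and Corollary \ref{cor:inheritance}. Condition (d) of that corollary therefore holds, so $\sigma_\ell\in S_{\tau_\ell}$.
\end{enumerate}

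\textbf{Conclusion.} In the clique case every local fixed point dies, so case (b) of Theorem \ref{thm:rank-1-gluings} cannot occur with any nonempty $\sigma_\ell$. Only case (a) remains: a union of dying supports, exactly one per component. The proof of the theorem shows that this forces all $N$ components to be nonempty, which gives the clique formula. In the disjoint case every local fixed point survives, so case (a) cannot occur. Case (b) gives unions of surviving supports, at most one per component and not all empty, which is the disjoint formula.

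\textbf{Main obstacle.} The delicate point is keeping the labels (a) and (b) straight. The paper's prose swaps them relative to the statement of Theorem \ref{thm:rank-1-gluings}, so I would match each case against the theorem's wording rather than the prose. I would also state explicitly that the empty union is excluded, since the empty set is not a fixed point support.
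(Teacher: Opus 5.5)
Your proposal is correct and follows the paper's route. You verify that clique and disjoint unions are rank-1 gluings, then use Lemma \ref{lemma:single-node-survival} to show that every component fixed point dies (clique) or survives (disjoint), and read off the answer from Theorem \ref{thm:rank-1-gluings}. Your explicit checks fill in details the paper leaves implicit: the column structure of $W$, survival against $k \in \tau_\ell\setminus\sigma_\ell$ via Corollary \ref{cor:inheritance}, and exclusion of the empty union $\sigma=\emptyset$. Matching the cases (a)/(b) against the theorem's statement rather than the swapped prose is the right reading.
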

	\addtocounter{theorem}{-1}
	\endgroup
	
	Note that the requirement in Theorem \ref{thm:generalized-disjoint-clique-union}(a) for a \textit{full} clique union is strict, as this avoids domination relationships to exist in the graph. To illustrate this, note that in both networks of Figure \ref{fig:not_clique_not_disjoint_union}, each component fixed point dies in the full graph, for both A and B, by Lemma \ref{lemma:single-node-survival}. In panel A, it's because $4$ is a target of $\{1,2,3\}$ and $2$ is a target of $\{4,5\}$. In panel B it's because, as a clique union, all nodes have targets in the other component.
	\begin{figure}[!h]
		\centering
		\includegraphics[width=0.75\textwidth]{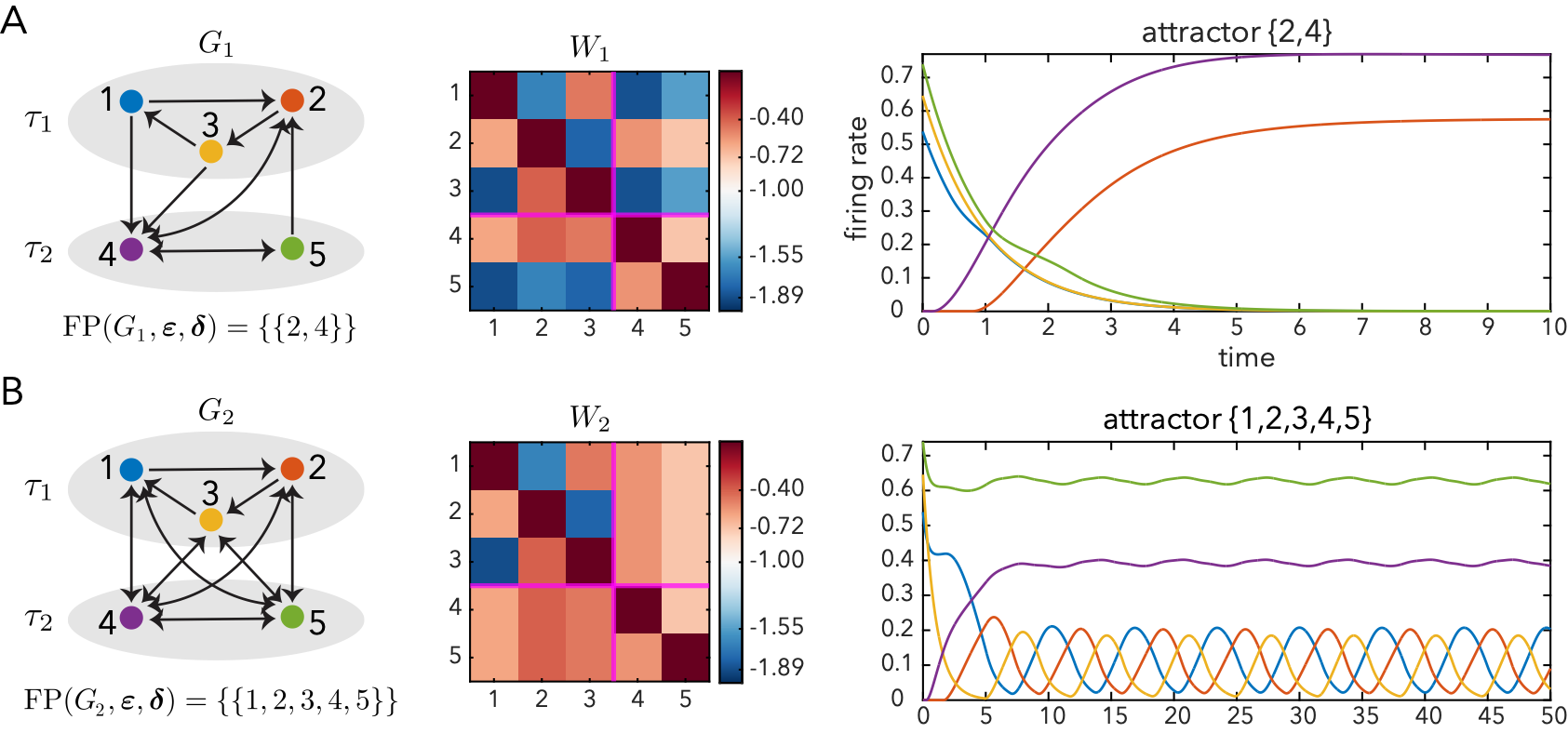}
		\caption{\textbf{Comparison of a graph that has only some targets in the other components and a clique union, defined on the same component subgraphs.} 
			(A) A graph $G_1$ where $4$ is a target of  $\{1,2,3\}$ and $2$ is a target of $\{4,5\}$, along with a gCTLN connectivity matrix and the attractor corresponding to the support $\{2,4\}$.
			(B) A graph $G_2$ that is a clique union, where all nodes have targets in the other component, along with a gCTLN connectivity matrix and the attractor corresponding to the support $\{\{1,2,3,4,5\}\}$.
		}
		\label{fig:not_clique_not_disjoint_union}
	\end{figure}
	
	However, the set of fixed point supports of the two networks are different. Because the graph of Figure \ref{fig:not_clique_not_disjoint_union}A is not a full clique union, it has domination in it (2 dominates 5, 4 dominates 1, and 4 dominates 3). Therefore, Theorem \ref{thm:removal-of-dominated-nodes} applies, and the fixed points reduce to $\FP(G_1, \bm{\varepsilon}, \bm{\delta}) = \{\{2,4\}\}$. Because $G|_{\{2,4\}}$ is a clique, we expect this support to correspond to a stable fixed point, with 2 and 4 high firing, as the rate curves of Figure \ref{fig:not_clique_not_disjoint_union}A confirm.
	
	By contrast, in the clique union (Fig. \ref{fig:not_clique_not_disjoint_union}B), none of the targets dominate any nodes because all edges between $\tau_1$ and $\tau_2$ are bidirectional. Additionally, by Theorem \ref{thm:generalized-disjoint-clique-union} we have $\FP(G_2, \bm{\varepsilon}, \bm{\delta}) = \{\{1,2,3,4,5\}\}$, and the corresponding attractor is a fusion of a sequential attractor (supported on $\{1,2,3\}$) and a stable fixed point (supported on $\{4,5\}$).

	\subsubsection{Graph rules in gCTLNs}\label{sec:CTLN-graph-rules}
	
	Graph rules refer to a set of results relating graph structure to their set of fixed point supports in CTLNs \cite{curto2019a,curto2023}. By applying Theorems \ref{thm:generalized-cyclic-union} and \ref{thm:generalized-disjoint-clique-union} to graphs where each component $\tau_i$ is just a single node, we can recover some of these rules and extend them to gCTLNs. 
	
	A cyclic and clique union on single-node components define a cycle and a clique (Fig. \ref{fig:grap_rules_applications}A). A disjoint union on single-node components gives an independent set (Fig. \ref{fig:grap_rules_applications}B, left). A directed acyclic graph (DAG) is a directed graph with no directed cycles (Fig. \ref{fig:grap_rules_applications}B, right).
		\begin{figure}[!h]
		\centering
		\includegraphics[width=0.5\textwidth]{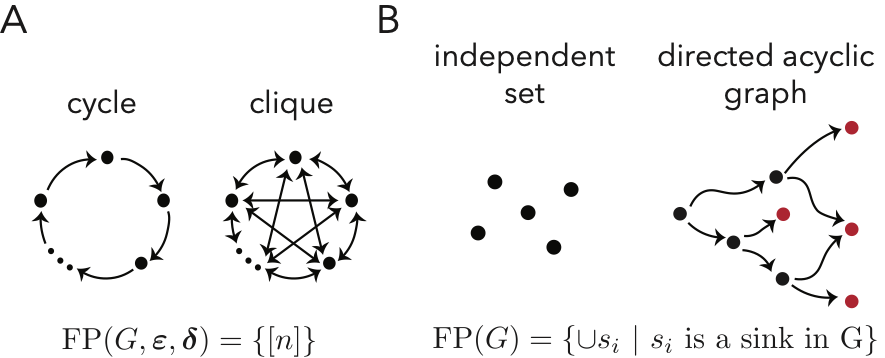}
		\caption{\textbf{gCTLNs graph rules.}
			(A) Cycles and cliques yield unique full-support fixed points.
			(B) The fixed point supports for independent sets and directed acyclic graphs are sinks and unions of sinks.
			Adapted from \cite{curto2023}.
		}
		\label{fig:grap_rules_applications}
	\end{figure}
	
	With these definitions, we obtain the following result:
	\begin{corollary}\label{cor:graph-rules}
		Let $G$ be a graph on $n$ nodes, and $\sigma \subseteq [n]$.
		\begin{enumerate}
			\item if $G$ is a cycle, then $ \FP(G, \bm{\varepsilon}, \bm{\delta}) = \{[n]\}$.
			\item If $G$ is a clique, then $ \FP(G, \bm{\varepsilon}, \bm{\delta}) = \{[n]\}$.
			\item If $G$ is an independent set, then $\FP(G, \bm{\varepsilon}, \bm{\delta}) = \left\{ \bigcup_{i \in I} \{i\} \,\middle|\, \emptyset \neq I \subseteq [n] \right\}$, the set of all $2^{n}-1$ unions of sinks.
			\item If $G$ is a directed acyclic graph with sinks $s_1, \ldots, s_{\ell}$, then $\FP(G, \bm{\varepsilon}, \bm{\delta}) = \left\{ \bigcup_{i \in I} \{s_i\} \,\middle|\, \emptyset \neq I \subseteq [\ell] \right\}$, the set of all $2^{\ell}-1$ unions of sinks.
		\end{enumerate}
	\end{corollary}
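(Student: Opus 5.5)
Parts (a)–(c) follow by specializing the union theorems to single-node components; part (d) needs a separate argument by induction and domination. In every case with $\tau_i=\{i\}$, the singleton subgraph $G|_{\{i\}}$ has $\FP(G|_{\{i\}},\varepsilon_i,\delta_i)=\{\{i\}\}$. Indeed, a single node with $W=0$ and input $\theta>0$ has the unique fixed point $x_i=\theta$; equivalently, one can apply Lemma~\ref{lemma:sinks} in the one-node graph, where $i$ is trivially a sink.

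\textbf{Parts (a)–(c).} An $n$-cycle is a cyclic union of the singletons $\{1\},\dots,\{n\}$. Theorem~\ref{thm:generalized-cyclic-union} then forces $\sigma_\ell=\{\ell\}$ for every $\ell$, so $\FP(G,\bm{\varepsilon},\bm{\delta})=\{[n]\}$. This also recovers Lemma~\ref{lemma:fp-generalized-cycle}, which could simply be cited. A clique is a clique union of singletons, and Theorem~\ref{thm:generalized-disjoint-clique-union}(a) gives $\{[n]\}$ in the same way. An independent set is a disjoint union of singletons. Theorem~\ref{thm:generalized-disjoint-clique-union}(b) lets each $\sigma_\ell$ be either $\{\ell\}$ or $\emptyset$, and excluding the empty set leaves exactly the $2^n-1$ nonempty subsets. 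Every node of an independent set is a sink, so these are precisely the unions of sinks.

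\textbf{Part (d).} I would argue by induction on $n$, using domination to reduce to an independent set. If $G$ is a DAG in which every node is a sink, then $G$ is an independent set and part (c) applies. Otherwise, pick a non-sink $j$ with $j\to k$, and choose $k$ to be a sink. Such a $k$ exists because following out-edges from $j$ in an acyclic graph must terminate at a sink. The difficulty is that graphical domination of $j$ by $k$ also requires every in-neighbor of $j$ to point to $k$, which need not hold.

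To get around this, I would look for a dominated node directly rather than fixing $k$ in advance. Take $j$ to be a source, i.e.\ a node with no in-neighbors, that is not a sink; such a node has an out-neighbor $k$, and acyclicity gives $k\not\to j$. Condition (a) of Definition~\ref{def:domination} then holds vacuously, so $k$ dominates $j$. A DAG always has a source. If every source is also a sink, then every node is isolated: a node with an in-neighbor lies at the end of a backward path from some source, and that source would then have an out-edge, contradicting that it is a sink. In that case $G$ is an independent set, and part (c) applies.

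Otherwise, Theorem~\ref{thm:removal-of-dominated-nodes} lets us delete the non-sink source $j$ without changing $\FP$. The remaining graph $G|_{[n]\setminus j}$ is still a DAG. Its sinks are the same as those of $G$: $j$ was not a sink, and removing a source can turn no other node into a sink, since only out-edges of $j$ are lost and $j$ has no in-neighbors. Induction then gives the set of all nonempty unions of sinks, as required.

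The main point to check carefully is this sink-preservation step together with the existence of a non-sink source. Everything else is a direct application of the earlier theorems.
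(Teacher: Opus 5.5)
Your proposal is correct and follows the paper's route. Parts (a)--(c) come from specializing Theorems \ref{thm:generalized-cyclic-union} and \ref{thm:generalized-disjoint-clique-union} to singleton components, and part (d) comes from removing non-sink nodes via Theorem \ref{thm:removal-of-dominated-nodes} until only the independent set of sinks remains. Your induction on non-sink sources makes explicit two things the paper's one-line argument leaves implicit: the order of removal (a non-sink need not be dominated until its predecessors are gone) and the fact that the sinks are preserved.
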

	
	To obtain the first three, it suffices to note that a single node is always a fixed point of its own subnetwork, and then to apply Theorems \ref{thm:generalized-cyclic-union} and \ref{thm:generalized-disjoint-clique-union} to single-node components. Corollary \ref{cor:graph-rules}(d) follows from first applying Theorem \ref{thm:removal-of-dominated-nodes} to remove all non-sink nodes, which reduces the graph to an independent set of its sinks, at which point Corollary \ref{cor:graph-rules}(c) applies.
	
	In CTLNs these graph rules arose as special cases of a result that we are not able to prove here: the uniform in-degree theorem.  We say that $G|_\sigma$ has uniform in-degree $d$ if every node $i \in \sigma$ has $d$ incoming edges from within $G|_\sigma$ (an independent set has uniform in-degree $d=0$, a cycle has uniform in-degree $d=1$, and an $n$-clique is uniform in-degree with $d=n-1$). For these graphs we have the following result:
	\begin{theorem}[uniform in-degree \cite{curto2019a}]\label{thm:uniform-in-deg}
		Let $(G,\varepsilon,\delta)$ be a CTLN and let $G|_{\sigma}$ be an induced subgraph of $G$ with uniform in-degree $d$. For $k \notin \sigma$, let $d_k$ denote the number of edges $i \to k$ for $i \in \sigma$. Then $\sigma \in \FP(G|_{\sigma})$, and
		\[
		\sigma \in \FP(G|_{\sigma \cup k}) \Leftrightarrow d_k \le d.
		\]
		In particular, $\sigma \in \FP(G)$ if and only if there does not exist $k \notin \sigma$ such that $d_k > d$.
	\end{theorem}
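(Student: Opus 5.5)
The plan is to compute the fixed point directly rather than through the sign conditions. The symmetry of a uniform in-degree subgraph should force the fixed point on $\sigma$ to have all coordinates equal. Once that is established, checking survival against one added node $k$ reduces to a one-line comparison of the input $k$ receives with the input a node of $\sigma$ receives. This parallels the proof of Lemma \ref{lemma:single-node-survival}, but it needs the homogeneous CTLN parameters $\varepsilon,\delta$; this is exactly why I do not expect the argument to extend to gCTLNs.

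\emph{Step 1 (permitted).} Let $s=|\sigma|$ and try $x_i^*=c$ for all $i\in\sigma$, with $x^*$ vanishing off $\sigma$. Each $i\in\sigma$ receives $d$ edges of weight $-1+\varepsilon$ and $s-1-d$ non-edges of weight $-1-\delta$ from the rest of $\sigma$. The fixed point equation on $\sigma$ then reads
$$c=\theta-c\bigl[d(1-\varepsilon)+(s-1-d)(1+\delta)\bigr].$$
It has the solution $c=\theta/\bigl(1+d(1-\varepsilon)+(s-1-d)(1+\delta)\bigr)$, which is positive since $\varepsilon<1$. Since $(I-W_\sigma)$ is invertible by nondegeneracy, this is the unique solution $x_\sigma^*=(I-W_\sigma)^{-1}\theta\mathbbm{1}_\sigma$. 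It is positive, so $\sigma\in\FP(G|_\sigma)$.

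\emph{Step 2 (survival of one added node).} Fix $k\notin\sigma$. By the characterization in the proof of Lemma \ref{lemma:single-node-survival}, $\sigma$ survives in $G|_{\sigma\cup k}$ iff $y_k^*=\sum_{i\in\sigma}W_{ki}c+\theta<0$; by nondegeneracy, $y_k^*=0$ is excluded. Node $k$ receives $d_k$ edges and $s-d_k$ non-edges from $\sigma$. Subtracting the identity $y_j^*=c$ for $j\in\sigma$ from the expression for $y_k^*$, I expect
$$y_k^*=c\bigl[(d_k-d)(\varepsilon+\delta)-\delta\bigr].$$
Hence $y_k^*<0$ iff $d_k-d<\delta/(\varepsilon+\delta)$. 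Because $0<\delta/(\varepsilon+\delta)<1$ and $d_k-d$ is an integer, this is equivalent to $d_k\le d$.

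\emph{Step 3.} The final statement follows from Corollary \ref{cor:inheritance} ((a)$\Leftrightarrow$(c)), since $\sigma$ is permitted by Step 1.

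The main obstacle is the bookkeeping in Step 2: I need to get the exact coefficient and confirm that the threshold $\delta/(\varepsilon+\delta)$ lies strictly between $0$ and $1$. That strictness is what turns the real inequality into the clean integer condition $d_k\le d$ and rules out boundary cases where $y_k^*$ could vanish.
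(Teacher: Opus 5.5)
Your proof is correct. The constant row sums of $W_\sigma$ force $x^*_\sigma=c\mathbbm{1}_\sigma$, and subtracting $y_j^*=c$ from $y_k^*$ does give $y_k^*=c\bigl[(d_k-d)(\varepsilon+\delta)-\delta\bigr]$. Integrality settles its sign because $0<\delta/(\varepsilon+\delta)<1$, so you do not even need nondegeneracy to exclude $y_k^*=0$.

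The paper does not reprove this theorem; it only cites \cite{curto2019a}. It does, however, attribute the CTLN proof to exactly this mechanism: identical weights make each node's input a function of edge counts (row sums). Your argument is therefore essentially the intended one, a homogeneous-weight sharpening of the $y_k^*$ versus $y_j^*$ comparison in Lemma \ref{lemma:single-node-survival}.
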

	\begin{figure}[!h]
		\centering
		\includegraphics[width=0.85\textwidth]{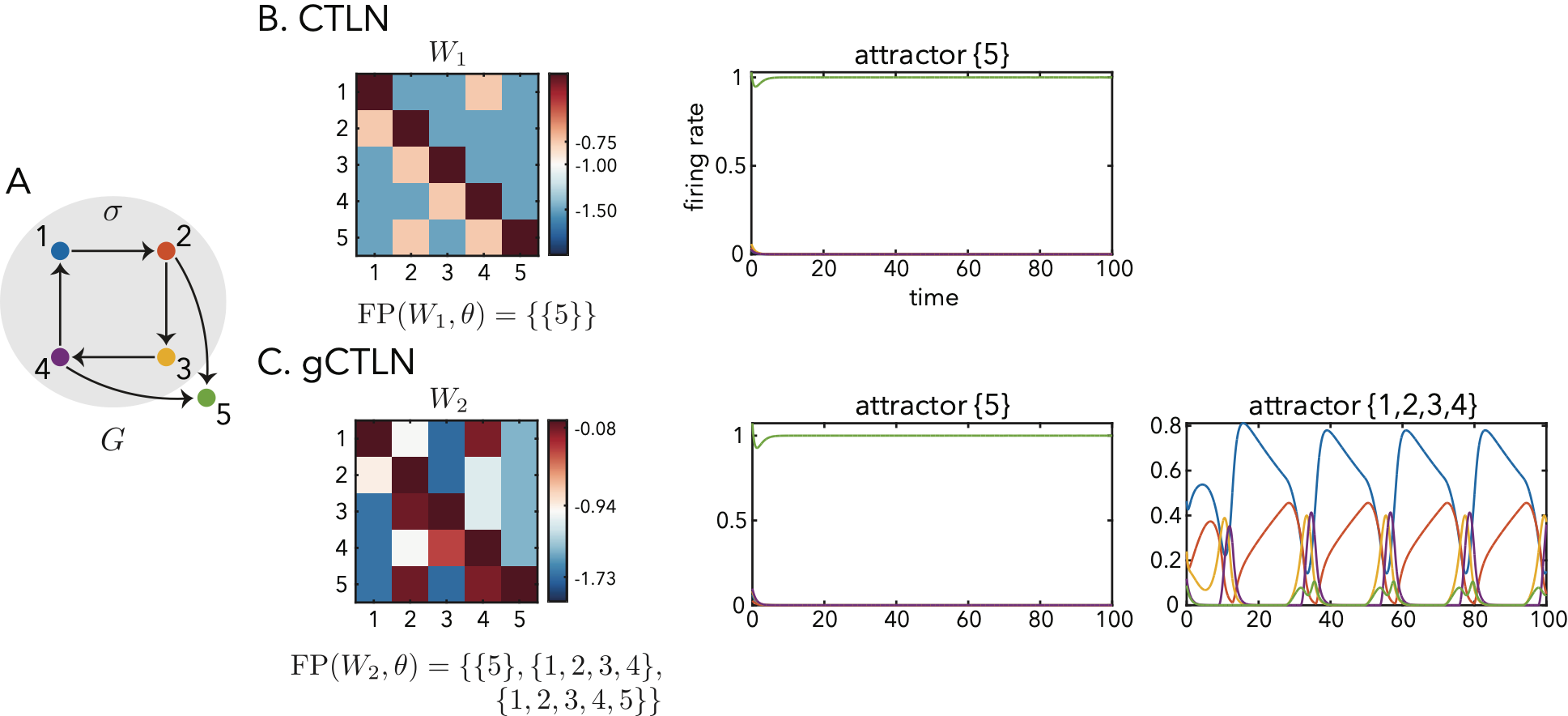}
		\caption{\textbf{A CTLN and a gCTLN defined on a graph satisfying the assumptions of Theorem \ref{thm:uniform-in-deg}.} Here, $\sigma = \{1,2,3,4\}$ and $k = 5$.
			(A) Graph $G$ defining both networks. $\sigma$ has uniform in-degree $d=1$, and $k=5$ has in-degree $d_5=2$.
			(B) CTLN connectivity matrix and its fixed points. Since $2 = d_5 > d = 1$, Theorem \ref{thm:uniform-in-deg} gives $\{1,2,3,4\} \notin \FP(W_1,b)$. Computationally, one attractor was found, corresponding to $\{5\} \in \FP(W_1,\theta)$. 
			(C) gCTLN connectivity matrix and its fixed points. Note that $\{1,2,3,4\} \in \FP(W_1,\theta)$, even though $2 = d_5 > d = 1$ in $G$. Computationally, two attractors were found, corresponding to $\{5\}, \{1,2,3,4\} \in \FP(W_1,\theta)$.
		}
		\label{fig:uniform_in_degree_example}
	\end{figure}
	
	The proof of this CTLN result relied on the fact that all synaptic weights were identical, which implied that the input to a node was determined by simple edge counts (row sums). This is no longer true in gCTLNs, as the neuron-specific parameters $\varepsilon_j$ and $\delta_j$ make the row sums too heterogeneous.
	
	This can be clearly seen in Figure \ref{fig:uniform_in_degree_example}, where the induced subgraph $G|_{\{1,2,3,4\}}$ satisfies the uniform in-degree condition to die in the CTLN ($2 = d_5 > d = 1$), but still survives in the gCTLN. This is because by choosing $\varepsilon_i,\delta_i$ appropriately, the total intra-component connectivity can be made stronger than the total output to the external node 5.
	
	While Corollary \ref{cor:graph-rules} confirms that independent sets, cliques, and cycles all yield full-support fixed points in isolation, it does not say anything precise about their survival when embedded into a larger network, which is the content of Theorem \ref{thm:uniform-in-deg}. Intuitively, we know from Lemma \ref{lemma:single-node-survival} that these motifs will survive as long as they don't turn on any targets, and will die otherwise.
	
	\subsection{Example application: mollusk locomotion}\label{sec:mollusk}
	
	In this section, we apply Theorem \ref{thm:generalized-cyclic-union} to engineer a gCTLN inspired by the model for the swimming control of the marine mollusk \textit{Clione limacina} of Varona et al. \cite{varona2002}. As in \cite{varona2002}, our network consists of six neurons, each representing a receptor controlling a specific swimming direction (up/down, right/left, front/back). To compositionally represent all possible swimming directions in 3D space, the network must support $2^3 = 8$ attractors, each resulting from a combination of three non-opposing directions (Fig. \ref{fig:clione-transitions}A).	 
	\begin{figure}[h!]
		\begin{center}
			\includegraphics[width=\textwidth]{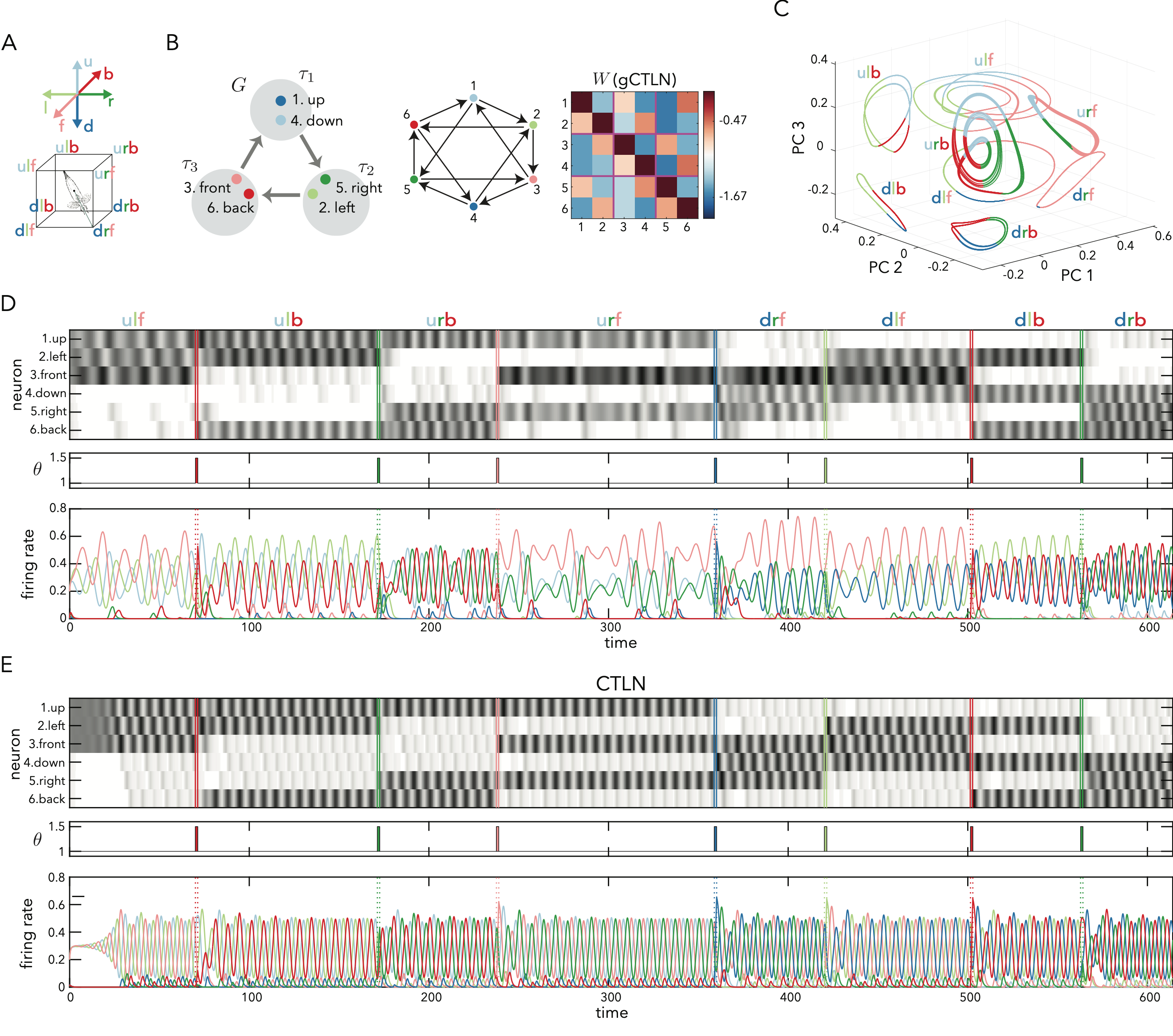}
		\end{center}
		\caption{\textbf{Clione's swimming control.}
			(A) Top: vectors for the swimming directions up/down, right/left, front/back. Bottom: all possible combinations of the swimming directions in top panel.
			(B) Left: a cyclic union of three independent sets. Center: an equivalent representation of the graph in left panel, with all edges explicitly drawn. Right: a gCTLN-defined connectivity matrix.
			(C) Principal component projection of the dynamics for the network in panel B, corresponding to the eight patterns representing swimming directions.
			(D) Dynamics for the network in panel B, with attractor-specific pulses (colored by target neuron) triggering transitions between the swimming patterns.
			(E) Same as in panel D, but for a CTLN with $\varepsilon = 0.4, \delta = 0.75$.
		}
		\label{fig:clione-transitions}
	\end{figure} 
	
	To achieve this, we construct the network as a cyclic union of three components, where each component consists of an independent set of two neurons representing opposing directions (Fig. \ref{fig:clione-transitions}B). Computing the minimal fixed point supports for such a network is straightforward: by Theorem \ref{thm:generalized-cyclic-union} and Corollary \ref{cor:graph-rules}(c), each minimal support is formed by selecting exactly one node from each component. This results in the $2^3=8$ minimal fixed point supports
	\begin{equation}\label{eq:octahedron-FPmin}
		\FP_\text{min}(G, \bm{\varepsilon}, \bm{\delta}) = \{\{1,2,3\},\{1,2,6\},\{1,3,5\},\{1,5,6\},\{2,3,4\},\{2,4,6\},\{3,4,5\},\{4,5,6\}\},
	\end{equation} which, by the number-direction correspondence of Figure \ref{fig:clione-transitions}B, yield precisely to the eight octants labeled in the cube of Figure \ref{fig:clione-transitions}A.
	
	Numerical simulations confirm that each minimal fixed point support in Equation \eqref{eq:octahedron-FPmin} corresponds to an attractor (Fig. \ref{fig:clione-transitions}C), and that attractor is easily accessible via initial conditions or external $\theta$-pulses sent to the target direction (Fig. \ref{fig:clione-transitions}D). In the specific case shown in panel D, the network is initialized in the attractor corresponding to up-left-front (ulf), then a transient pulse sent to neuron 6 (back) arrives, causing the system to converge to the up-left-\textit{back} (ulb) attractor. Transitions are possible between any two attractors, as illustrated by the fact that all attractors are accessed once. This architecture creates a ``winner-take-all'' competition within each component, forcing the dynamics to choose one direction per pair and resulting in the eight possible combinations of up/down, right/left and front/back.
	
	The success of each transition depends on the pulse's intensity, duration, and arrival time. Intuition for this sensitivity can be gained from the CTLN case. Because the conditions $\varepsilon_j = \varepsilon$ and $\delta_j = \delta$ for all $j$ enforce symmetry between node pairs, the resulting attractors (Fig. \ref{fig:clione-transitions}E) and their basins are also perfectly symmetric:	
	\begin{theorem}\label{thm:octahedral-basins}
		Let the graph $G$ of Figure \ref{fig:clione-transitions}B define a CTLN. Suppose that for this CTLN there exists an attractor $\mathcal{A}_{\sigma}$ corresponding to one of the minimal fixed point supports of Eq. \eqref{eq:octahedron-FPmin} and that there are no attractors that do not correspond to a minimal fixed point support. Then there are exactly 8 attractors, one for each minimal fixed point support, and their respective basins of attraction $\mathcal{B}_{\sigma}$ are contained in the sets
		\begin{equation}\label{eq:octahedral-basins}
			C_{ijk} = \{(x_1,x_2,x_3,x_4,x_5,x_6)  \in \RR^6 \, | \, x_i \leq x_{\pi(i)} ,\, x_j \leq x_{\pi(j)},\, x_k \leq x_{\pi(k)}\},
		\end{equation}  
		where $\pi = (14)(25)(36)$.
	\end{theorem}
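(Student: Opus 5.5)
The plan is to exploit the symmetries of the graph in Figure \ref{fig:clione-transitions}B. The components are the pairs $\tau_1=\{1,4\}$, $\tau_2=\{2,5\}$, $\tau_3=\{3,6\}$; this is consistent with Eq. \eqref{eq:octahedron-FPmin} and with $\pi=(14)(25)(36)$. Each component is an independent set, and each node of $\tau_\ell$ projects to every node of $\tau_{\ell+1}$ and to nothing else.

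\textbf{Step 1: the symmetries.} I would first check that each transposition $\pi_1=(14)$, $\pi_2=(25)$, $\pi_3=(36)$ is a graph automorphism of $G$. Swapping the two nodes of $\tau_\ell$ preserves every edge, because both nodes have identical in- and out-neighbourhoods. In a CTLN the weights depend only on the graph, since $\varepsilon,\delta$ are uniform. Hence for the permutation matrix $P$ of any element of the group $\Gamma=\langle \pi_1,\pi_2,\pi_3\rangle\cong \ZZ_2^3$ we have $PWP^{-1}=W$. It follows that the vector field $F(x)=-x+[Wx+\theta]_+$ is $\Gamma$-equivariant: $F(Px)=PF(x)$. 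This is exactly where the CTLN hypothesis is used; for a gCTLN with $\varepsilon_1\neq\varepsilon_4$ the symmetry fails.

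\textbf{Step 2: the eight attractors.} Equivariance implies that $\Gamma$ maps trajectories to trajectories, attractors to attractors, and basins to basins: $\mathcal{B}_{P\mathcal{A}}=P\mathcal{B}_{\mathcal{A}}$. Moreover, $\Gamma$ acts simply transitively on the eight minimal supports of Eq. \eqref{eq:octahedron-FPmin}, since each support picks exactly one node per pair. So from the single assumed attractor $\mathcal{A}_\sigma$ we obtain an attractor $P\mathcal{A}_\sigma$ corresponding to each minimal support $P\sigma$. For the upper bound, the hypothesis says every attractor corresponds to a minimal support. I would then show that two distinct attractors cannot correspond to the same support; the chamber argument in Step 3 gives this. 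Together these give exactly $8$ attractors.

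\textbf{Step 3: basins lie in chambers.} For each $\ell$, the hyperplane $H_\ell=\{x_i=x_{\pi(i)}\}$ with $i\in\tau_\ell$ is the fixed-point set of the involution $\pi_\ell$. The fixed-point set of a symmetry is flow-invariant. Since $F$ is Lipschitz, solutions are unique, so no trajectory can reach $H_\ell$ in finite time from off it, or leave it. Consequently each open region $\{x_i<x_{\pi(i)}\}$ and its mirror is forward- and backward-invariant, and so are the $8$ open chambers obtained by choosing one strict inequality per pair. The closed chambers are the sets $C_{ijk}$ of Eq. \eqref{eq:octahedral-basins}.

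Next I would argue that each attractor $\mathcal{A}$ lies in the closure of a single chamber.
\begin{enumerate}
\item $\mathcal{A}$ is connected and invariant. If it met two different open chambers, then by invariance and connectedness it would have to contain points of some $H_\ell$ as limit points of both sides.
\item The part on $H_\ell$ would then be a $\pi_\ell$-symmetric invariant piece, whose highest-firing neurons cannot match any minimal support, because no minimal support is $\pi_\ell$-invariant. This contradicts the correspondence hypothesis.
\end{enumerate}
If $x\in\mathcal{B}_{\mathcal{A}}$, then $x$ lies in some closed chamber, and its trajectory stays in that closed chamber, which is closed and invariant. Hence its $\omega$-limit, which lies in $\mathcal{A}$, belongs to that same chamber. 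Therefore $\mathcal{B}_{\mathcal{A}}\subseteq C_{ijk}$ for the chamber containing $\mathcal{A}$. Because $\Gamma$ permutes the chambers simply transitively and intertwines them with the attractors, distinct attractors occupy distinct chambers. This yields both the count and the labelling: $C_{ijk}$ is paired with the support whose nodes dominate their $\pi$-partners. I would match the index convention $ijk$ of Eq. \eqref{eq:octahedral-basins} to $\sigma$ at the end.

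The main obstacle is item 2 of Step 3: making rigorous that an attractor cannot straddle or touch an invariant hyperplane, given that "attractor" and "corresponds to a support" are only informal here. My first attempt would formalize "corresponds to $\sigma$" as "$x_i>x_{\pi(i)}$ on $\mathcal{A}$ for $i\in\sigma$", which makes the chamber containment immediate. Failing that, I would use the symmetric-support argument above.
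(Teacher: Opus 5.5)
Your route is the paper's: the $\ZZ_2^3$ symmetry generated by $(14)$, $(25)$, $(36)$ turns one attractor into eight. The hyperplanes $x_\ell = x_{\pi(\ell)}$ are invariant, so the chambers $C_{ijk}$ are invariant and contain the basins. Your justification of hyperplane invariance (the fixed-point set of the involution $\pi_\ell$ under an equivariant Lipschitz vector field, plus uniqueness of solutions) is the rigorous form of the paper's observation that $\ell$ and $\pi(\ell)$ receive identical input. Your $\omega$-limit argument is a sounder replacement for the paper's last sentence, which infers containment only from the fact that the basins and the chambers are both eight forward-invariant sets.

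One step does not work as you state it: the upper bound ``exactly $8$''. You claim the chamber argument shows that two distinct attractors cannot correspond to the same support, because $\Gamma$ permutes the chambers simply transitively. But simple transitivity only controls the orbit $\Gamma\mathcal{A}_\sigma$: it puts exactly one member of that orbit in each chamber. A second attractor $\mathcal{A}'$ with the same highest-firing neurons as $\mathcal{A}_\sigma$ would sit in the same open chamber, and its orbit would place a second attractor in every chamber. Nothing in the symmetry or in the invariance of the hyperplanes excludes this. So the count cannot be derived this way. It must be taken from the hypothesis, read as giving at most one attractor per minimal support; this is how the paper uses it when it says there are no other attractors ``by assumption''. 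With that reading, your Step 2 orbit argument finishes the count.

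A smaller point: for $\mathcal{B}_{\mathcal{A}} \subseteq C_{ijk}$ you need $\mathcal{A}$ inside the open chamber, not just its closure. Otherwise a point strictly on the other side of some $H_\ell$ could have its $\omega$-limit in $\mathcal{A}\cap H_\ell$. Your proposed formalization, $x_i > x_{\pi(i)}$ on $\mathcal{A}$ for $i\in\sigma$, gives exactly this. With it the rest of Step 3 goes through, and the paper's proof tacitly needs the same fact.
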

	\begin{proof}
		Note that the network is symmetric under the following permutations in $S_6$:
		\begin{align*}
			\pi_1 &= (14) \\
			\pi_2 &= (25) \\
			\pi_3 &= (36) 
		\end{align*} so that $\pi = \pi_1 \pi_2 \pi_3$. Therefore the group of symmetries of the network is $\langle \pi_1, \pi_2, \pi_3 \rangle  \cong \ZZ_2 \times \ZZ_2 \times \ZZ_2$.  As a consequence of these symmetries, the existence of an attractor $\mathcal{A}_{\sigma}$ corresponding to a minimal fixed point support $\sigma \in \FP_\text{min}(G)$ implies the existence of all other attractors corresponding to the rest of the minimal fixed point supports in $\FP_\text{min}(G)$. Since there are no other attractors by assumption, the state space must be partitioned into exactly eight symmetric basins of attraction $\mathcal{B}_{\sigma}$. 
		
		On the other hand, since $\ell$ and $\pi(\ell)$ receive input from the same neurons (for $\ell = 1,\dots,6$), it is impossible to cross the hyperplanes $x_\ell = x_{\pi(\ell)}$, as we show next. Suppose that $x_\ell(t_0) = x_{\pi(\ell)}(t_0)$ for some $t_0$. The dynamic equations for neurons $\ell$ and $\pi(\ell)$ are
		\begin{align*}
			\frac{dx_\ell}{dt} &= -x_\ell + \left[\sum_{j \neq \ell,\pi(\ell)} W_{ij}x_j + (-1-\delta)x_{\pi(\ell)} + \theta \right]_+ \\
			\frac{dx_{\pi(\ell)}}{dt} &= -x_{\pi(\ell)} + \left[\sum_{j \neq \ell,\pi(\ell)} W_{ij}x_j + (-1-\delta)x_{\ell} + \theta \right]_+.
		\end{align*} Both summation terms are the same, since $\ell$ and $\pi(\ell)$ receive input from the same neurons. Since $x_\ell(t_0) = x_{\pi(\ell)}(t_0)$ by assumption, we have that $\frac{dx_\ell}{dt} = \frac{dx_{\pi(\ell)}}{dt}$, and therefore $x_\ell(t) = x_{\pi(\ell)}(t)$ for all $t\geq t_0$. This proves that it is impossible to cross the hyperplanes  $x_\ell = x_{\pi(\ell)}$. This implies that the sets $$C_{ijk} = \{(x_1,x_2,x_3,x_4,x_5,x_6)  \in \RR^6 \, | \, x_i \leq x_{\pi(i)} ,\, x_j \leq x_{\pi(j)},\, x_k \leq x_{\pi(k)}\}$$ are forward-invariant and also partition the state space into eight sets. Since the basins of attraction $\mathcal{B}_{\sigma}$ are also forward-invariant and partition the state space into eight sets, we get the desired result.
	\end{proof}

	Thus, since the basins are divided by the hyperplanes $ x_i = x_{\pi(i)}$, an external input has to be strong or long enough, or either timed precisely when the trajectory is near this boundary, to push it across into a different basin. This introduces a natural source of variability, as several conditions must align for Clione to effectively change directions, which might partially explain the observed randomness in its swimming behavior \cite{panchin1995,varona2002}.

	While Theorem \ref{thm:octahedral-basins} relies on the strict symmetry of CTLNs, simulations show that the existence of the eight attractors and the sequential activation of each pattern are robust for the heterogeneous synapses of gCTLNs. We observed similar attractors for various random choices of gCTLN parameters, as well as for a wide range of CTLN parameters \cite{londonoalvarez2024}. This robustness could be interpreted as reflecting animal-to-animal variability, where the existence of the attractors and the sequential activation patterns are preserved across animals, while the specific firing rates may vary. However, high parameter variability can eventually cause attractors to degenerate; an investigation of these stability limits is postponed to future work.
	
	Ultimately, our model offers a complementary perspective to the work of Varona et al. \cite{varona2002}. In their model, the use of Lotka-Volterra units with non-symmetric inhibition leads to chaos via winnerless competition, which explains Clione's shifting swimming directions. In contrast, our model uses perceptron-like units, also with non-symmetric inhibition, to produce multiple coexisting limit cycles via component-wise winner-take-all dynamics. In our case, the sensitivity of basin transitions provides an alternative explanation for the observed swimming randomness. Remarkably, by engineering the graph to support the required fixed points, we arrive at the same adjacency matrix used in their model (Fig. \ref{fig:clione-transitions}B, center). Thus, while the specific dynamical mechanisms differ, the functional and structural similarity of both models suggest that inhibition and network structure, rather than the specific ODE formulations, are the primary drivers of the behavior.
	
\section{Discussion}\label{sec:discussion}
	
We have presented a formal mathematical link between modular network structure and compositionality, proving that the global fixed points of low-rank gluings are constrained to be compositions of local fixed points. For rank-1 gluings, our theory fully characterizes which combinations of local fixed points yield global ones. Although our formal results are about compositionality of fixed points, applying these rules to gCTLNs allows us to empirically transfer this compositionality to the network's attractors. For cyclic, clique, and disjoint unions, explicit survival rules for component fixed points enable a clear decomposition of fixed points, and consequently of attractors, offering a reliable recipe to produce families of networks supporting sequential, fusion, and local component-wise winner-take-all attractors.
	
Moreover, the extension of the gluing rules from CTLNs to gCTLNs demonstrates that functional compositionality is more robust than originally established in \cite{curto2019a,parmelee2022a}, suggesting that the core computational function is primarily encoded in the graph structure rather than in the precise values of the synapses. If we interpret the binary synapses of CTLNs as an idealized structural blueprint, gCTLNs would represent a noisier, more physiologically realistic instantiation. In this view, the quantitative variations we observe in individual firing rates might mirror the animal-to-animal circuit variability widely observed in nature \cite{long2008, tang2010}.
			
	\paragraph{On the low-rank assumption.}
	
	To mathematically formalize how a modular network supports compositionality, we have imposed the assumption that intercomponent connectivity is low-rank. Low-rank networks have been widely studied in computational neuroscience \cite{mastrogiuseppe2018a, schuessler2020, beiran2021a,clark2025}, often connected to the idea that the brain uses low-dimensional representations of tasks within high-dimensional state spaces. However, while our framework supposes low-rank intercomponent connectivity, the global connectivity matrix is not low-rank itself, as each diagonal block might be full rank. Yet, it naturally gives rise to a variety of low-dimensional attractor dynamics. In our case, this single, presumably full-rank matrix yields a combinatorial explosion of attractors that coexist for fixed values of the parameters, all accessible via different initial conditions. Such coexistence allows the system to easily transition between attractors via transient external pulses. This aligns with the contemporary view that the brain operates as a high-dimensional nonlinear dynamical system, where experimentally observed low-dimensional manifolds represent task-specific subspaces within a much larger state space \cite{amematsro2025,gallego2017,sussillo2014,duncker2021,perich2025,vyas2020,schmutz2026}. 

	While general low-rank networks abound in the literature, the study of modular networks with low-rank intercomponent connectivity has only recently garnered interest in neuroscience as a model for low-dimensional communication subspaces between brain regions \cite{clark2025}. Though our structural constraints are similar to those proposed by Clark and Beiran in \cite{clark2025}, there are key differences. Structurally, their components are combinations of random and low-rank matrices, whereas our components are internally unconstrained. Additionally, our intercomponent connectivity enforces a stricter low-rank condition where one of the outer-product vectors is fixed to be an all-ones vector. Functionally, while their work concerns the flow of information between regions, the focus of our work is on the compositionality of the dynamics. Exploring how these two frameworks might interact, either by analyzing information flow within our structures or by studying fixed point compositionality within theirs, remains a direction for future research.
		
	\paragraph{Dynamical systems and biology parallels.}
	
	The central role of the underlying graph structure was particularly evident in our Clione example, where similar adjacency matrices yielded qualitatively similar sequential dynamics, even with different defining systems ODEs. A similar parallel appears in related work by the same group of authors, Rabinovich et al., \cite{rabinovich2010,rabinovich2018}, who proposed generalized Lotka-Volterra models of feature binding via heteroclinic channels to produce sequential trajectories. Again, while similar to our cyclic union construction, the specific mechanisms for obtaining sequential trajectories representing distinct modalities differ: the work of Rabinovich et al. relies on sequences of unstable saddle fixed points (heteroclinic orbits), whereas our constructions yield limit cycle trajectories. Additionally, because our unconstrained components support diverse internal motifs rather than being restricted to single-neuron firing sequences, our model offers an expressive architecture for complex sequential patterns.
	
	Remarkably, this tight link between structural modularity and functional compositionality also cuts across from continuous to discrete dynamical systems. Our results parallel those of Kadelka et al. \cite{kadelka2023}, who formally explored this link in gene regulatory networks modeled as Boolean systems. In those discrete models, the structural decomposition of a network into maximal strongly connected components induces a decomposition of its dynamic structure, endowing the system with robustness and a rich dynamic repertoire. Their results, together with our work, suggest that compositionality might be general property of certain modular systems and might point to a broader category theoretic phenomenon \cite{carranza2025,curto2023}.
	
	Finally, this structural centrality is also illustrated by the fact that for every inhibition-dominated gCTLN, an equivalent excitatory-inhibitory (E-I) network can be defined on the same graph such that there is an one-to-one mapping between their fixed points. Moreover, computational results show that the dynamics of the gCTLN and its equivalent E-I network also match \cite{curto2025a}, provided inhibition is fast enough. CTLN graph-to-dynamics principles have also been shown to hold for clustered inhibition-stabilized Hawkes networks with fast inhibition \cite{lienkaemper2025a}.
	
	Taken together, these parallels show how network structure plays a fundamental role in governing network dynamics, even across frameworks as dissimilar as discrete and continuous time, and in phenomena as complex as fixed point compositionality.
		
	\paragraph{Limitations, opportunities for future work, open questions.} 
	
	Several open questions remain. First, while our theorems completely specify global fixed points, a comprehensive analysis of their stability is still pending, and formally proving that the trajectories shown here are attractors remains a non-trivial challenge \cite{bel2021}. As such, our observations regarding limit cycle compositionality remain empirical.
		
	Second, the exact boundaries of our structural constraints remain to be explored in two main directions. First, by applying perturbations to the graph $G$ (adding or deleting edges) to evaluate the minimal number of directed edges required to support the types of compositional attractors exemplified here. While this issue has been studied for CTLNs \cite{parmelee2022a}, how those results extend to gCTLNs and general TLNs is unknown. A second direction should test the sensitivity of the theoretical and computational results to synaptic noise by perturbing the requirement that a given node must project to all neurons within a target module with identical weight.
	
	Third, to fully exploit the potential of Theorem \ref{thm:rank-1-gluings}, a more extensive theory of ``survival rules'' for gCTLNs and TLNs in general is needed. Indeed, Theorem \ref{thm:rank-1-gluings} became significantly clearer when applied to gCTLNs precisely because many of their local survival rules were already known. While some of our examples demonstrate that general TLN survival rules can behave remarkably similar to those of CTLNs, we have shown, for instance, that Theorem \ref{thm:uniform-in-deg} does not hold for gCTLNs. This limitation is also illustrated in Figure \ref{fig:fixed_point_compositionality_and_attractors}B, where a CTLN defined on the same graph would instead yield $\FP(G_2) = \{ \{8\}, \{5, 6, 7\}, \{5, 6, 7, 8\} \}$.
	
	Nevertheless, our framework offers a step forward in the study of nonlinear network dynamics. In particular, even though threshold-linear networks are highly idealized as brain models, they remain widely used across computational neuroscience, and our results are a testament to their expressiveness. This is the case even under the additional simplifications imposed here, such as assuming uniform inputs, uniform timescales, and an inhibition-dominated regime. These simplifications serve to isolate the role of connectivity in enabling compositionality, providing a bare-bones scaffolding or structural blueprint on which the richer complexities of biological brains might rest.
	
	\paragraph{Acknowledgments.} The author thanks Professors Carina Curto, Katie Morrison and Chris Rose for guidance and helpful comments on the manuscript. Part of this work was supported by the National Science Foundation under Grant No. DMS-1929284 while the author was in residence at the Institute for Computational and Experimental Research in Mathematics (ICERM) in Providence, RI, during the Math + Neuroscience program. The findings and conclusions in this article do not necessarily reflect the view of the funding agencies. 
	
	\paragraph{Code availability.}	MATLAB code needed to reproduce all simulations in this manuscript is publicly available on GitHub at \url{https://github.com/juliana-londono/low-rank-gluings}.
	
	\bibliographystyle{unsrt}
	\bibliography{generalized_cyclic_unions.bib}
	\clearpage 
	
\end{document}